\newcommand{\error}{\mathsf{error}}
\newcommand{\cmark}{\ding{51}}%
\newcommand{\xmark}{\ding{53}}%
\newcommand{\verif}{\cmark}
\newcommand{\holds}{\bf safe}
\newcommand{\nope}{{\xmark}}
\newcommand{\attaque}{{attack}}
\newcommand{\varI}{\mathsf{i}}
\newcommand{\varN}{\mathsf{n}}
\newcommand{\SD}{\mathsf{SD}}
\newcommand{\Oage}{\mathsf{O}_\mathrm{age}}
\newcommand{\Ocheck}{\mathsf{O}_\mathrm{check}}
\newcommand{\vect}[1]{\overline{#1}}
\newcommand{\ukano}{\ensuremath{\mathsf{UKano}}\xspace}
\newcommand{\proverif}{\ensuremath{\mathsf{ProVerif}}\xspace}
\newcommand{\tamarin}{\ensuremath{\mathsf{Tamarin}}\xspace}
\newcommand{\Feldhofer}{\mathsf{Fh}}
\newcommand{\ie}{\emph{i.e.}\xspace}
\newcommand{\eg}{\emph{e.g.}\xspace}
\newcommand{\etc}{etc.\xspace}
\newcommand{\N}{\mathcal{N}}
\newcommand{\X}{\mathcal{X}}
\newcommand{\W}{\mathcal{W}}
\newcommand{\T}{\mathcal{T}}
\newcommand{\priv}{\mathsf{priv}}
\newcommand{\mac}{\mathsf{mac}}
\newcommand{\senc}{\mathsf{enc}}
\newcommand{\sdec}{\mathsf{dec}}
\newcommand{\h}{\mathsf{h}}
\newcommand{\pub}{\mathsf{pub}}
\newcommand{\enc}[2]{\mathsf{enc}(#1,#2)}
\newcommand{\dec}[2]{\mathsf{dec}(#1,#2)}
\newcommand{\pk}[1]{\mathsf{pk}(#1)}
\newcommand{\proj}{\mathsf{proj}}
\newcommand{\projl}[1]{\proj_1(#1)}
\newcommand{\projr}[1]{\proj_2(#1)}
\newcommand{\eq}{\mathsf{eq}}
\newcommand{\ok}{\mathsf{ok}}
\newcommand{\ffun}{\mathsf{f}}
\newcommand{\gfun}{\mathsf{g}}
\newcommand{\fneq}{\mathsf{neq}}
\newcommand{\Ch}{\mathcal{C}}
\newcommand{\Out}{\mathsf{out}}
\newcommand{\In} {\mathsf{in}}
\newcommand{\Let}{\mathsf{let}}
\newcommand{\Else}{\mathsf{else}}
\newcommand{\Then}{\mathsf{then}}
\newcommand{\new}{\mathsf{new}\,}
\newcommand{\rep}{!}
\newcommand{\rec}{\mathbin{\rotatebox[origin=c]{180}{$!$}}}
\newcommand{\fv}{\mathit{fv}}
\newcommand{\fn}{\mathit{fn}}
\newcommand{\ta}{\mathsf{ta}}
\newcommand{\annot}[1]{[#1]}
\newcommand{\restrict}[2]{#1|_{#2}}
\newcommand{\fail}{\mathsf{fail}}
\newcommand{\p}{\mathcal{P}}
\newcommand{\q}{\mathcal{Q}}
\newcommand{\refer}{\mapsto} 
\newcommand{\trace}{\mathsf{trace}}
\def\rightarrowfillstar@{\arrowfill@\relbar\relbar{\rightarrow\smash{^*}}}
\newcommand{\xrightarrowstar}[2][]{\ext@arrow
  0{13}{15}8\rightarrowfillstar@{#1}{#2}}
\newcommand{\lrstep}{\@ifstar{\xrightarrowstar}{\xrightarrow}}
\def\Rightarrowfillstar@{\arrowfill@=={\Rightarrow\smash{^*}}}
\newcommand{\xRightarrowstar}[2][]{\ext@arrow
  0{13}{15}8\Rightarrowfillstar@{#1}{#2}}
\newcommand{\xRightarrow}[2][]{\ext@arrow
  0{13}{15}8\Rightarrowfill@{#1}{#2}}
\newcommand{\LRstep}{\@ifstar{\xRightarrowstar}{\xRightarrow}}
\newcommand{\eint}{\approx}
\newcommand{\tr}{\mathsf{tr}}
\newcommand{\taut}{\tau_\mathsf{then}} 
\newcommand{\taue}{\tau_\mathsf{else}} 
\newcommand{\obs}{\mathsf{obs}}
\newcommand{\upto}{\stackrel{\tau}{=}}
\newcommand{\vars}{\mathit{vars}} 
\newcommand{\dom}{\mathit{dom}}
\newcommand{\E}{\mathsf{E}}
\newcommand{\redc}{\mathrel{\Downarrow}} 
\newcommand{\redv}{{\Downarrow}} 
\newcommand{\redcb}{\mathrel\nDownarrow} 
\newcommand{\theo}{=_\E}        
\newcommand{\ini}{\mathcal{I}}
\newcommand{\res}{\mathcal{R}}
\renewcommand{\k}{\ensuremath{\vect k}}
\newcommand{\nI}{\ensuremath{\vect n_I}}
\newcommand{\nR}{\ensuremath{\vect n_R}}
\newcommand{\proto}{\Pi}    
\newcommand{\pM}{\mathcal{M}_\Pi}    
\newcommand{\pS}{\mathcal{S}_\Pi}    
\newcommand{\pMa}{\mathcal{M}_{\Pi,\vect{{id}}}} 
\newcommand{\agent}{\mathcal{A}}
\newcommand{\id}{\mathit{id}}
\newcommand{\idzero}{\mathsf{id}_0}
\newcommand{\monid}{\mathit{id}}
\newcommand{\cred}{\mathit{cred}}
\newcommand{\skex}{\mathsf{sk}}
\newcommand{\skIssuer}{\mathsf{sk_I}}
\renewcommand{\pk}{\mathsf{pk}}
\newcommand{\sign}{\mathsf{sign}}
\newcommand{\checksign}{\mathsf{check}_\sign}
\newcommand{\zk}{\mathsf{zk}}
\newcommand{\checkzk}{\mathsf{check}_\zk}
\newcommand{\publiczk}{\mathsf{public}_\zk}
\newcommand{\tuple}{\mathsf{tuple}}
\newcommand{\DAA}{\mathsf{DAA}}
\newcommand{\aagent}{A} 
\newcommand{\aini}{I}
\newcommand{\ares}{R}
\newcommand{\agents}{\mathcal{A}} 
\newcommand{\ideaf}[0]{\Phi_{\mathrm{ideal}}}
\newcommand{\ideam}[1]{\mathrm{ideal}(#1)}
\newcommand{\ideamstef}{\mathrm{ideal}}
\newcommand{\fr}{\mathrm{fr}}
\newcommand{\frr}{\mathrm{fr'}}
\newcommand{\dagR}{\dag\hspace{-0.1em}_R}
\newcommand{\dagI}{\dag\hspace{-0.1em}_I}
\newcommand{\con}{\mathrm{Co}}
\newcommand{\K}{\mathcal{K}}
\newcommand{\choice}[2]{\mathsf{choice}[#1,#2]}
\newcommand{\ideal}{\mathsf{ideal}}
\newtheorem{example}{Example}
\newtheorem{definition}{Definition}
\newtheorem{lemma}{Lemma}
\newtheorem{proposition}{Proposition}
\newtheorem{theorem}{Theorem}
\newcolumntype{d}[1]{D{.}{.}{#1}}
\let\red\undefined
\begin{document}
\begin{frontmatter}                           

%
\title{A method for unbounded verification of privacy-type
  properties} 

\runningtitle{A method for unbounded verification of privacy-type
  properties}

\author[A]{\fnms{Lucca} \snm{Hirschi}\thanks{This work
  was conducted when Lucca Hirschi was working at LSV, ENS Paris-Saclay \& Universit\'e Paris-Saclay, France
  and then at ETH Zurich, Switzerland.}}
\author[B]{\fnms{David} \snm{Baelde}}
\author[C]{\fnms{St\'ephanie} \snm{Delaune}\thanks{Corresponding
    author. E-mail: stephanie.delaune@irisa.fr.  
This work has received funding from the European Research Council
(ERC) under the EU’s Horizon 2020 
research and innovation program (grant agreement No 714955-POPSTAR) 
and the ANR project SEQUOIA ANR-14-CE28-0030-01.}}
\runningauthor{L. Hirschi et al.}
\address[A]{Inria \& LORIA, France\\
E-mail: lucca.hirschi@inria.fr}
\address[B]{LSV, ENS Paris-Saclay, CNRS \& Universit\'e Paris-Saclay, France\\
E-mail: david.baelde@lsv.fr}
\address[C]{Univ Rennes, CNRS, IRISA, France\\
E-mail: stephanie.delaune@irisa.fr}

\begin{abstract}
  In this paper, we consider the problem of verifying anonymity and 
  unlinkability in the symbolic model, where protocols are represented
  as processes in a variant of the applied pi calculus, notably used
  in the \proverif tool.
  Existing tools and techniques do not allow to
  verify directly these properties, expressed as behavioral 
  equivalences. We propose a different approach:
  we design two conditions on protocols
  which are sufficient to ensure anonymity and unlinkability, and
  which can then be effectively checked automatically using \proverif.
  Our two conditions correspond to two broad classes of attacks
  on unlinkability, \ie data and control-flow leaks.
  This theoretical result is general enough that it applies to a wide
  class of protocols based on a variety of cryptographic primitives.  In particular, using our tool, \ukano, we
  provide the first formal security proofs of protocols such as BAC and
  PACE (e-passport), Hash-Lock (RFID authentication), etc.
  Our work has also lead to the discovery of new attacks,
  including one on the LAK protocol (RFID authentication)
  which was previously claimed to be unlinkable (in a weak sense).
\end{abstract}


\begin{keyword}
formal verification \sep security protocols \sep symbolic model \sep
  equivalence-based properties
\end{keyword}

\end{frontmatter}

\section{Introduction}
\label{sec:intro}

\renewcommand{\textsf}[1]{$\mathsf{#1}$}


Security protocols aim at securing communications over various types
of insecure networks
(\eg web, wireless devices)
where dishonest users may listen to communications and interfere with them.
A \emph{secure communication} has a different meaning depending on the
underlying application. 
It ranges from the confidentiality of data (medical files, secret
keys, \etc) to, \eg 
verifiability in electronic voting systems. Another example of a security notion is privacy. 
In this paper, we focus on two privacy-related properties, namely
unlinkability (sometimes called untraceability), and anonymity.
These two notions are informally defined in the
ISO/IEC standard 15408~\cite{ISO15408} 
 as follows:
\begin{itemize}
\item Unlinkability aims at
  \emph{ensuring that a user may make multiple uses of a service or resource
  without others being able to link these uses together.}
\item Anonymity aims at
\emph{ensuring that a user may use a service or resource
without disclosing its identity.} 
\end{itemize}
Both are critical for instance for Radio-Frequency Identification
Devices (RFID)  and are thus extensively studied in that context
(see, \eg~\cite{van2008attacks}
for a survey of attacks on this type of protocols), but they
are obviously not limited to it.

\smallskip{}


One extremely successful approach when designing and analyzing
security protocols is the use of formal verification, \ie
the development of
rigorous frameworks and techniques to analyze protocols.
This approach has notably lead to the discovery of a flaw
 in the Single-Sign-On protocol used \emph{e.g.} by Google Apps. It has
been shown that a malicious application could very easily access to
any other application (\emph{e.g.} Gmail or Google Calendar) of their
users~\cite{DBLP:conf/ccs/ArmandoCCCT08}. This flaw has been found when analyzing the protocol using
formal methods, abstracting messages by a term algebra and using the
\textsf{Avantssar} validation platform. Another example is a flaw on vote-privacy discovered
during the formal and manual analysis of an electronic voting
protocol~\cite{JCS2012-Ben}.
All these results have been obtained  using \emph{formal symbolic
models}, where most of
the cryptographic details are ignored using abstract structures.
The techniques used in symbolic models have become mature
and several tools for protocol
verification are  nowadays available,
\emph{e.g.} the \textsf{Avantssar} platform~\cite{avantssar-tacas12},
the \textsf{Tamarin} prover~\cite{Tamarin},
and the \proverif~tool~\cite{blanchetcsfw01}.

\smallskip{}

Unfortunately, most of these results and tools focus on trace properties, that is,
statements that something bad never occurs on any execution trace of a
protocol. Secrecy and authentication are typical examples of trace
properties: a data remains confidential if, for any execution, the
attacker is not able to produce the data. However, privacy properties like
unlinkability and anonymity are generally not defined as trace properties.
Instead, they are usually defined as the fact that an observer
cannot distinguish between two situations, which requires a notion of
behavioural equivalence.
Based on such a notion of equivalence, several definitions of
privacy-type properties have been proposed (\eg~\cite{arapinis-csf10,kostas-csf10} for
unlinkability, and \cite{DKR-jcs08,BackesHM08} for vote-privacy).
In this paper, we consider the
well-established definitions of strong unlinkability and anonymity as defined
in~\cite{arapinis-csf10}. 
They have notably been used to establish privacy for various protocols either by hand or
using ad hoc encodings (\eg 
eHealth protocol~\cite{dong2012formal},
mobile telephony~\cite{arapinis2012new,arapinis2014privacy}).
We provide a brief comparison with alternative definitions in Section~\ref{sec:priv:prop:discu}.

\smallskip{}

Considering an unbounded number of sessions, the problem of deciding
whether {a protocol satisfies an equivalence property} is 
undecidable even for a very limited fragment of {protocols} (see, \eg
\cite{CCD-tocl15}).
Bounding the number of sessions suffices to retrieve decidability for
standard primitives (see, \eg~\cite{baudet-ccs2005,cheval-ccs2011}). 
However, analysing a protocol for a fixed (often low) number of
sessions does not allow to prove security. 
Moreover, in the case of equivalence properties, existing tools scale
badly and can only analyse protocols for a very limited number of 
sessions, typically 2 or 3. 
Another approach consists in implementing a procedure that is not
guaranteed to terminate. 
This is in particular the case of \proverif, a well-established tool
for checking security of protocols. 
\proverif~is able to check a strong notion of equivalence (called 
diff-equivalence) between processes
that share the same structure. 
Despite recent improvements on diff-equivalence
checking~\cite{ChevalBlanchetPOST13} intended  to prove unlinkability of the BAC protocol (used in e-passport),
\proverif~still cannot be used off-the-shelf to establish unlinkability
properties, and therefore cannot conclude on most
of the case studies presented in Section~\ref{sec:casestudies}.
Recently, similar approaches have been
implemented in two other tools, namely \textsf{Tamarin}~\cite{basin2015automated} and
\textsf{Maude{-}NPA}~\cite{santiago2014formal}. They
are based on a notion of diff-equivalence, and therefore suffer from
the same drawbacks.


\paragraph*{\bf Our contribution.}
We believe that looking at trace equivalence of any pair of protocols is a too general problem
and that much progress can be expected when one focuses on a few privacy goals
and a class of protocols only (yet large and generic enough).
We follow this different approach. We aim at proposing
sufficient conditions that can be automatically checked, and that
imply unlinkability and anonymity for a large class of security protocols.
The success of our solution will be measured by confronting it to many real-world case studies.

More precisely, we identify a large class of 2-party protocols
(simple else branches, arbitrary cryptographic primitives) and we devise
two conditions called {\em frame opacity}
and {\em well-authentication}
that imply unlinkability and anonymity for an unbounded number of
sessions. We show how these two conditions can be automatically
checked using \eg the \proverif~tool, and we provide tool support for
that. 
Using our tool \ukano (built on top of \proverif),
we have automatically analysed several protocols, among them the Basic Access Control (BAC)
protocol as well as the Password Authenticated Connection Establishment (PACE) protocol that are both
used in e-passports.
We notably establish the first proof of unlinkability
for ABCDH~\cite{alpar2013secure} and
for the BAC protocol followed by the Passive
Authentication (PA) and Active Authentication (AA) protocols.
We also report on an attack that we found on the PACE protocol, and another
one that we found on the LAK protocol~\cite{LAK'06} whereas
it is claimed untraceable in~\cite{van2008attacks}.
It happens that our conditions are rather tight, and we
believe that the overall methodology and proof method
could be used for other classes of protocols and other
privacy goals.

\paragraph*{\bf Our sufficient conditions.}
We now give an intuitive overview of our two sufficient conditions,
namely \emph{frame opacity} and \emph{well-authentication}.
In order to do this, assume that we 
want to design a mutual authentication protocol between a tag~$T$
and a reader~$R$ based on symmetric encryption, and we want this
protocol to be unlinkable.
We assume that~$k$ is a symmetric key shared between~$T$ and~$R$. 

\smallskip{}

A first attempt to design such a protocol is
presented using Alice \& Bob notation as follows  ($n_R$ is a fresh nonce):
 $$
  \begin{array}{rrl}
    1.& R \to T: & n_R\\[1mm]
    2.& T \to R: & \{n_R\}_k
  \end{array}
  $$
This first attempt based on a challenge-response scheme is actually linkable.
Indeed, an active attacker who systematically intercepts the nonce~$n_R$ and replaces it by a
constant will be able to infer whether the same tag has been used in
different sessions or not by comparing the answers he receives.
Here, the tag is linkable because, for a certain 
behaviour (possibly malicious) of the attacker, some relations between messages
leak information about the agents that are involved in the execution.
Our first condition, namely \emph{frame opacity},  actually checks
that all outputted messages have only relations
that only depend on what is already observable.
Such relations can therefore not be exploited by the attacker to learn
anything new about the involved agents.

\smallskip{}

Our second attempt takes the previous attack into account and randomises
the tag's response 
and should achieve mutual authentication
by requiring that the reader must answer to the challenge $n_T$. This
protocol can be as follows:
$$
  \begin{array}{rrl}
    1.& R \to T: & n_R\\[1mm]
    2.& T \to R: & \{n_R,n_T\}_k\\[1mm]
    3.& R \to T: & \{n_T\}_k
  \end{array}
$$
Here,
Alice \& Bob notation shows its limit. It does not specify how
the reader and the tag are supposed to check that the 
messages they received are of the expected form, and how they should react when the
messages are not well formed. This has to be precisely defined, since 
unlinkability depends on it.
For instance, assume the tag does not check that the message he
receives at step~$3$ contains the nonce~$n_T$. We assume that it only
checks 
that the received message is an encryption with its own key~$k$, and
it aborts the session otherwise.
In such a flawed implementation, an active attacker can   eavesdrop
a message $\{n_T\}_{k}$ sent by $R$ to a tag~$T$, and try to inject
this message at the third step of another session played by $T'$. The tag~$T'$ will react by either
 aborting or by continuing the execution of this protocol. Depending
 on the reaction of the tag, the attacker will be able to infer if~$T$ and~$T'$ are the same tag or not.

In this example, the attacker adopts a malicious behaviour 
that is not detected immediately by the tag who keeps executing the protocol.
The fact that the tag passes successfully a conditional reveals crucial
information about the agents that are involved in the execution. 
Our second condition, namely \emph{well-authentication}, basically
requires that when an execution deviates from the honest one, the
agents that are involved cannot successfully pass a conditional,
thus avoiding the leak of the binary information success\slash failure.

\smallskip{}

Our main theorem states that these two conditions, frame opacity and
well-authentication, are actually sufficient to ensure both unlinkability and anonymity.
This theorem is of interest as our two conditions are fundamentally 
simpler than the targeted properties:
frame opacity can be expressed and established relying on diff-equivalence (without
the aforementioned precision issue) and well-authentication
is only a conjunction of reachability properties.
In fact, they are both in the scope of existing automatic verification
tools like \proverif and \tamarin.

\paragraph*{\bf Some related work.}
The precision issue of diff-equivalence is well-known (acknowledged \eg
in \cite{DRS-ifiptm08,ChevalBlanchetPOST13,2016-verifying-observational-equivalence,delaune2016survey}).
So far, the main approach that has been developed to solve this issue consists
in modifying the notion of diff-equivalence to get closer to trace equivalence.
For instance,
the swapping technique introduced in~\cite{DRS-ifiptm08} and formally 
justified in~\cite{2016-verifying-observational-equivalence} allows
to relax constraints imposed by diff-equivalence in specific situations,
namely in process algebras featuring a notion of phase, often used for 
modelling e-voting protocols.
Besides, the limitation of the diff-equivalence w.r.t.~conditional evaluations
has been partially addressed in~\cite{ChevalBlanchetPOST13}
by pushing away the evaluation of some conditionals into terms.
Nevertheless, the problem remains in general and the limitation described above
is not addressed by those works (incidentally, it is specifically 
acknowledged for the case of the BAC protocol in~\cite{ChevalBlanchetPOST13}).
%
We have chosen to follow a novel approach
in the same spirit as the one presented
in~\cite{kostas-csf10}.
However, \cite{kostas-csf10} only
considers a very restricted class of protocols (single-step
protocols that only use hash functions), while
we target more complex protocols.

This paper essentially subsumes the conference paper that has been
published in 2016~\cite{HBD-sp16}.
Compared to that earlier work, we have greatly generalized the scope of our method and improved its mechanization.
First, we consider more protocols, including protocols where 
one party has a single identity (\eg DAA) as well as protocols where 
sessions are executed sequentially instead of concurrently (\eg 
e-passport scenarios).
Second, we consider a much more general notion of frame opacity, which
enables the analysis of more protocols.
As a result of these two improvements, we could apply our method to
more case studies (\eg DAA, ABCDH).

\paragraph*{\bf Outline.}
In Section~\ref{sec:protocol}, we present our model inspired from 
the applied pi calculus as well as the class of protocols we consider.
We then introduce in Section~\ref{sec:properties} the
notion of trace equivalence that we then use to formally define
the two privacy properties we study in this paper: unlinkability and anonymity.
Our two conditions (frame opacity and well-authentication) and our main theorem are presented in Section~\ref{sec:approach}.
Finally, we discuss how to mechanize the verification of our conditions in 
Section~\ref{sec:mechanization} and present
our case studies in Section~\ref{sec:casestudies}, before concluding
in Section~\ref{sec:conclusion}.
A detailed proof of our main result is provided in Appendix.

\section{Modelling protocols}
\label{sec:protocol}

We model security protocols using a process algebra inspired
from the applied pi calculus~\cite{AbadiFournet2001}.
More specifically, we consider a calculus close to the one which is
used 
in the \proverif~tool~\cite{BlanchetAbadiFournetJLAP08}.
Participants  are modeled as processes, and the
communication between  them is modeled by means of the exchange of
messages that are represented by a term algebra.


\subsection{Term algebra}
\label{subsec:term}

We consider an infinite set $\N$ of \emph{names} which are used to represent
keys and nonces, and two infinite and disjoint sets of \emph{variables},
denoted~$\X$ and~$\W$. Variables in~$\X$ will typically be used to
refer to unknown parts of messages expected by participants,
while variables in~$\W$, called \emph{handles}, will be used to store messages learned by the 
attacker.
We assume a \emph{signature}~$\Sigma$, \ie a set of function
symbols together with their arity, split
into \emph{constructor} and \emph{destructor} symbols, \ie $\Sigma =
\Sigma_c \sqcup \Sigma_d$. 

Given a signature $\mathcal{F}$ and a set of initial data
$A$, we denote by $\T(\mathcal{F},A)$ the set of terms built
from elements of $A$ by applying function symbols in $\mathcal{F}$.
Terms of $\T(\Sigma_c, \N \cup \X)$ will be called \emph{constructor
  terms}.
We note $\vars(u)$ the set of variables that occur in a term $u$.
A \emph{message} is a constructor term $u$ that is \emph{ground}, \ie
such that $\vars(u) = \emptyset$.
We denote by $\vect x$, $\vect n$, $\vect u$, $\vect t$ a
(possibly empty) sequence of variables, names, messages, and terms
respectively.
We also sometimes write them $(n_1, n_2, \ldots)$ or simply $n$ (when
the sequence is reduced to one element).
Substitutions are denoted by $\sigma$,
the domain of a substitution is written $\dom(\sigma)$, and
the application of a substitution $\sigma$ to a term $u$ is written
$u\sigma$.
The \emph{positions} of a term are defined as usual.

\begin{example}
\label{ex:signature}
Consider the following signature:
$$\Sigma = \{\senc,\;\sdec,\;\langle \, \rangle, \;
\proj_1, \; \proj_2, \; \oplus, \; 0, \; \eq, \;\ok\}.$$
The symbols $\senc$ and $\sdec$ of
arity 2 represent symmetric encryption and decryption. Pairing is
modeled using $\langle \; \rangle$ of arity 2, and projection
functions $\proj_1$ and $\proj_2$, both of arity 1.
The function symbol $\oplus$ of arity 2 and the constant $0$ are  used to model  the
exclusive or operator. Finally, we consider the symbol $\eq$ of arity~2 to model
equality test, as well as the constant symbol $\ok$. This signature is
split into two parts:
$
\Sigma_c = \{\senc, \langle \; \rangle, \oplus, 0, \ok\}$, and 
$\Sigma_d = \{\sdec, \proj_1, \proj_2, \eq\}$.
\end{example}

As in the process calculus presented in~\cite{BlanchetAbadiFournetJLAP08}, 
constructor terms are subject to an equational theory;
this has proved very useful for modelling algebraic
properties of cryptographic primitives
(see \emph{e.g.}~\cite{CDL05-survey} for a survey).
Formally, we consider a congruence~$=_\E$ on $\T(\Sigma_c,\N \cup \X)$,
generated from a set of equations~$\E$ over~$\T(\Sigma_c,\X)$.
Thus, this congruence relation is closed  under substitution
and renaming. 
We assume that it is not degenerate, \ie
there exist $u$, $v$ such that $u \neq_\E v$.

\begin{example}
\label{ex:xor}
To reflect the algebraic properties of the exclusive or operator, we may
consider the equational theory generated by the following equations:
$$
x \oplus 0 \;=\; x  
\qquad x \oplus x \;=\; 0 
\qquad x \oplus y \;=\; y \oplus x 
\qquad (x \oplus y) \oplus z \;=\; x \oplus  (y \oplus z) 
$$
In such a case, we have that $\senc(a \oplus (b \oplus a), k) =_\E \senc(b,k)$.
\end{example}

We also give a meaning to destructor symbols through the notion of 
\emph{computation relation}.
As explained below, a computation relation may be derived from a
rewriting system but we prefer to not commit to a specific
construction, and therefore we introduce a generic notion of
computation relation.
Instead, we assume an arbitrary relation subject to a number of requirements,
ensuring that the computation relation behaves naturally with respect to
names, constructors, and the equational theory.

\begin{definition}
\label{def:computation-rel}
A \emph{computation relation} is a relation over $\T(\Sigma,\N) \times
\T(\Sigma_c,\N)$, denoted $\redc$, that satisfies the following
requirements:
\begin{enumerate}
\item if $n \in \N$, then $n\redc n$; \label{def-cr-name}
\item if $\ffun\in\Sigma_c$ is a symbol of arity $k$, and $t_1 \redc
  u_1,\ldots, t_k \redc u_k$, then 
  ${\ffun(t_1,\ldots,t_k) \redc \ffun(u_1,\ldots, u_k)}$; \label{def-cr-congruence}
\item if $t \redc u$  then $t\rho \redc u\rho$ for any bijective
  renaming $\rho$; \label{def:cr-renaming}
\item  if 
$t'$ is a context built from $\Sigma$ and $\N$,
  $t \redc u$, and $t'[u]\redc v$ 
  then $t'[t]\redc v$; \label{def-cr-context-1}
\item if $t'$ is a context built from $\Sigma$ and $\N$,
  and $t_1$, $t_2$ are constructor terms
  such that $t_1 =_\E t_2$ and $t'[t_1] \redc u_1$ for some~$u_1$, then
  $t'[t_2] \redc u_2$ for some $u_2$
    such that $u_1 =_\E u_2$; \label{def-cr-context-2}
\item if $t \redc u_1$ then we have that $t \redc u_2$ if, and only
  if, $u_1 =_\E u_2$. \label{def-cr-context-2}
\end{enumerate}
\end{definition}

The last requirement expresses that the relation $\redc$
associates, to any ground term $t$,
at most one message up to the equational theory $\E$.
When no such message exists, 
we say that the \emph{computation fails}; this is noted  $t\redcb$.
  We may sometimes use directly~$t\redv$ as a message,
  when we know that the computation succeeds and the choice
  of representative is irrelevant.

\smallskip{}

A possible way to derive a computation relation is to consider an ordered set of
rules of the form:
${\gfun(u_1,\ldots, u_n) \to u}$ where $\gfun$ is a
destructor, and $u, u_1, \ldots, u_n \in \T(\Sigma_c,\X)$.
A ground term~$t$ can be rewritten into $t'$ if there is  a
position~$p$ in~$t$ 
and a rule $\gfun(u_1,\ldots, u_n) \to u$
 such that $t|_p = \gfun(v_1,\ldots,v_n)$ and $v_1 =_\E u_1\theta, \ldots, v_n=_\E u_n\theta$ for
some substitution $\theta$, and $t' = t[u\theta]_p$ (\emph{i.e.} $t$
in which the subterm at position~$p$ has been replaced by~$u\theta$).
Moreover, we assume that $u_1\theta,
\ldots, u_n\theta$ as well as $u\theta$ are messages. In case
there is more that one rule that can be
applied at a given position $p$, we consider the one occurring first in the ordered set.
We denote~$\to^*$ the reflexive and transitive closure of~$\to$, and 
$\redc$ the relation induced by $\to$, \ie
$t\redc u$ when $t\to^* u'$ and $u' =_\E u$.

Proving that an ordered set rewriting rules as
  defined above
induces a computation relation is beyond the scope of
this paper but the interested reader will find such a proof
in~\cite{PhD-Hirshi}.

\begin{example}
\label{ex:rewriting}
The properties of 
symbols in~$\Sigma_d$ (Example~\ref{ex:signature})
are reflected through the
following rules:
$$
\sdec(\senc(x,y),y) \to x
\qquad
\eq(x,x) \to \ok
\qquad
\proj_i(\langle x_1,x_2\rangle) \to x_i \;\;\mbox{ for $i \in
  \{1,2\}$}
$$
This rewriting system
induces a computation relation.
For instance, we have that:
$$
\sdec(\senc(c, a \oplus b), b \oplus a) \redc c,
\qquad
\sdec(\senc(c, a \oplus b), b) \redcb, \mbox{and}
\qquad
\sdec(a,b) \oplus\sdec(a,b) \redcb.
$$
\end{example}

\begin{example}
\label{ex:neq}
Ordered rewriting rules are expressive enough to
define a destructor symbol $\fneq$ such that $\fneq(u,v)\redc
\mathsf{yes}$ if, and only if, $u$ and $v$ can be reduced to messages that are
not equal modulo~$\E$.
It suffices to consider 
$\fneq(x,x) \to \mathsf{no}$ and $\fneq(x,y)\to\mathsf{yes}$ (in this order)
  with
$\mathsf{yes}, \mathsf{no}\in\Sigma_c$.
\end{example}

For modelling purposes, we split the signature $\Sigma$ into two
parts, namely $\Sigma_\pub$ and $\Sigma_\priv$. 
An attacker builds his own messages by applying public function symbols to
terms he already knows and that are available through variables
in~$\W$. Formally, a computation done by the attacker is a
\emph{recipe}, \ie a term in $\T(\Sigma_\pub,\W)$.
Recipes will be denoted by $R$, $M$, $N$.
Note that, although we do not give the attacker the ability to generate
fresh names to use in recipes, we obtain essentially the same capability
by assuming an infinite supply of public constants in
$\Sigma_c \cap \Sigma_\pub$.


\subsection{Process algebra}

We now define the syntax and semantics of the process algebra 
we use to model security protocols. 
We consider a calculus close to the one which is
used in the \proverif~tool~\cite{BlanchetAbadiFournetJLAP08}.
An important difference is that we only consider public channels.
Our calculus also features, in addition to the usual replication (where an 
unbounded number of copies of a process are ran concurrently),
a simple form of sequential composition and the associated \emph{repetition}
operation (where an unbounded number of copies of a process
are ran sequentially, one after the other).

\medskip{}

We consider a set $\Ch$ of channel names that are assumed to be
public. Protocols are modeled through processes using the following
grammar:
$$
\begin{array}{rclclrlclrlcl}
    P,Q &:=&  0 & \;\;\;& \mbox{null} & \hspace{1cm} \mid  & (P \mid Q)&&\mbox{parallel}&\hspace{1cm}
                                               \mid& \rep  P &&
                                                                \mbox{replication}
  \\[0.7mm]
 &\mid & \In(c, x).P && \mbox{input} &\mid& \new \vect n. P &&
                                                               \mbox{restriction} & \mid& \rec  P &&                     \mbox{repetition} \\[0.7mm]
&\mid&\Out(c, u).P &&\mbox{output}  &\mid& \Let \; \vect x = \vect
                                               t \;\In \; P \; \Else
                                                \; Q&&
                                                       \mbox{evaluation} &\mid& P;Q && \mbox{sequence}\\[0.5mm]
\end{array}
$$
\noindent where $c \in \Ch$, $x \in \X$, $n \in \N$,
$u \in \T(\Sigma_c, \N \cup \X)$, 
and $\vect x$ and $\vect t$ are two sequences of the same
length, respectively over variables ($\X$) and terms ($\T(\Sigma, \N\cup\X)$).
   
\smallskip{}
We write~$\fv(P)$ for 
the set of \emph{free variables} of~$P$, \ie the set
of variables that are not bound by an input or a $\Let$
construct.
A process $P$ is ground if $\fv(P) = \emptyset$.
Similarly, we write $\fn(P)$ for the set of \emph{free names} of~$P$,
\ie the set of names that are not bound by a $\new$ construct.

\smallskip{}

Most constructs are standard in process calculi.
The process $0$ does
nothing and we sometimes omit it. The process $\In(c,x).P$ expects a
message $m$ on channel $c$ and then behaves like $P\{x \mapsto m\}$,
\ie~$P$ in which the (free) occurrences of~$x$ have
been replaced by~$m$. The
process $\Out(c,u).P$ emits~$u$ on channel~$c$, and then behaves like~$P$.
The process $P \mid Q$ runs $P$ and $Q$ in parallel. The process
$\new \vect n.P$ generates new names, binds it to~$\vect n$,
and continues as~$P$. 

The special construct  $\Let \; \vect x = \vect t \;\In \;
P \; \Else \; Q$ combines several standard constructions,
allowing one to write computations and conditionals compactly.
Such a  process tries to evaluate the sequence of terms~${\vect t}$ and
in case of success, \ie when $\vect t \redc \vect u$ for some messages
$\vect u$, the process $P$ in which~$\vect x$ are replaced by~$\vect u$
is executed;
otherwise the process~$Q$ is executed. The goal of this construct is to avoid nested $\Let$
instructions to be able to define our class of protocols in a simple
way later on. Note also that the $\Let$ instruction together with the
$\eq$ theory as defined in Example~\ref{ex:rewriting} can encode the usual
conditional construction. Indeed, $\Let \;x = \eq(u,v) \;\In
\; P \; \Else \; Q$ will execute~$P$ only if the computation succeeds
on $\eq(u,v)$, that is only if $u\redc u'$, $v\redc v'$, and $u' \theo v'$ for some
messages $u'$ and $v'$. For brevity, we sometimes omit $\Else\;0$.

The
process $\rep P$ executes $P$ an arbitrary number of times (in
parallel).
The last two constructs correspond to sequential compositions.
The process $(P;Q)$ behaves like $P$ at first,
and after the complete execution of $P$ it behaves like $Q$.
The process $\rec P$ executes $P$ an arbitrary number of times in
sequence, intuitively corresponding to $(P;P;P;\ldots)$.
Such constructions are known to be problematic in process calculi.
Our goal here is however quite modest: as will be visible in our operational
semantics, our sequential composition is only meaningful for restricted 
processes. It could in fact be defined using recursion, but there is no point 
here to consider general recursion: our study is going to restrict to
a simple class of protocols that would immediately exclude it.

\newcommand{\open}{\mathsf{open}}
\newcommand{\close}{\mathsf{close}}

\begin{example}
\label{ex:process}
We consider the RFID protocol due to Feldhofer \emph{et al.} as 
described in~\cite{feldhofer2004strong} 
and which can be presented using Alice \& Bob notation  as follows:\\[1mm]
\null\hfill
$\begin{array}{rrl}
1.& I \to R: & n_I\\
2.& R \to I: & \{ n_I, n_R\}_k\\
3.& I \to R: & \{n_R, n_I\}_k
\end{array}$\hfill\null

\smallskip{}

\noindent The protocol is between an initiator $I$ (the reader) and a responder
$R$ (the tag) that share a symmetric key~$k$.
We consider the term algebra introduced in Example~\ref{ex:rewriting}.
The protocol is modelled  by
the parallel composition  of~$P_I$ and~$P_R$, corresponding
respectively  to the roles~$I$ and~$R$.\\[1mm]
\null\hfill
$P_\Feldhofer \;\; := \; \new k.\; (  \new n_I.  P_I \;\mid \;
\new 
n_R.  P_R)$\hfill\null

\smallskip{}

\noindent where $P_I$ and $P_R$ are defined as follows, with $u = \sdec(x_1,k)$:
\[
\begin{array}{rcl}
P_I  &:=&   \Out(c_I,n_I).\\
&&\In(c_I,x_1). \\
&&\Let \; x_2, x_3 = \eq(n_I, \proj_1(u)), \proj_2(u)\; \In\\
&&\Out(c_I,\senc(\langle x_3, n_I\rangle,k))
\end{array}
\;\;\;\;\;\;
\begin{array}{rcl}
P_R &:=& \In(c_R,y_1). \\
&&\Out(c_R,\senc(\langle y_1, n_R\rangle, k)).\\
&&\In(c_R,y_2).\\
&&
\Let \; y_3 = \eq(y_2, \senc(\langle n_R, y_1 \rangle,k)) \; \In \; 0
\end{array}
\]
We may note that there are potentially several ways to implement the
    last reader's test.
    For instance, we may decide to replace the last line of the process $P_R$ by
    $\Let \; y_3 =\linebreak[4]
    {\eq(\langle n_R, y_1\rangle, \sdec(y_2,k)) \; \In \; 0}$. This last
    check can also be simply removed. Alternatively, it could be
    followed by an observable action to make the outcome of the test manifest,
    e.g.\ the output of a public constant $\open$ in case of success and
    $\close$ in case of failure.
      This would be a reasonable model for many use cases, e.g.\ in
      access control scenarios
      a door may either open or remain close after the execution of the
      protocol.
\end{example}



The operational semantics of processes is given by a labelled transition
system over \emph{configurations} (denoted by $K$) which are
pairs~$(\p; \phi)$ where:
\begin{itemize}
\item $\p $ is a multiset of ground processes where null processes are 
  implicitly removed;
\item $\phi = \{w_1 \refer u_1, \ldots, w_n \refer u_n\}$ is a
  \emph{frame}, \ie a substitution where $w_1, \ldots, w_n$ are
  variables in~$\W$, and $u_1,\ldots, u_n$ are messages.
\end{itemize}

We often write $P \cup \p$  instead of $\{P\} \cup \p$. The terms in~$\phi$ 
represent the messages that are known by the attacker.
Given a configuration~$K$, $\phi(K)$ denotes its second 
component. 
Sometimes, we consider processes as configurations: in such cases, the corresponding
frame is the empty set~$\emptyset$.

\begin{figure*}[t]
\[
\begin{array}{lcl}
\textsc{In} && (\In(c,x).P \cup \p; \phi)\; \lrstep{\In(c,R)} \; (P \{x \mapsto u\}
 \cup \p; \phi) \\
&&\hfill \mbox{
where  $R$ is a recipe  such that $R\phi\redc u$ for
some message $u$}\\[1mm]
\textsc{Out}&& (\Out(c,u).P \cup \p; \phi) \; \lrstep{\Out(c, w)} \; (P
 \cup \p; \phi \cup \{w \refer u\} ) 
 \hfill \text{with $w$ a fresh variable in $\W$}\\[1mm]
\textsc{New}&& (\new \vect{n}.P \cup \p; \phi) \;
\lrstep{\tau} \; (P\cup \p;
 \phi) \;\;\;\;\;
  \hfill \mbox{where $\vect n \cap \fn(\p,\phi) = \emptyset$}\\[1mm]
\textsc{Par}&& (\{P_1 \mid P_2\} \cup \p; \phi) \; \lrstep{\tau} \;
 (\{P_1, P_2\} \cup \p; \phi) \\[1mm]
\textsc{Then}\;\;&&(\Let \; \vect x = \vect t \; \In \; P \; \Else \; Q\cup \p; \phi) \; \lrstep{\taut} \;
(P\{\vect x \mapsto \vect u\} \cup
\p; \phi) \;\;\;\;\;\;
\hfill \mbox{when $\vect t\redc \vect u$ for some $\vect u$}\\[1mm]
\textsc{Else}&&(\Let \; \vect x = \vect t \; \In \; P \; \Else \; Q\cup \p; \phi) \; \lrstep{\taue} \;
(Q \cup
\p; \phi) 
\hfill \mbox{when $t_i\redcb$ for some $t_i \in \vect t$}\\[1mm]
  \textsc{Rep}{-}{\rep} && (\rep P \cup \p; \phi) \; \lrstep{\tau} \;
 (P \,\cup\, \rep P \cup \p; \phi) \\[1mm]
  \textsc{Rep}{-}{\rec} && (\rec P \cup \p; \phi) \; \lrstep{\tau} \;
 (\{P;\rec P\} \cup \p; \phi) \\[1mm]
\textsc{Seq}&&
  (P \cup \p; \phi) \; \lrstep{\alpha} \;
  (P'\cup \p; \phi') 
  \hfill
  \mbox{if } P \rightsquigarrow Q \text{ and }
  (Q\cup\p;\phi) \lrstep{\alpha}
  (P'\cup\p;\phi')
\end{array}
\]
\caption{Semantics for processes}
\label{fig:semantics}
\end{figure*}
\begin{figure}[t]
  $\begin{array}{rcll}
    0;Q &\rightsquigarrow& Q&
   \\[2mm]
   (\Out(c,u).P);Q &\rightsquigarrow& \Out(c,u).(P;Q)&\\[2mm]
    (\In(c,x).P);Q &\rightsquigarrow& \In(c,x).(P;Q)
    & \text{ when } x \not\in \fv(Q)
    \\[2mm]
    (\new\vect{n}.P);Q &\rightsquigarrow& \new\vect{n}.(P;Q)
    & \text{ when } \vect{n}\cap\fn(Q)=\emptyset
    \\[2mm]
    (\Let\; \vect{x} = \vect{u} \;\In\; P' \;\Else\; P'');Q
      &\rightsquigarrow&
      \Let\; \vect{x} = \vect{u} \;\In\; (P';Q) \;\Else\; (P'';Q)
      & \text{ when } \vect{x}\cap\fv(Q)=\emptyset
  \end{array} $
  \caption{Sequence simplification rules}
  \label{fig:seq}
\end{figure}

\smallskip{}

The operational semantics of a process 
is given by the relation
$\lrstep{\alpha}$
defined as the least relation over configurations satisfying the rules in Figure~\ref{fig:semantics}.
The rules are mostly standard and correspond to the intuitive meaning
given previously.
 Rule \textsc{In} 
 allows the attacker to send on channel~$c$ a message as soon as it is
 the result of a computation done by applying public function symbols
 on messages that are in his current knowledge.
 Rule \textsc{Out} corresponds to the output of a term: the
 corresponding term is added to the frame
of the current configuration, which means that the attacker gains
access to it.
Rule \textsc{New} corresponds to the generation of a fresh
name. As is standard, the bound names $\vect{n}$ can be renamed to achieve
freshness so that the rule can always fire.
The \textsc{Par} rule simply splits parallel compositions.
 The \textsc{Then} and \textsc{Else} rules correspond to the evaluation of
  a sequence of terms $\vect t = t_1,\ldots,t_n$; if this succeeds,
  \ie if there exist messages $u_1, \ldots u_n$ such that $t_1
  \redc u_1, \ldots t_n\redc u_n$ then variables $\vect x$ are bound
  to those messages, and $P$ is executed; otherwise the
  process will continue with $Q$.
Rules \textsc{Rep}${-}{\rep}$ and \textsc{Rep}${-}{\rec}$ unfold 
replication and repetition operators. The latter gives rise to a sequential 
composition, whose execution will have to rely, via the \textsc{Seq} rule,
on the simplification rules of Figure~\ref{fig:seq}.
These rules only support a limited set of operators, hence a sequence
$P;Q$ is only executable for a restricted class of processes $P$,
notably excluding parallel compositions. This is not an issue for our
simple needs; our purpose here is not to define a general (and notoriously 
problematic) notion of sequence.

We note that our semantics enjoys some expected properties.
Reduction is stable by bijective renaming,
thanks to Definition~\ref{def:computation-rel}, item 3:
if $K_1 \lrstep{\alpha} K_2$ then $K_1\rho \lrstep{\alpha} K_2\rho$
where $\rho$ is a bijection over $\N$,
applied here to processes and frames.
It is also compatible with our equational theory,
thanks to Definition~\ref{def:computation-rel}, items 5 and 6:
if $K_1 \lrstep{\alpha} K_2$ then $K'_1\lrstep{\alpha} K'_2$
for any $K'_1 =_\E K_1$ and $K'_2 =_\E K_2$.
By Definition~\ref{def:computation-rel}, item 6, we also have that
$$(\p;\phi) \lrstep{\In(c,R)} (\p';\phi) \mbox{ and }
R\phi\redv =_\E R'\phi\redv \mbox{ yield }
(\p;\phi) \lrstep{\In(c,R')} (\p';\phi) \mbox{\;\;(modulo $\E$).}$$

As usual,
the relation $\lrstep{\alpha_1 \ldots \alpha_n}$ between
configurations (where~$\alpha_1 \ldots \alpha_n$ is a \emph{trace},
\ie a sequence of actions) is defined
as the (labelled) reflexive and transitive closure of~$\lrstep{\alpha}$. 

\begin{definition}
  Input and output actions are called \emph{observable}, while
  all other are unobservable. Given a trace $\tr$
  we define $\obs(\tr)$ to be the sub-sequence of observable actions of $\tr$.
\end{definition}

We generally refer to $\tau$, $\taut$ and $\taue$
as {\em unobservable actions}. 
It will become clear later on why we make a distinction
when a process evolves using $\textsc{Then}$ or $\textsc{Else}$.

\begin{example}
\label{ex:execution}
Continuing Example~\ref{ex:process}. We have that
$P_\Feldhofer \lrstep{\tr} (\emptyset; \phi_0)$ where:
\begin{itemize}
\item $\tr =
  \tau.\tau.\tau.\tau.\Out(c_I,w_1).\In(c_R,w_1).\Out(c_R,w_2).\In(c_I,w_2).\taut.\Out(c_I,w_3). \In(c_R,w_3).\taut$;
\item $\phi_0 = 
\{w_1 \mapsto {n'_I},  \; w_2 \mapsto \senc(\langle {n'_I}, 
{n'_R}\rangle,k'), \;  w_3 \mapsto \senc(\langle n'_R, n'_I
\rangle,k')\}$.
\end{itemize}
The names $k'$, $n'_I$ and~$n'_R$ are fresh names.
Actually, this execution corresponds to a normal execution of one session of the
protocol.
\end{example}

\subsection{A generic class of two-party protocols}
\label{subsec:proto}

We aim to propose sufficient conditions to ensure unlinkability and
anonymity for a generic class of two-party protocols.
In this section, we define formally the class of protocols we are
interested in.

\paragraph{Roles.}
We consider two-party protocols that are therefore
made of two roles called the initiator and responder role
respectively. We assume a set $\mathcal{L}$ of labels
that will be used to name output actions in these roles,
allowing us to identify
outputs that are performed by a same syntactic output action.
These labels have no effect on the semantics.

\begin{definition}
\label{def:priv:role:grammar}
An \emph{initiator role} is a process that is obtained using the following
grammar:\\[1mm]
\null\hfill
$P_I :=  0\ \mid\  \ell: \Out(c,u). P_R$\hfill\null

\smallskip{}

\noindent where $c\in\Ch$, $u  \in \T(\Sigma_c,\N \cup \X)$, $\ell \in \mathcal{L}$,
and $P_R$ is obtained from the grammar of \emph{responder roles}:\\[2mm]
\null\hfill
$
P_R \;:=\;  0 \quad
  \mid \; \In(c,y).    \Let\; \vect x =\vect t\; \In\; 
                             P_I\;
                          \Else\; P_{\fail}
\qquad \mbox{ with } P_{\fail} = 0 \;\mid \; \ell: \Out(c',u')$\hfill\null

\smallskip{}

\noindent where $c,c'\in\Ch$, $y \in \X$,  $\vect x$
  (resp.\ $\vect t$) is a sequence of variables in~$\X$ 
  (resp.\ terms in
  $\T(\Sigma, \N \cup \X)$),   $u' \in
\T({\Sigma_c},\N \cup \X)$, and $\ell \in \mathcal{L}$.

Moreover, an initiator (resp. responder) role is assumed to be
  ground, i.e., contain no free variable, though it may
  contain free names.
\end{definition}

Intuitively, a role describes the actions performed by an agent. 
A responder role consists of waiting for an input and,
depending on the outcome of a number of tests,
the process will continue by sending a message
and possibly waiting for another input, or
stop possibly outputting an error message. 
An initiator behaves similarly but begins with an output.
The grammar forces to add a conditional after each input.
This is not a real restriction as it is always possible to add trivial
conditionals with empty $\vect{x}$, and~$\vect{t}$.

\begin{example}
Continuing our running example, $P_I$ (resp.~$P_R$) as defined in Example~\ref{ex:process}
 is an initiator (resp.\ responder) role, up to the addition of a trivial
 conditional in role~$P_R$ and distinct labels
 $\ell_1$, $\ell_2$, and $\ell_3$ to decorate 
 output actions. 
\end{example}

Then, a protocol 
notably consists
of an initiator role and a responder role that can interact together
producing an {\em honest trace}. 
Intuitively, an honest trace is a trace
in which the attacker does not really
interfere,
and that allows the execution to
progress without going into an $\Else$ branch,
which would intuitively correspond to a way to abort the protocol.

\begin{definition}
\label{def:honest}
  A trace $\tr$ (\ie a sequence of actions)  is {\em honest}
  for  a frame $\phi$ if $\taue\notin \tr$ and
  $\obs(\tr)$ is of the form
$\Out(\_,w_0).\In(\_,R_0).\Out(\_,w_1).\In(\_,R_1).\ldots$
for arbitrary channel names, and such that
$R_i\phi\redv w_i\phi$
for any action $\In(\_,R_i)$ occurring in $\tr$.
\end{definition}

\paragraph{Identities and sessions.}
In addition to the pair of initiator and responder roles,
more information is needed in order to meaningfully define a protocol.
Among the names that occur in these two roles, we need to distinguish those
that correspond to identity-specific,
long-term data (\eg $k$ from Example~\ref{ex:process}),
called {\em identity parameters}
and denoted $\vect{k}$ below,
and those which shall be freshly generated at each
session (\eg $n_I,n_R$ from Example~\ref{ex:process}), called
{\em session parameters} and denoted $\nI$ and $\nR$ below.
We will require that any free name of roles 
must be either a session or an identity parameter.
  When necessary, we model long-term data that is not identity-specific (\ie uniform
  for all agents) as private constants (\ie terms in
  $\Sigma_c \cap \Sigma_\priv$).
Depending on the protocol to be modelled,
  we shall see that either both the initiator and the responder
  or only one of those roles have identity parameters.
  The former case arises for protocols that involves different identities
  for each party while the latter concerns protocols whose
  only one party can be instantiated by different agents
  (see Examples~\ref{ex:DAA},~\ref{ex:DAA-s} below for a more detailed discussion).

We also need to know whether 
sessions (with the same identity parameters) can be executed
\emph{concurrently} or only \emph{sequentially}. 
For instance, let us assume that the Feldhofer protocol is used in an access
control scenario where all tags that are distributed to users have
pairwise distinct identities. Assuming that tags cannot be cloned,
it is probably more realistic to consider that a tag can be 
involved in at most one session at a particular time, \ie a tag may
run different sessions but only in sequence. Such a situation will
also occur in the e-passport application where a 
same passport cannot
be involved in two different sessions of the BAC protocol (resp. PACE protocol)
concurrently.
This is the purpose
of the components~$\dag_I$ and~$\dag_R$ in the following definition.
When one role has no identity parameter, we also consider both cases:
  whether the only identity instantiating this role may have concurrent sessions
  or only sequential sessions.
Moreover, we require that the process~$P_\Pi$ which models a single session of the protocol can produce an honest trace.

\begin{definition}
  \label{def:proto}
A protocol $\Pi$ is a tuple $(\vect{k}, \nI, \nR, \dag_I,
\dag_R, \ini, \res)$
where $\vect k$,
$\nI$, $\nR$ are three disjoint sets of names, $\ini$ 
(resp.~$\res$) is an initiator (resp.\ responder) role such that
$\fn(\ini) \subseteq \vect k \sqcup \nI$, $\fn(\res) \subseteq
\vect k \sqcup \nR$, and $\dag_I, \dag_R \in \{\rep, \rec\}$.
Labels of $\ini$ and $\res$ must be pairwise distinct.
Names $\vect k$ (resp.\ $\nI\sqcup\nR$) 
are called {\em identity parameters} (resp. {\em session parameters}).

Given a protocol $\Pi$,
we define $P_\Pi := \new \vect k. (\new \nI. \ini \mid \new \nR. \res)$
and we assume that ${P_\Pi \lrstep{\tr_h} (\emptyset; \phi_h)}$
for some frame~$\phi_h$ 
and some trace $\tr_h$ that is
honest for $\phi_h$.
\end{definition}

Given a protocol $\Pi$, we also associate another process
  $\pM$ that represents the situation where the protocol
  can be executed by an arbitrary number of identities, with the possibility
  of executing an arbitrary number of sessions for a given identity.
  The formal definition differs slightly
  depending on
  whether identity parameters occur in both roles or only in the role $\ini$
  (resp.\ $\res$).

\begin{definition}
\label{def:systemM}
Given a protocol $\Pi =(\k, \nI, \nR, \dag_I, \dag_R, \ini, \res)$, the process $\mathcal{M}_\Pi$ is defined
as follows:
\begin{itemize}
\item If $\vect k \cap \fn(\ini) \neq \emptyset$
and $\vect k \cap \fn(\res) \neq \emptyset$, then 
$\pM = \; \rep \; \new \vect k. (\dag_I\; \new \nI. \ini \mid
\dag_R\; \new \nR.\res)$;
\item  If  $\vect k \cap \fn(\ini) = \emptyset$
and $ \vect k \cap \fn(\res) \neq \emptyset$, then
$\pM = \dag_I\; \new \nI. \ini \;\mid \; \rep \; \new \vect k. \; \dag_R\; \new \nR.\res$.
\end{itemize}
\end{definition}

For the sake of simplicity, in case identity parameters only occur in one role,
we assume that this role is the responder role. The omitted case where
identity parameters occur only in the initiator role is very much similar.
In fact, swapping the initiator and responder roles can also be formally achieved
by adding an exchange of a fresh nonce at the beginning of
the protocol under consideration.
Note that the case where both $\vect k \cap \fn(\ini)$ and $\vect k
\cap \fn(\res)$ are empty means that no identity parameters are involved and
therefore there is no issue regarding privacy. As expected, in such a
  situation, our definitions of unlinkability and anonymity (see
  Definition~\ref{def:anonymity} and Definition~\ref{def:un}) will be trivially satisfied. 

\begin{example} 
\label{ex:protocol-FH}
Let $\Pi_\Feldhofer = (k, n_I, n_R, \rep, \rep,  P_I, P_R)$ with~$P_I$ and~$P_R$  as defined in
Example~\ref{ex:process} (up to the addition of a trivial conditional).
Let $P_\Feldhofer \; = \; \new k. (\new n_I. P_I \mid \new n_R. P_R)$, 
$\tr_h = \tr$ (up to the addition of an action $\taut$), and $\phi_h = \phi_0$
as defined in Example~\ref{ex:execution}. They satisfy the requirements stated in
Definition~\ref{def:proto}, and therefore $\Pi_\Feldhofer$ is a protocol
according to our definition. For this protocol, the identity parameter~$k$ occurs
both in the role $P_I$ and $P_R$, and therefore we have that
$\mathcal{M}_{\Pi_\Feldhofer} = \; !\,\new k.(! \;\new n_I.P_I
\;\mid\; ! \; \new n_R.P_R)$.
\end{example}


\newcommand{\kdf}{\mathsf{kdf}}
\newcommand{\dihe}{\mathsf{dh}}
\newcommand{\gconst}{\mathsf{g}}
\newcommand{\Toy}{\mathsf{Toy}}

\begin{example}
\label{ex:toyseq}
In order to illustrate our method and the use of the repetition
  operator, we introduce a toy protocol which, as we shall see later,
  satisfies unlinkability only when sessions of the initiator role are
  executed sequentially.
Using Alice \& Bob notation, this
protocol can be described as follows:
$$
  \begin{array}{rrl}
    1.& T\to R : & n_T\\
    2. &R  \to  T: & n_R\\
    3. &T  \to  R: & \mac(\langle n_R,n_T\rangle,k)\\
   4. &R  \to  T: & \mac(\langle n_T, n_R\rangle,k)
  \end{array}
$$

The protocol is between a tag~$T$ (the initiator) and a reader~$R$
(the responder) which share a symmetric key~$k$. To avoid an obvious
reflection attack (where $n_T=n_R$), we assume that the tag systematically checks that
the first message it receives is not the one he sent initially.

To formalise such a protocol, we consider 
$$
\Sigma_c = \{\mac, \langle \;\rangle, \ok, \mathsf{yes}, \mathsf{no}\},
\mbox{and } \Sigma_d = \{\proj_1, \proj_2, \mathsf{eq}, \mathsf{neq}\}.
$$
All symbols of the signature are public:
$\ok$, $\mathsf{yes}$, and $\mathsf{no}$ have arity $0$;
$\proj_1$ and $\proj_2$ have arity~$1$;
other function symbols have arity~$2$. 
The destructors $\proj_1,\proj_2,\mathsf{eq}$ and $\mathsf{neq}$
are defined as in \Cref{ex:rewriting,ex:neq}.
The symbol $\mac$ will be used to model
message authentication code.

The processes modelling the initiator and the responder roles are as
follows:
$$ 
\begin{array}{ll}
    P'_T  := & \Out(c_T, n_T).\\
    &\In(c_T, x_1).\; \\
    & \Let \; x_\mathsf{test} = \mathsf{eq}(\mathsf{yes},\mathsf{neq}(x_1,n_T)) \; \In\\
    & \Out(c_T, \mac(\langle x_1, n_T\rangle, k)). \In(c_T,x_2).\\
    & \Let  \; x'_\mathsf{test} = \eq(x_2,\mac(\langle n_T, x_1\rangle,k)) \; \In \; 0\\
  \end{array}
\;\;
 \begin{array}{ll}
    P'_R  := & \In(c_R, y_1). \\
    &\Out(c_R,  n_R).\\
   & \In(c_R, y_2). \\
& \Let  \; y_\mathsf{test} = \eq(y_2,\mac(\langle n_R, y_1\rangle, k))
                        \; \In\\
& \Out(c_R, \mac(\langle y_1, n_R\rangle,k)). \; 0
  \end{array}
$$

We may note that different choices can be made. For instance, we may
decide to replace the process~$0$ occurring in~$P'_T$ with
$\Out(c_T,\ok)$ to make the outcome of the test manifest.
  The tuple $\Pi^{\rec}_\Toy = (k, (n_T), (n_R), \rec,
\rep, P'_T, P'_R)$ is a protocol according to
Definition~\ref{def:proto}. For this protocol, we have again that
$k$ occurs both in the roles $P'_T$ and $P'_R$, and therefore:
$$\mathcal{M}_{\Pi^{\rec}_\Toy} =\; !\,\new k.\big(
\rec \new n_T. P'_T
\;\mid\;
\rep \;\new n_R. P'_R\big).$$
\end{example}

As a last example,
we will consider one for which identity parameters only 
occur in one role. This example can be seen as a simplified
version of the Direct Anonymous Attestation (DAA) sign protocol that will be detailed in
Section~\ref{sec:casestudies}.

\begin{example}
  \label{ex:DAA}
  We consider a simplified version of the protocol DAA sign (adapted from~\cite{smyth2015formal}).
  Note that a comprehensive analysis of the protocol DAA sign (as well
  as the protocol DAA join) will be  conducted in
  \Cref{sec:casestudies}. Before describing the protocol itself,
  we introduce the term algebra that will allow us to model the signature
  and zero knowledge proofs used in that protocol. We consider:
  \begin{itemize}
  \item 
    $\Sigma_c=\{\sign,\zk,\pk,\langle \; \rangle, \tuple,\ok, \skIssuer, \error\}$, and
  \item 
    $\Sigma_d=\{\checksign, \checkzk,\publiczk, \proj_1, \proj_2, \proj^4_1, \proj^4_2, \proj^4_3,
    \proj^4_4\}$.
  \end{itemize}
  We consider the computation relation induced by the empty set of
  equations, and the rules:
  $$  \begin{array}{lcr}
        \checksign(\sign(x,y),\pk(y)) \;\to\; x &\hspace{1cm}\;\;\;\;\;\;\;\;\;\;&\proj_i(\langle y_1,y_2\rangle) \;\to \; y_i \mbox{\;\;\;\;\;$i \in \{1,2\}$}\\[1mm]
        \multicolumn{3}{l}{\checkzk(\zk(
        \sign(\langle x_k,x_\monid\rangle, z_\skex), x_k,
        \tuple(y_1,y_2,y_3,\pk(z_\skex))
        )   ) \; \to\; \ok} \\[1mm]
        \publiczk(\zk(x,y,z)) \;\to\; z &&\proj^4_i(\tuple(y_1,y_2,y_3,y_4)) \; \to \; y_i \mbox{\;\;\;\;\;$i \in
                                                   \{1, 2, 3, 4\}$}
\end{array}
      $$

      The protocol is between a client~$C$ (the responder) and a
      verifier~$V$ (the initiator).
      The client
      is willing to sign a 
      message~$m$ using a credential issued by some issuer
      and then he has to convince~$V$ that the latter signature is genuine.
      The client~$C$ has a long-term secret key~$k_C$, an
      identity~$\monid_C$, and some 
      credential $\cred_C=\sign(\langle k_C,\monid_C \rangle,\skIssuer)$ issued by some issuer~$I$
      having $\skIssuer$ as a long-term signature key.
      Such a credential would be typically obtained once and for all through a protocol similar to DAA join.
      We give below an Alice \& Bob description of the protocol: 
      $$\begin{array}{rrl}
         1.& V \to C: & n_V\\
         2.& C \to V: & \zk(
                        \cred_C, k_C,
                        \tuple(n_V,n_C,m,\pk(\skIssuer))
                        ) \\
       \end{array}$$
The verifier starts by challenging the client with a fresh nonce,
  the latter then sends a complex zero-knowledge proof bound to this challenge proving that he knows
  a credential from the expected issuer bound to the secret~$k_C$ he knows.
Before accepting this zero-knowledge proof, the verifier~$V$ (i) checks  the validity of the zero-knowledge proof using the 
$\checkzk$ operator, and (ii) verifies that this proof is bound to the challenge $n_V$ and to the public key
of~$I$ using the $\publiczk$ operator.
The processes $P_C$ and $P_V$  are defined as follows:

\smallskip{}

$\begin{array}{rcl}
P_V  &:=&   \Out(c_V,n_V).\\
&&\In(c_V,x_1). \\
&&\Let \; x_2, \;x_3, \; x_4 = \\
&&\hfill \eq(\checkzk(x_1),\ok), \;
   \eq(\proj^4_1(\publiczk(x_1)),n_V), \;
   \eq(\proj^4_4(\publiczk(x_1)),\pk(\skIssuer))\; \In \; 0\\
&&\Else \; \Out(c_V, \error)
\end{array}$

\smallskip{}

$\begin{array}{rcl}
P_C &:=& \In(c_R,y_1). \\
&&\Out(c_R,\zk(\sign(\langle k_C, \monid_C\rangle, \skIssuer),k_C,\tuple(y_1, n_C, m, \pk(\skIssuer)))).\\
\end{array}$

\smallskip{}
  
This protocol falls in our class, the two parties being the verifier $P_V$ and the client $P_C$.
The protocol DAA sign, and the simplified version we consider here,
has been designed to provide privacy (\ie unlinkability and anonymity as defined in Section~\ref{sec:properties})
to users (\ie clients) inside a group associated to a single issuer.
In other words, the privacy set~\cite{pfitzmann2001anonymity} that is typically considered is the set of users
who obtained a credential from a single, given issuer.
Therefore, as we are interested in modelling different clients having credentials signed by the same issuer,
we model $\skIssuer$ as a private constant in $\Sigma_c\cap\Sigma_\priv$ rather than as an identity parameter
(we explore this different modelling choice in Example~\ref{ex:DAA-s}).

The tuple $\Pi_\DAA = ((k_C,\monid_C), n_V, (n_C, m), \rep,
\rep, P_V, P_C)$ is a protocol according to our
Definition~\ref{def:proto}. We have that $k_C$ or $\monid_C$ only occur in $P_C$,
and therefore following Definition~\ref{def:systemM}, we have that:
$$\mathcal{M}_{\Pi_\DAA}= \big({!}\; \new n_V. P_V \big) \;\mid \;  \big(!\;\new (k_C,\monid_C). \; !\; \new (n_C,m). P_C\big)$$
  This models infinitely many different clients who obtained credentials from
  a single issuer having the signature key $\skIssuer$.
  Any of those clients may take part to
  infinitely many sessions of the protocol
  with any verifier associated to that issuer,
  which executes always the same role (he has no proper identity).
We consider here a scenario where sessions can be executed
concurrently (both for clients and verifiers).
%
%
  We shall see that our verification methods allows one to automatically prove that privacy is preserved
  in this scenario.
\end{example}

\begin{example}[Continuing Example~\ref{ex:DAA}]
  \label{ex:DAA-s}
   The flexibility of our notion of protocol allows
    for a subtly different scenario to be analyzed by considering
    $\skIssuer$ as being identity-specific (\ie as an identity parameter)
    instead of being uniform for all identities (\ie private constant).
    Privacy would then be considered between users associated to
    different issuers (the privacy set being all users);
    this is a stronger property that the protocol is not expected to meet.
    Indeed, a verifier sends ZK proofs whose public parts contain the public
    key of its credential issuer.
We still consider the two parties $P_V$ and $P_C$ but
    we now model $\skIssuer$ as an identity parameter. Therefore, we remove
    $\skIssuer$ from $\Sigma_c$ defined in Example~\ref{ex:DAA}.
    The tuple $\Pi_\DAA = ((\skIssuer,k_C,\monid_C), n_V, (n_C, m), \rep,
    \rep, P_V, P_C)$ is a protocol according to our
    Definition~\ref{def:proto}. We have that $\skIssuer$ occurs both in $P_C$
    and $P_V$,
    and therefore following Definition~\ref{def:systemM}, we have that:
    $$\mathcal{M}_{\Pi_\DAA^{\skIssuer}}= {!}\;\new (\skIssuer,k_C,\monid_C). \; \big(
    {!}\; \new n_V. P_V
    \;\mid \;
    {!}\; \new (n_C,m). P_C\big)$$
    This models
    (i) infinitely many different clients who obtained pairwise different
    credentials from infinitely many issuers having pairwise different signature
    keys $\skIssuer$, and,
    (ii) infinitely many different verifiers who check for credentials that have been signed by pairwise different issuers.
    Any of those clients and verifiers may take part to
    infinitely many sessions of the protocol.
    We consider here a scenario where users and verifiers sessions can be executed
    concurrently.
    Note however that only one user per group (associated to a single issuer) is considered. Relaxing this constraint would require
    a more generic notion of protocols with 3 parties (this limitation will be discussed in Section~\ref{subsec:class}).
  %
    Even for such a weaker scenario,
    we shall see that our verification methods allows one to find that privacy (unlinkability and anonymity) is
    already broken (see Figure~\ref{fig:summary-running} and details in Section~\ref{sec:DAA-join-sign:sign}).
\end{example}

\paragraph{Discussion about shared and non-shared protocols.}
As mentioned earlier and shown in \Cref{def:systemM}, we distinguish two cases depending on whether \emph{(i)} both roles use
identity parameters (\ie when $\fn(\ini)\cap\vect{k}\neq\emptyset$ and $\fn(\res)\cap\vect{k}\neq\emptyset$)
or \emph{(ii)} only one role uses identity parameters (\ie when
$\fn(\ini)\cap\vect{k}=\emptyset$ and
$\fn(\res)\cap\vect{k}\neq\emptyset$, the other case being symmetrical).
The case \emph{(i)} corresponds to the case where we should consider
an arbitrary number of users
for each role, whereas regarding case \emph{(ii)} it is sufficient to
consider an arbitrary number of users for role~$\res$ only.
In addition to this distinction, note that there are two 
different kinds of protocols that
lie in class \emph{(i)}:
\begin{enumerate}
\item[\emph{(i-a)}] The \emph{shared case} when
  $\fn(\ini)\cap\fn(\res)\neq\emptyset$. In such a situation,
roles $\ini$ and $\res$ share names in $\fn(\ini)\cap\fn(\res)$. 
In practice, this shared knowledge may have been established in various ways
such as by using
prior protocols, using another communication channel (\eg optical scan of
a password as it is done with e-passports, use of PIN codes)
or by retrieving the identity from a database that matches the first received message
as it is often done with RFID protocols.
For such protocols, it is expected that an initiator user and a responder user
can communicate successfully producing an honest execution {\em only if}
they have the same identity (\ie they share the same names $\vect{k}$).
\item[\emph{(i-b)}] The \emph{non-shared case} when
$\fn(\ini)\cap\fn(\res)=\emptyset$.
In such a case, both roles do not share any specific prior knowledge, and it is
therefore expected that an initiator and a responder 
can communicate successfully producing an honest execution whatever
their identities.
\end{enumerate}
Unlinkability and anonymity will be uniformly expressed for the cases \emph{(i-a)} and \emph{(i-b)}
but our sufficient conditions will slightly differ depending on the
case under study.

\newcommand{\defe}{:=}
\newcommand{\sint}{\lrstep}

\section{Modelling security properties}
\label{sec:properties}

This section is dedicated to the definition of the security properties we seek 
to verify on protocols: unlinkability and anonymity.
Those properties are defined using the notion of {\em trace equivalence} which relates
indistinguishable processes.


\subsection{Trace equivalence}
\label{subsec:trace-equiv}

 Intuitively, two configurations are trace equivalent if an attacker
 cannot tell whether he is interacting with one or the other.
 Before formally defining this notion, we first introduce a notion of equivalence 
 between  frames, called \emph{static equivalence}.

\smallskip{}

\begin{definition}
A frame $\phi$ is \emph{statically included} in $\phi'$
when $\dom(\phi) = \dom(\phi')$, and
\begin{itemize}
\item for any recipe $R$
  such that $R\phi \redc u$ for some $u$,
 we have that $R\phi'\redc u'$ for some $u'$;
\item for any recipes $R_1,R_2$ such that
  $R_1\phi\redc u_1$, $R_2\phi\redc u_2$, and $u_1 \theo
  u_2$, we have that $R_1\phi'\redv =_\E R_2\phi'\redv$, \ie there exist $v_1, v_2$ such that
  $R_1\phi'\redc v_1$, $R_2\phi'\redc v_2$, and $v_1 =_\E v_2$.
\end{itemize}
Two frames $\phi$ and $\phi'$ are in \emph{static equivalence}, written $\phi
\sim \phi'$, if the two static inclusions hold.
\end{definition}

\smallskip{}

Intuitively, an attacker can distinguish two frames if he is able to
perform some computation (or a test) that succeeds in~$\phi$ and
fails in~$\phi'$ (or the converse).

\begin{example}
 \label{ex:static}
Let $\phi_0$ be the frame given in Example~\ref{ex:execution},
we have that 
$\phi_0 \sqcup \{w_4\mapsto k'\} \not\sim \phi_0 \sqcup
\{w_4 \mapsto k''\}$.
An attacker may observe a difference relying on the computation $R =
\sdec(w_2,w_4)$.
\end{example}

Then, \emph{trace equivalence} is the active counterpart of static
equivalence, taking into account the fact that the attacker may
interfere during the execution of the process.
In order to define this, we first introduce $\trace(K)$
for a configuration $K = (\p;\phi)$:
$$
\trace(K) = \{(\tr,\phi') ~|~ (\p,\phi) \lrstep{\tr} (\p'; \phi') 
 \mbox{ for some configuration $(\p'; \phi')$}\}.
$$

\begin{definition}
  Let $K$ and $K'$ be two configurations. We say that $K$ is \emph{trace
  included} in $K'$, written $K \sqsubseteq K'$, when,
  for any $(\tr,\phi) \in \trace(K)$ 
  there exists $(\tr', \phi') \in \trace(K')$ such
  that $\obs(\tr') = \obs(\tr)$
  and $\phi \sim \phi'$.
  They are in \emph{trace equivalence}, written $K \approx K'$, 
  when $K \sqsubseteq K'$ and $K' \sqsubseteq K$.
\end{definition}

\begin{example}
\label{ex:trace-equiv}
Resuming \Cref{ex:protocol-FH}, 
we may be interested in checking whether the configurations
$K = (! P_{\Pi_\Feldhofer}; \emptyset)$ and 
$K'= (\mathcal{M}_{\Pi_\Feldhofer};  \emptyset)$
are in trace equivalence.
This equivalence  models the fact that
$\Pi_\Feldhofer$ is unlinkable: each session of the protocol appears to an attacker as if it has been
initiated by a different tag, since a given tag can perform at most one 
session in the idealised scenario $K$. This equivalence actually
holds.
It is non-trivial, and cannot be established using
existing verification tools such as \proverif~or \tamarin. The
technique developed in this paper will notably allow one to establish it
automatically.
\end{example}


\subsection{Security properties under study}

In this paper, we focus on two privacy-related properties, namely
\emph{unlinkability} and \emph{anonymity}.


\subsubsection{Unlinkability}
According to the ISO/IEC standard 15408~\cite{ISO15408}, unlinkability aims at
ensuring that a user may make multiple uses of a service or a resource
without others being able to link these uses together. In terms of our
modelling, a protocol preserves unlinkability if any two sessions of a
same role look to an outsider as if they have been executed with
different identity names. In other words, an ideal version of the
protocol with respect to unlinkability, allows the roles $\ini$ and
$\res$ to be executed at most once for each identity names. An outside
observer should then not be able to tell the difference between the
original protocol and the ideal version of this protocol.

In order to precisely define this notion, we have to formally define
this ideal version of a protocol $\Pi$. This ideal version, denoted
$\mathcal{S}_\Pi$, represents an arbitrary number of agents that can
at most execute one session each. Such a process is obtained from
$\mathcal{M}_\Pi$ by simply removing the symbols $\rep$ and $\rec$
that are in the scope of identity names. Indeed, those constructs
enable each identity to execute an arbitrary number of sessions (respectively
concurrently and sequentially).
Formally, depending on whether identity
names occur in both roles, or only in the responder role, 
this leads to slightly different definitions.

\begin{definition}
Given a protocol $\Pi = (\vect k, \vect n_I, \vect n_R, \dag_I,
\dag_R, \ini, \res)$, the process $\mathcal{S}_\Pi$ is defined as
follows:
\begin{itemize}
\item If $\vect k \cap \fn(\ini) \neq \emptyset$ and $\vect k \cap
  \fn(\res) \neq \emptyset$, then
  $\pS \defe\; !\; \new \vect k.(\new \vect n_I.\ini \; \mid \; \new \vect n_R. \res)$;
\item  If $\vect k \cap \fn(\ini) = \emptyset$ and $\vect k \cap
  \fn(\res) \neq \emptyset$, then
  $\pS \defe \dag_I\; \new \vect n_I. \ini \; \mid \; \rep \;\new \vect k. \new \vect n_R. \res$.
\end{itemize}
\end{definition}

Unlinkability is defined as a trace equivalence between $\pS$ (where each identity can
execute at most one session) and $\pM$ (where each identity can execute an arbitrary number
of sessions).

\begin{definition}
\label{def:un}
A protocol $\Pi = (\vect k, \vect n_I, \vect n_R, \dag_I, \dag_R, \ini, \res)$
  ensures \emph{unlinkability} if
$\mathcal{M}_\Pi \eint \mathcal{S}_\Pi$.
\end{definition}

\begin{example}
\label{ex:un-FH}
Going back to our running example (\Cref{ex:protocol-FH}), unlinkability is expressed through
the equivalence given in Example~\ref{ex:trace-equiv} and recalled below:
$$
!\; \new k. (!\; \new n_I. P_I \; \mid \; !\; \new n_R. P_R)
\; \eint\;
!\; \new k. ( \new n_I. P_I \; \mid \; \new n_R. P_R).
$$

This intuitively represents the fact that the real situation where a
tag and a reader may execute many sessions in parallel is
indistinguishable from an idealized one where a given tag and a given
reader can execute at most one session for each identity.
\end{example}

Although unlinkability of only one role (\eg the tag
for RFID protocols) is often considered in the literature
(including~\cite{arapinis-csf10}), we consider a stronger notion here since both
roles are treated symmetrically. As illustrated through the case
studies developed in Section~\ref{sec:casestudies} (see
Sections~\ref{subsec:lak} and~\ref{subsec:pace}), this is actually needed to not miss
some practical attacks.

\begin{example}
\label{ex:toyseq-att}
We consider the variant of the toy protocol described in
\Cref{ex:toyseq} where concurrent sessions are authorised for the
initiator:
  $\Pi^{!}_\Toy \defe (k, (n_T), (n_R), \rep, \rep, P'_T, P'_R)$.
We may be interested in checking unlinkability as in
Example~\ref{ex:un-FH}, i.e. whether the following equivalence
holds or not:
$$
!\,\new k.\big(! \;\new
n_T. P'_T
\;\mid\; ! \; \new n_R. P'_R\big)
\eint\;\; 
!\,\new k.\big(\new n_T. P'_T
\;\mid\; \new n_R. P'_R\big)
$$
Actually, this equivalence does not hold. When concurrent sessions are
authorised, the following scenario is possible: two tags of the
same identity can start a session.
Then, the attacker just forwards messages from one tag to the
other. They can thus complete the protocol. In particular,  the
mac-key verification stage goes well and the attacker observes that
the last conditional of the two tags holds. 
Such a scenario
(which is possible on the left-hand side of the equivalence) cannot be mimicked
on the right-hand side (each tag can
execute only once).
 Therefore we have a trace that can only be executed by the multiple 
  sessions process: the equivalence does not hold.

However, we shall see that the original toy protocol of
\Cref{ex:toyseq} (with sessions of the initiator running sequentially only) can be shown
unlinkable using the technique developed in this paper.
Formally, the following equivalence holds:
$$
 !\,\new k.(\new n_T. P'_T
\;\mid\; \new n_R. P'_R)
\;\eint\;  !\,\new k.(\rec \;\new
n_T. P'_T
\;\mid\; \rec \; \new n_R. P'_R)
$$

\end{example}

\subsubsection{Anonymity}

According to the ISO/IEC standard 15408~\cite{ISO15408}, anonymity aims at
ensuring that a user may use a service or a resource without
disclosing its identity. In terms of our modelling, a protocol
preserves anonymity of some identities $\vect \id \subseteq \vect k$,
if a session executed with some particular
(public) identities $\vect\idzero$ looks to
an outsider as if it has been executed with different identity
names. In other words, an outside observer should not be able to tell
the difference between the original protocol and  a version of the
protocol where the attacker knows that specific roles $\ini$ and $\res$ with
identities $\idzero$ (known by the attacker) are present.

\begin{definition}
Given a protocol $\Pi = (\vect k, \nI, \nR, \dag_I,
  \dag_R, \ini, \res)$, and $\vect \id \subseteq \vect k$, the process 
  $\pMa$ is
defined as follows:
\begin{itemize}
\item  If $\vect k \cap \fn(\ini) \neq\emptyset$ and $\vect k \cap
  \fn(\res) \neq \emptyset$, then ${\mathcal{M}_{\Pi,\vect \id} \defe
  \mathcal{M}_\Pi \; \mid \; \new \vect k. (\dag_I\, \new
  \nI. \ini_0 \; \mid \; \dag_R\, \new \nR. \res_0)}$.
\item If $\vect k \cap \fn(\ini) = \emptyset$ and $\vect k \cap
  \fn(\res) \neq \emptyset$, then $\mathcal{M}_{\Pi,\vect \id} \defe
  \mathcal{M}_\Pi \; \mid \; \new \vect k.\,\dag_R\, \new \nR. \res_0$.
\end{itemize}
where $\ini_0  =\ini\{\vect \id \mapsto \vect \idzero\}$ and $\res_0 =
\res\{\vect \id \mapsto \vect \idzero\}$ for some fresh public constants
$\vect\idzero$.
\end{definition}

\begin{definition}
\label{def:anonymity}
Let $\Pi = (\vect k,\nI, \nR, \dag_I, \dag_R, \ini,
\res)$, and $\vect \id \subseteq \vect k$. We say that $\Pi$ ensures
\emph{anonymity w.r.t. $\vect \id$} if $\mathcal{M}_{\Pi, \vect \id} \eint \mathcal{M}_{\Pi}$. 
\end{definition}

\begin{example}
\label{ex:DAA-anonymity}
Going back to Example~\ref{ex:DAA}, anonymity w.r.t.\ identity of the client (\ie $\monid_C$)
is expressed through the following equivalence:
$$
\begin{array}{l}
  !\; \new n_V. P_V \, \mid \, (! \;
\new (k_C,\monid_C). \; !\; \new (n_C, m). P_C) 
\, \mid \,
  (\new (k_C,\monid_C). \rep \; \new (n_C, m). P_C\{\monid_C \mapsto \mathsf{id_0}\}) \\
\; \eint \; 
  !\; \new n_V. P_V \;\mid \; (! \;
\new (k_C,\monid_C). \; !\; \new (n_C, m). P_C) 
\end{array}
$$
This intuitively represents the fact that the situation in which a
specific client with some known identity~$\mathsf{id_0}$ may execute
some sessions is indistinguishable from a situation in which this
client is not present at all. Therefore, if these two situations are
indeed indistinguishable from the point of view of the attacker, it
would mean that there is no way for the attacker to deduce whether a
client with a specific identity is present or not. 
\end{example}

\subsection{Discussion}
\label{sec:priv:prop:discu}




{
The notion of strong unlinkability that we consider is inspired
by~\cite{arapinis-csf10}. In this paper, the authors first propose
a definition of \emph{weak unlinkability} that is not expressed
via a process equivalence, then they give a notion of
\emph{strong unlinkability} that implies the former notion and
is expressed via a labelled bisimilarity. The authors argue
that, compared to weak unlinkability, the strong variant is
too constraining but has the advantage of being more amenable to
verification. The first claim is based on an example
protocol~\cite[Theorem 1]{arapinis-csf10} where a reader emits
an observable ``beep'' when it sees the same tag twice, which
breaks strong unlinkability but not weak unlinkability. 
Unlike the authors of~\cite{arapinis-csf10},
we do not consider this to be a spurious attack, but a potentially threatening
linkability issue.
The second claim is only substantiated by the fact that tools exist
for automatically verifying bisimilarities. As discussed before,
this is not sufficient. Moreover, there might be spurious attacks on
strong unlinkability just because bisimilarity is a very restrictive
equivalence:
for this reason
we would also consider that the strong unlinkability of~\cite{arapinis-csf10}
is too strong, and instead advocate for our variant based on trace 
equivalence.}


We now show formally that, in the setting that we consider,
our notion of unlinkability (\cref{def:un}) indeed corresponds to
the strong unlinkability of~\cite[Definition 12]{arapinis-csf10}
where trace equivalence is required rather than bisimilarity.
In this original formulation, strong unlinkability is a property of one
specific role and not of the whole protocol. As a more technical difference,
protocols in \cite{arapinis-csf10} may involve more than two roles,
and agents may use private channels. In practice, this is used to
communicate honest identities in setup phases, as is the case
in their BAC case study. If we specialise the setting of
\cite{arapinis-csf10} to two roles $R$ and $T$ (for Reader and Tag) which
fall into the format of \cref{def:priv:role:grammar} and do not use the
distinguished private channel $c$, strong unlinkability
of the tag role $T$ corresponds to the following
labelled bisimilarity:
\begin{equation} \label{eqn:unlink1} \begin{array}{rl}
  &
\new c.~ \bigl(
  (\rep\; \new \vect k.~
   {\rep\;}
     \Out(c,\vect k).
     \new \vect n_T. T)
  \mid
  (\rep\;
     \In(c,\vect k).
     \new \vect n_R. R)
\bigr)
  \\ \approx_\ell\phantom{x} &
\new c.~ \bigl(
  (\rep\; \new \vect k.~
   \phantom{\rep\;}
     \Out(c,\vect k).
     \new \vect n_T. T)
  \mid
  (\rep\;
     \In(c,\vect k).
     \new \vect n_R. R)
\bigr)
\end{array} \end{equation}
As communications on channel $c$ are private, \cref{eqn:unlink1}
is equivalent to:
\begin{equation} \label{eqn:unlink2}
\rep\;
  \new \vect k.~ \bigl(
  (\rep\;
   \new \vect n_T. T)
  \mid
  (\rep\;
   \new \vect n_R. R)
\bigr)
  \quad \approx_\ell \quad
\rep\;
  \new \vect k.~ \bigl(
  (
   \new \vect n_T. T)
  \mid
  (
   \new \vect n_R. R)
\bigr)
\end{equation}
The key observation here is that, even though we had only removed replication for the tag role (on the 
right of \cref{eqn:unlink1}), replications are
removed for both tags and readers in \cref{eqn:unlink2} because communications 
on $c$ are linear (\ie can be triggered only once).
Thus, in this particular case, the only difference between
strong unlinkability and our unlinkability is that we rely on
trace equivalence rather than labelled bisimilarity.


\medskip

Several other definitions of unlinkability have been proposed in the 
literature (see, \eg~\cite{bruso2012linking,bruso2014dissecting} for a 
comparison). In particular, 
various game-based formulations have been considered, both in
the computational and symbolic models.
We first discuss the most common kind of games,
called \emph{two-agents games} in \cite{bruso2012linking} and seen
e.g.\ in~\cite{backes2008zero,juels2009defining,ck10csf}. As we
shall see, these games can be accurately verified through diff-equivalence,
but systematically miss some linkability attacks.
We will not need any formal definition, but simply rely on the general idea
behind these games, which run in two phases:
\begin{enumerate}
\item \emph{Learning phase:}
During this phase, the attacker can trigger an arbitrary number of 
sessions of the two roles (namely tag and reader) with the identity of
his choice. This allows him to gain some knowledge.
Eventually, the attacker chooses to end the learning
phase and enter the second phase.~
\item \emph{Guessing phase:}
The challenger
chooses an identity~$x$ among two distinguished identities
$\id_1$ and~$\id_2$.
The attacker is allowed to interact again (an arbitrary number of times)
with roles of $x$, or of identities other than $\id_1$ and~$\id_2$.
\end{enumerate}
The attacker wins the game if he can infer whether $x$ is $\id_1$ or
$\id_2$, \emph{i.e.} if he is able to  
distinguish between these two scenarios.
The following example shows that these two-agent games miss some
linkability attacks, and do not imply unlinkability in our sense for 
this reason.

\begin{example} \label{ex:concurrent-attack}
We consider a protocol between a tag $T$ and a reader $R$ sharing 
a
symmetric key~$k$. We consider that sessions can be executed in
parallel, and
we assume that $T$ aborts in case the
nonce $n_R$ he receives is equal to the nonce $n_T$ he sent
previously (in the same session).
$$  \begin{array}{rrll}
    1.& T \to R: & \{n_T\}_k & \\
    2.& R \to T: & \{n_R\}_k & \\ 
    3.& T \to R: & \{n_R\oplus n_T\}_k &
  \end{array}$$

We consider the term algebra introduced in
Example~\ref{ex:signature}, and the equational theory introduced in
Example~\ref{ex:xor} with in addition the equation
$\sdec(\senc(x,y),y) \; = \; x$.
To show that the property formally stated in Definition~\ref{def:un}
does not hold, consider the following scenario.
$$
\begin{array}{l}
 1. \; T \to R: \; \{n_T\}_k   \\[-1mm]
\hspace{4.5cm} 1'. \; T' \to R: \; \{n'_T\}_k  \\[-1mm]
    2.\; I(R) \to T: \; \{n'_T\}_k  \\[-1mm]
    \hspace{4.5cm} 2'. \; I(R) \to T': \; \{n_T\}_k \\[-1mm]
3. \; T \to R: \; \{n'_T \oplus n_T\}_k \\[-1mm]
\hspace{4.5cm} 3'. \; T' \to R: \; \{n_T \oplus n'_T\}_k
\end{array}$$

A same tag starts two sessions\footnote{This is possible if different {\em physical} tags
share the same identity, as may be the case e.g.\ in access control scenarios. In such cases,
two different {\em physical} tags may run sessions concurrently.}
and therefore generates two nonces
$n_T$ and $n'_T$. The attacker answers to these requests by sending
back the two encrypted messages to the tag who will accept both of
them, and sends on the network two messages that are actually equal
(the exclusive or operator is commutative).
Therefore the attacker observes a test, namely the equality between the
last two messages, which has no counterpart in the single session
scenario. Therefore, this protocol does not ensure unlinkability.
In practice, this can be very harmful. Suppose, for example,
that tags are distributed among distinct groups (\eg for access control 
policies) sharing each the same key~$k$.
By interacting with two tags, the attacker would then be able to know
if they belong to the same group and thus be able to trace groups.
\end{example}

The previous example illustrates a general phenomenon: two-agent games do not 
capture concurrent attacks. This is also seen with the protocol of
\Cref{ex:toyseq}, which suffers from the attack shown in
\Cref{ex:toyseq-att}, but is secure in the sense of
two-agent games --- this can actually be proved in \proverif because the 
protocol does not involve the exclusive-or primitive.
Due to this general weakness, two-agent games do not adequately express 
unlinkability. They are however convenient for automation, as they
can be directly and accurately expressed using the notions of
diff-equivalence available in \proverif or \textsf{Tamarin}.
For instance, two-agent games have been used in~\cite{backes2008zero}
for unbounded sessions of the DAA protocols ---
although in this work the security property expressed in this way
is called pseudonymity rather than unlinkability.

  As pointed out in \cite{bruso2012linking}, \emph{three-agent games} have also been
considered where the challenge phase is
changed as follows: the attacker chooses three tags $(a,a_1,a_2)$ and must
distinguish interactions with several tags including
tags $x$ and $y$ with the same identity as $a$, and interactions
with $x$ and $y$ having the respective identities of $a_1$ and $a_2$.
This allows to capture the attacks described above which the two-agent games 
missed.
Three-agent games have successfully been used in \cite{kostas-csf10}
for automated verification of unlinkability, though only for a restrictive 
class of protocols and for bounded sessions only.

We suspect that three-agent games still miss some linkability attacks, though
counter-example protocols are likely to be artificial.
Further generalisations of these games could then be considered to obtain
stronger security properties, and get closer to our notion of unlinkability.
In any case, it is important to remark that this line of thought fundamentally 
relies on having a centralised reader since the attacker must distinguish 
between scenarios that differ only in the identities of some tags.
This contrasts with our notion of unlinkability, which does not assume a 
centralised reader but treats symmetrically the tag and reader role, more
generally called initiator and responder. Such a symmetric treatment is 
required to model unlinkability when the two parties share a dedicated channel
or have an initial shared knowledge, \eg in secure messaging protocols.
We also argue that our definition has some value even when analysing protocols
featuring centralised readers. In such cases, having reader roles expecting a 
specific identity seems artificial, but it can actually be seen as a way to 
model the successive states of a reader (e.g. in LAK, where the tags and 
reader evolve a common state almost in synchronisation) or a pre-established 
communication (e.g. in BAC or PACE, where an optical scan is performed to 
securely exchange a first secret). In any case, our analysis of the 
aforementioned protocols using our notion of unlinkability has revealed actual 
attacks that were previously unknown (see \Cref{sec:casestudies}).

\section{Our approach}
\label{sec:approach}

We now define our two conditions, namely frame opacity and 
well-authentication, and our result which states that these
conditions are sufficient to ensure unlinkability and anonymity as
defined in Section~\ref{sec:properties}.
Before doing that, we shall introduce annotations in the semantics
of our processes, in order to ease their analysis.


\subsection{Annotations}
\label{subsec:annotations}

We shall now define an annotated semantics whose transitions 
are equipped with more informative actions. The annotated actions will
feature labels identifying  which concurrent process has performed the
action. This will allow us to identify which specific agent (with some specific identity
and session names) performed some action. 

Given a protocol $\Pi =  (\vect{k}, \vect n_I, \vect n_R,
\dag_I, \dag_R, \ini, \res)$ and $\vect{id} \subseteq \vect{k}$, consider any execution of $\pMa$, $\pM$ or $\pS$.
In such an execution, $\tau$
actions are solely used to create new 
agents (\ie instantiations of~$\ini$ and~$\res$ with new names
or constants from~$\idzero$)
by unfolding replications (\ie~$\rep$) or repetitions (\ie~$\rec$),
breaking parallel compositions or choosing
fresh session and identity parameters. 
Actions other than $\tau$
(that is, input, output and conditionals) are then 
only performed by the created agents.
Formally,
we say that an agent is either an instantiation of one of the two roles with
some identity and session parameters, or its continuation after the execution
of some actions.
When $\dag_A = \,!$, agents of role $A$ are simply found at toplevel in the
multiset of processes.
When $\dag_A = \,\rec$, they may be followed by another process.
For instance, in traces of $\pM$ when $\dag_I = \rec$, newly created
initiator agents occur on the left of the sequence in processes of the
form:
$$\ini\{\vect k \mapsto \vect l, \vect n_I \mapsto \vect n\};
 \rec \, \new \vect m.~ \ini\{\vect k\mapsto \vect l, \vect n_I \mapsto \vect m\}.$$

The previous remark allows us to define an \emph{annotated semantics} for our processes of
interest.
We consider \emph{annotations} of the form $\aagent(\vect k, \vect n)$ where
$\aagent \in \{\aini,\ares\}$ and $\vect k$, $\vect n$ are sequences
of names, or constants from~$\vect{\idzero}$.
Annotations are noted with the letter~$a$,
and the set of annotations is noted~$\agents$.
We can then define an annotated semantics, where agents are
decorated by such annotations, indicating their identity and session parameters.
An agent $P$ decorated with the annotation $a$ is written $P[a]$,
and the actions it performs are also decorated with $a$, written $\alpha[a]$.
Note that this includes $\taut$ and $\taue$ actions;
in the annotated semantics, the only non-annotated action is $\tau$.
For instance, let us consider $\pM$ in the annotated semantics
when $\dag_I = \rec$. Newly created initiator agents now appear as
$\ini\{\vect k \mapsto \vect l, \vect n_I \mapsto \vect n\}[I(\vect l, \vect n)];
 \rec \, \new \vect m.~ \ini\{\vect k\mapsto \vect l, \vect n_I \mapsto \vect m\}$;
they execute actions of the form $\alpha[I(\vect l, \vect n)]$
with $\alpha\neq\tau$;
upon termination of the agent, unannotated $\tau$ actions can be executed
to create a new agent annotated $I(\vect l, \vect m)$ for fresh names $\vect m$.
We stress that agents having constants $\vect{\idzero}$ as identity parameters shall be annotated
with some $\aagent(\vect k, \vect n)$ where $\vect{\idzero}\subseteq \vect k$. 
Intuitively, in such a case, we keep in the annotation the information that 
the identity parameters $\vect{\idzero}$ of that agent has been disclosed to the attacker.

Traces of the annotated semantics will be denoted by $\ta$.
We assume\footnote{
  This assumption only serves the purpose of uniquely identifying agents.
  The assumed session nonces do not have to occur in the corresponding roles,
  so this does not require to change the protocol under study.
} that $\vect{n}_I \neq \emptyset$ and $\vect{n}_R \neq \emptyset$,
so that at any point in the execution of an annotated trace, an annotation $a$ 
may not decorate more than one agent in the configuration. Thus, 
an annotated action may be uniquely traced back to the annotated process that performed 
it.
We also assume that labels used to decorate output actions (\ie elements of $\mathcal{L}$)
are added to the produced output actions
so that we can refer to them when needed:
output actions are thus of the form $\ell:\Out(c,w)\annot{a}$.

In annotated traces, $\tau$ actions
are not really important. We sometimes need
  to reason
up to these~$\tau$ actions.
Given two annotated trace $\ta$ and~$\ta'$, we write $\ta \upto \ta'$
when both traces together with their annotations are equal up to some~$\tau$ actions (but not $\taut$ and $\taue$). 
We write $ K \LRstep{\ta} K'$ when $K \lrstep{\ta'} K'$ for some $\ta'$
such that ${\ta \upto \ta'}$.

\begin{example}
\label{ex:annotation}
Considering the protocol $\Pi_\Feldhofer$ defined in Example~\ref{ex:protocol-FH},
process $\mathcal{S}_{\Pi_\Feldhofer}$ can notably  perform the execution seen in 
Example~\ref{ex:execution}.
The annotated execution has the trace $\ta$ given below (up to some $\tau$),
where $k'$, $n'_I$ and $n'_R$ are fresh names, 
$a_I = \aini(k',n'_I)$ and $a_R =
\ares(k',n'_R)$:
 $$
\begin{array}{ll}
  \ta = &\ell_1:
          \Out(c_I,w_1)[a_I].\In(c_R,w_1)[a_R]. \taut[a_R].\\
& \ell_2:\Out(c_R,w_2)[a_R].\In(c_I,w_2)[a_I].\taut[a_I]. \\
&\ell_3:\Out(c_I,w_3)[a_I].\In(c_R,w_3)[a_R].\taut[a_R]
\end{array}
  $$
After the initial $\tau$ actions, the annotated configuration
is
$
  (\{
    \ini\sigma_I[a_I],\,
    \res\sigma_R[a_R],\,\pS
  \}; \emptyset)
$
where $\sigma_I = \{k \mapsto k', n_I \mapsto n'_I\}$, and $\sigma_R = \{k \mapsto k', n_R \mapsto n'_R\}$.
The structure is preserved for the rest of the execution 
with three processes in the multiset (until they become null).
After $\ta$, the annotated configuration is $
  (\{
   {\mathcal{S}_{\Pi_{\Feldhofer}}}
   \}; \phi_0)
$
where~$\phi_0$ has been defined in \Cref{ex:execution}.
\end{example}

\begin{example}
\label{ex:annotation-bis}
Going back to Example~\ref{ex:DAA-anonymity}
and starting with $\pMa$,
a possible annotated configuration obtained
after some $\tau$ actions
can be $K = (\p; \emptyset)$
where $\p$ is a multiset containing:
\begin{itemize}
\item $P_C\{k_C \mapsto k^0_C, \id_C \mapsto \idzero, n_C \mapsto
  n^0_C, m \mapsto m^0_C\}[a^0_C]$;
\item $\rep \new \, (n_C, m). P_C\{k_C \mapsto k^0_C, \id_C \mapsto
  \idzero\}$; 
\item $P_V\{n_V \mapsto  n^1_V\}[a^1_V]$; and
\item $\mathcal{M}_{\Pi_\DAA}$.
\end{itemize}
\noindent where $a^1_V = \ini(\epsilon, n^1_V)$ and $a^0_C =
\res((k^0_C, \idzero), (n^0_C, m^0))$.
We may note that the annotation $a^1_V$ contains the empty sequence
$\epsilon$ since the initiator role does not rely on identity names;
and the annotation $a^0_C$ contains~$\idzero$.
\end{example}


\subsection{Frame opacity}
\label{subsec:frame-opacity}

In light of attacks based on leakage from messages where non-trivial 
relations between outputted messages are exploited by the attacker to trace an agent,
our first condition will express that all relations
the attacker can establish on output messages only depend on what is
already observable by him
and never depend on a priori hidden information such as identity names of specific agents.
Therefore, such relations cannot be exploited by the attacker to learn anything new about 
the agents involved in the execution.
We achieve this by requiring that any reachable frame must be 
indistinguishable  from an \emph{idealised frame} that only depends on data 
already observed in the execution, and not on the specific agents (and their 
names) of that execution.

As a first approximation, one might take the idealisation of a frame
$\{w_1 \mapsto u_1, \ldots, w_l \mapsto u_n\}$
to be $\{w_1 \mapsto n_1, \ldots, w_l \mapsto n_l\}$ where the
$n_1, \ldots, n_l$ are
distinct fresh names. It would then be very strong to require that frames 
obtained in arbitrary protocol executions are statically equivalent to their 
idealisation defined in this way. Although this would allow us to carry out
our theoretical development, it would not be realistic since any protocol 
using, \eg a pair, would fail to satisfy this condition.
We thus need a notion of idealisation that retains part of the shape of 
messages, which a priori does not reveal anything sensitive to the attacker. 
We also want to allow outputs to depend
on session names or previous inputs in ways that are observable, \eg
to cover the output of the signature of a previously inputted message.

Our idealised frames will be obtained
by replacing each message, produced by an output of label $\ell$,
by a context that only depends on $\ell$,
whose holes are filled with fresh session names and (idealisations of) previously 
inputted messages.
Intuitively, this is still enough to ensure that the attacker does not learn
anything that is identity-specific.
In order to formalise this notion,
we assume two disjoint and countable subsets of variables:
\emph{input variables} $\X^\varI=\{x^\varI_1,x^\varI_2,\ldots\}\subseteq\X$, and
 \emph{name variables} $\X^\varN
 =\{x^\varN_1,x^\varN_2,\ldots\}\subseteq\X$. We also consider 
a fixed but arbitrary \emph{idealisation operator}
$\ideam{\cdot} : \mathcal{L} \to \T(\Sigma, \X^\varI\cup\X^\varN)$.
Variables $x^\mathsf{i}_j$ intuitively refers to the $j$-nth variable received by the agent of interest.
Therefore, we assume that our  idealisation operator satisfies the following:
for all $\ell\in\mathcal{L}$, we have that
$\ideam{\ell}\cap\X^\mathsf{i}\subseteq\{x^\mathsf{i}_1,\ldots,x^\mathsf{i}_k\}$ where $k$
is the number of inputs preceding the output labelled $\ell$.

\begin{definition}
  \label{def:idealphi}
  Let $\fr : \agents\times\X^\varN\to \N$ be an injective function
  assigning names to each agent and name variable.
  We define the idealised frame associated to $\ta$, denoted $\ideaf^\fr(\ta)$, inductively on the annotated trace $\ta$:
  \begin{itemize}
  \item $\ideaf^\fr(\epsilon)=\emptyset$ and
    $\ideaf^\fr(\ta.\alpha)=\ideaf^\fr(\ta)$ if
    $\alpha$ is not an output;
  \item
    $\ideaf^\fr\bigl(\ta.(\ell:\Out(c,w)\annot{a})\bigr)
    =\ideaf^\fr(\ta)\cup\{w\mapsto \ideam{\ell}\sigma^\varI\sigma^\varN
      \redv      \}$
    where
      \begin{itemize}
        \item
          $\sigma^\varN(x^\varN_j)=\fr(a,x^\varN_j)$ when $x^\varN_j
          \in \X^\varN$, and
        \item
          $\sigma^\varI(x^\varI_j)=R_j\ideaf^\fr(\ta)$
          when $x^\varI_j \in \X^\varI$ and $R_j$ is the recipe corresponding to
          the $j$-th input of agent $a$ in $\ta$.
      \end{itemize}
  \end{itemize}
\end{definition}

We may note this notion is not necessarily well-defined,
  as $\ideam{\ell}\sigma^\varI\sigma^\varN$
  may not compute to a message. Note also that  well-definedness does not depend on the
  choice of the function~$\fr$.
Remark also that, by definition, $\ideaf^\fr(\ta)$ never depends on the specific identity names occurring
in $\ta$.
In particular, idealised frames do not depend on whether agents rely
on the specific constants $\vect{\idzero}$ or not.

\begin{example}
\label{ex:idealization-FH}
Continuing Example~\ref{ex:annotation}, we consider the
idealisation operator defined as follows:
$\ell_1 \mapsto x^\varN_1, \, \ell_2 \mapsto x^\varN_2, \, \ell_3 \mapsto
x^\varN_3$.
Let $\fr$ be an injective function such that
$\fr(a_I,x^\varN_j) = n^I_j$ and  ${\fr(a_R,x^\varN_j) =
n^R_j}$. We have that
$\ideaf^\fr(\ta) = \{w_1 \mapsto n^I_1 , w_2 \mapsto n^R_2, w_3 \mapsto  n^I_3\}$.

\end{example}

On the latter simple example, such an idealisation will be
sufficient to establish that any reachable frame obtained through an
execution of $\mathcal{M}_{\Pi_\Feldhofer}$ is indistinguishable
from its idealisation.
However, as illustrated by the following two examples, we sometimes need to consider more complex idealisation
operators.

\begin{example}
\label{ex:idealization-toyseq-variant}
Continuing Example~\ref{ex:toyseq}, to establish our
indistinguishability property, namely \emph{frame opacity} defined
below, we  will consider:
$$
\ell_1 \mapsto x^\varN_1, \; \ell_2 \mapsto x^\varN_2, \; \ell_3
\mapsto x^\varN_3,
\; \ell_4 \mapsto x^\varN_4
$$
assuming that the four outputs are
labelled with $\ell_1$, $\ell_2$, $\ell_3$, and $\ell_4$ respectively.
\end{example}

\begin{example}
\label{ex:idealization-DAA}
Regarding Example~\ref{ex:DAA}, we also need to define an
idealisation that retains the shape of the second outputted
message. Moreover, the idealisation of the second outputted message
will depend on the nonce previously received. 
Assuming that the outputs are
labelled with $\ell_1$ and $\ell_2$ respectively, we consider:
$\ell_1 \mapsto x^\varN_1, \;\; \ell_2 \mapsto
\zk(\sign(\langle x^\varN_2, x^\varN_3\rangle, \skIssuer),
   x^\varN_2, \tuple(x^\varI_1, x^\varN_4, x^\varN_5,
   \pk(\skIssuer)))$.
Note that such an idealisation would not work for
Example~\ref{ex:DAA-s}; there $\skIssuer$ is an identity-parameter
which, following Definition~\ref{def:idealphi}, cannot occur in $\ideal(\ell_2)$. 
     It turns out that frame opacity
     cannot be established (for any heuristics considered by our tool
     UKano) for a good reason: unlinkability fails to hold for
     this variant. Essentially, this is because verifiers send ZK proofs whose
     the public parts contain the public key of their issuers.
\end{example}

The following proposition establishes that
the particular choice of $\fr$ in $\ideaf^\fr(\ta)$
is irrelevant with respect to static equivalence.
We can thus note $\ideaf(\ta)\sim\phi$ instead of there exists
$\fr$ such that $\ideaf^\fr(\ta)\sim\phi$.

\begin{proposition}
  \label{prop:id-represent}
Let $\ideaf^\fr(\ta)$ (resp. $\ideaf^\frr(\ta)$) be the idealised frame
associated to $\ta$ relying on~$\fr$ (resp.~$\frr$). We have
that
$\ideaf^\fr(\ta) \sim \ideaf^\frr(\ta)$.
\end{proposition}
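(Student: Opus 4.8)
The plan is to exhibit a single bijective renaming $\rho$ of $\N$ that transports $\ideaf^\fr(\ta)$ onto $\ideaf^\frr(\ta)$, and then to observe that a uniform renaming of the names of a frame is undetectable by the attacker. First I would note that the construction of $\ideaf^\fr(\ta)$ introduces names only through the substitution $\sigma^\varN$, that is, only values of the form $\fr(a,x^\varN_j)$: the idealisation operator $\ideam{\cdot}$ takes values in $\T(\Sigma,\X^\varI\cup\X^\varN)$ and hence contains no names, and the recipes $R_j$ used by $\sigma^\varI$ lie in $\T(\Sigma_\pub,\W)$ and are therefore also nameless. Consequently the finite set $D$ of pairs $(a,x^\varN_j)$ actually consumed while building the frame depends only on $\ta$ and on $\ideam{\cdot}$, so it is the same whether we use $\fr$ or $\frr$. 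Since both functions are injective, the assignment $\fr(a,x^\varN_j)\mapsto\frr(a,x^\varN_j)$ for $(a,x^\varN_j)\in D$ is a partial injection between two finite subsets of $\N$, and such a partial injection always extends to a bijective renaming $\rho$ of $\N$.

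Next I would prove, by induction on $\ta$, that $\ideaf^\fr(\ta)\rho =_\E \ideaf^\frr(\ta)$ pointwise. The non-output cases are immediate from the inductive hypothesis, since the frame is left unchanged. For an output $\ell:\Out(c,w)\annot{a}$ the new entry on the left is $\ideam{\ell}\sigma^\varI\sigma^\varN\redv$; applying $\rho$ commutes with the substitutions because $\ideam{\ell}$ is nameless, and term by term we have $(x^\varN_j)\sigma^\varN\rho = \fr(a,x^\varN_j)\rho = \frr(a,x^\varN_j)$ by the choice of $\rho$, while $(x^\varI_j)\sigma^\varI\rho = R_j(\ideaf^\fr(\ta)\rho) =_\E R_j\,\ideaf^\frr(\ta)$ by the inductive hypothesis (again using that $R_j$ is nameless). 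Thus the instantiated terms coincide up to $=_\E$; pushing $\rho$ through the normalising reduction $\redv$ and through $=_\E$ is justified by items 3, 5 and 6 of Definition~\ref{def:computation-rel}, which yield the desired pointwise equality up to $=_\E$ (and, in passing, confirm that well-definedness on one side entails it on the other).

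Finally I would conclude that $\ideaf^\fr(\ta)\sim\ideaf^\frr(\ta)$. Because recipes contain no names, for any recipe $R$ one has $R\,(\phi\rho)=(R\phi)\rho$, so that by renaming-stability of $\redc$ and of $=_\E$ every computation or equality test that succeeds on $\phi$ succeeds on $\phi\rho$ and conversely; hence any frame $\phi$ is statically equivalent to its renaming $\phi\rho$. Combining this with the previous step and the elementary fact that two frames equal pointwise up to $=_\E$ are statically equivalent, transitivity of $\sim$ gives $\ideaf^\fr(\ta)\sim\ideaf^\fr(\ta)\rho =_\E \ideaf^\frr(\ta)$, as required. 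The only genuinely delicate point is the bookkeeping of the second paragraph: one must commute the renaming $\rho$ with the normalising reduction $\redv$ and with the equational theory while invoking exactly the right closure properties of the computation relation; everything else is routine once $\rho$ has been chosen so that it realises the dictionary between the images of $\fr$ and $\frr$.
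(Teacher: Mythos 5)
Your proof is correct and follows essentially the same route as the paper, whose entire argument is the one-line observation that $\ideaf^\fr(\ta)$ and $\ideaf^\frr(\ta)$ are equal up to a bijective renaming of names (with invariance of $\sim$ under such renamings left implicit, via item 3 of Definition~\ref{def:computation-rel}). You have simply spelled out the details — the construction of $\rho$, the induction, and the slightly more careful bookkeeping up to $=_\E$ forced by the choice of representative in $\redv$.
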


\begin{proof}
It is sufficient to observe that $\ideaf^\fr(\ta)$  and $\ideaf^\frr(\ta)$
are equal up to a bijective renaming of names.
\end{proof}

We can now formalise the notion of frame opacity as announced:
it requires that all reachable frames must be statically equivalent to
idealised frames.

\begin{definition}
\label{def:frame-opacity}
  The protocol $\proto$ ensures \emph{frame opacity} w.r.t.~$\ideamstef$ if
  for any execution 
  $(\pMa;\emptyset) \lrstep{\ta} (Q;\phi)$ we have that 
$\ideaf(\ta)$ is defined and
  $\ideaf(\ta)\sim\phi$.
\end{definition}

There are many ways to choose the idealisation operator $\ideam{\cdot}$.
We present below a {\em syntactical construction} that is sufficient to deal
with almost all our case studies.
This construction has been implemented as a heuristic to automatically build
idealisation operators in the tool \ukano.
The tool \ukano also provides  other heuristics that generally lead to better performance
but are less tight (\ie they cannot always be used to establish frame opacity).
We explain how \ukano verifies frame opacity and compare the different heuristics it can
leverage in \Cref{sec:casestudies}.

At first reading, it is possible to skip the rest of the section and directly go to
\Cref{sec:priv:wa} since proposed canonical constructions
are just instantiations of our generic notion of idealisation.


\subsubsection{Syntactical idealisation} \label{subsec:syntactical-id}
Intuitively, this construction builds the idealisation operator
by examining the initiator and responder roles as syntactically given in
the protocol definition. The main idea is to consider (syntactical) outputted terms
one by one, and to replace identity parameters, as well as variables bound by a let
construct by pairwise distinct names variables, \ie variables
  in $\X^\varN$.

\begin{definition}
Let $\Pi = (\vect{k}, \nI, \nR, \dag_I,
\dag_R, \ini, \res)$ be a protocol that uses input variables \linebreak[4]
${\{x^\varI_1,
x^\varI_2, \ldots\} \subseteq  \X^\varI}$ (in this order) for its two
roles, and distinct variables from $\X_\Let$ in let constructions. Let 
$\sigma : \vect{k} \cup \vect{n_R} \cup \vect{n_I} \cup \X_\Let \to \X^\varN$
be  an injective renaming.
The syntactical idealisation
operator 
maps any $\ell \in \mathcal{L}$ occurring in an output action
$\ell: \Out(c,u)$ in~$\ini$ or~$\res$ (for some $c$ and some $u$) 
to $u\sigma$.
\end{definition}

\begin{example}
\label{ex:idea-synt-FH} 
Continuing Example~\ref{ex:protocol-FH},
we first perform some renaming to
satisfy the conditions imposed by the previous definition. 
We therefore replace $x_1$ by $x^\varI_1$ in role $\ini$, and $y_1,
y_2$ by $x^\varI_1, x^\varI_2$ in role $\res$. We assume that $x_2,
x_3$, and $y_3$ are elements of~$\X_\Let$. We consider a renaming
$\sigma$ that maps $k, n_I, n_R,  x_2, x_3, y_3$ to $x^\varN_1, \ldots,
x^\varN_6$. We obtain the following idealisation operator:
$$\ell_1 \mapsto x^\varN_2 ;\;\;\;
\ell_2 \mapsto  \senc(\langle x^\varI_1,x^\varN_3\rangle,x^\varN_1); \; \;\;
\ell_3 \mapsto \senc(\langle x^\varN_5,
  x^\varN_2\rangle, x^\varN_1).$$

Considering $\fr$ as defined in
Example~\ref{ex:idealization-FH}, \ie such that 
$\fr(a_I,x^\varN_j) = n^I_j$ and  $\fr(a_R,x^\varN_j) =
n^R_j$, and relying on the idealisation operator defined above, and
$\ta$ as given 
in Example~\ref{ex:annotation}, we have that: 
$\ideaf^\fr(\ta) = \{w_1 \mapsto n^I_2 , \;w_2 \mapsto
\senc(\langle n^I_2, n^R_3\rangle,n^R_1), \;w_3 \mapsto  \senc(\langle
  n^I_5, n^I_2\rangle, n^I_1)\}$.
This idealisation is different
from the one described in~Example~\ref{ex:idealization-FH},  but it also
allows us to establish frame opacity.
\end{example}


\begin{example}
Continuing Example~\ref{ex:DAA}, we consider a renaming $\sigma$ that
maps: $k_C$, $\id_C$, $n_V$, $n_C$, $m$, $x_2$, $x_3$, $x_4$, $y_3$ to
$x^\varN_1, x^\varN_2,\ldots, x^\varN_9$. We obtain the following idealisation
operator:
$$
\ell_1 \mapsto x^\varN_3;\;\;\;
\ell_2 \mapsto \zk(\sign(\langle x^\varN_1, x^\varN_2\rangle, 
\skIssuer),x^\varN_1,\tuple(x^\varI_1, x^\varN_4, x^\varN_5, \pk(\skIssuer)));\;\;\;
\ell_3 \mapsto \error.
$$

\noindent Such an idealisation operator is also suitable to establish frame
opacity. 
\end{example}

As illustrated by the previous examples, the syntactical
idealisation is sufficient to conclude on most
examples. Actually, using this canonical construction, we
  automatically build the idealisation operator and check frame
  opacity for all the examples we have introduced in the previous
  sections and for most of the case studies presented in
  \Cref{sec:casestudies}.

\subsubsection{Semantical idealisation}
The previous construction is clearly purely syntactic and therefore closely
connected to the way the roles of the protocol are written.
Its main weakness lies in the way variables are bound by let constructions.
Since there is no way to statically
guess the shape of messages that will be instantiated for those variables,
the previous technique replaces them by fresh session names.
False negatives may result from such over-approximations.
We may therefore prefer to build an idealisation operator looking at the
messages outputted during a concrete execution. In such a case, we may
simply retain part of the shape of messages, which a priori does not
reveal anything sensitive to the attacker (\eg pairs, lists). This can
be formalised as follows:

\begin{definition}
\label{def:transparent}
A symbol $\ffun$ (of arity~$n$) in $\Sigma$ is \emph{transparent} if it is a public constructor
symbol that does not occur in $\E$ and such that:
for all $1 \leq i \leq n$, there exists a recipe $R_i \in
\T(\Sigma_\pub,\{w\})$ such that for any message $u =
\ffun(u_1,\ldots, u_n)$, we have that $R_i\{w \mapsto u\} \redc v_i$
for some $v_i$ such that $v_i =_\E u_i$.
\end{definition}

\begin{example}
\label{ex:transparent}
Considering the signature and the equational theory introduced in
Example~\ref{ex:signature} and Example~\ref{ex:xor}, the symbols
$\langle \; \rangle$ and $\ok$ are the only ones that are transparent.
Regarding the pairing operator, the recipes $R_1 = \proj_1(w)$ and
$R_2 = \proj_2(w)$ satisfy the requirements.
\end{example}

Once the set $\Sigma_t$ of transparent functions is fixed, the idealisation associated to a
label $\ell$ occurring in $\Pi$ will be computed relying on a particular 
(but arbitrary)
message $u$ that has been outputted with this label $\ell$ during a
concrete execution of $\mathcal{M}_\Pi$. The main idea is to go
through transparent functions until getting stuck, and then replacing
the remaining sub-terms using distinct name variables from $\X^\varN$.

\begin{example}
\label{ex:frame-opa:semantical-ieali}
Considering the protocol given in Example~\ref{ex:protocol-FH}, 
the resulting
idealisation associated to~$\Pi$ (considering messages 
in~$\phi_0$ as defined in Example~\ref{ex:execution}) is:
$
\ell_1 \mapsto x^\varN_1;\;
\ell_2 \mapsto x^\varN_2;\;
\ell_3 \mapsto x^\varN_3$.
%
Even if this idealisation operators is quite different from the one
presented in Example~\ref{ex:idea-synt-FH}. it
is also suitable
to
establish frame opacity.
\end{example}

In~\cite{HBD-sp16}, the idealisation operator associated to $\Pi$ was
exclusively computed using this method. The technique is
implemented in the tool \ukano, and yields simple idealisations
for which frame opacity often holds and can be established
quickly.
However, it happens to be insufficient to
establish frame opacity in presence of function symbols that are
neither
transparent nor totally opaque such as signatures. Indeed, a signature
function symbol is not transparent according to our definition: an
attacker can make the difference between 
a signature $\sign(m,sk(A))$ and a random nonce. Therefore, replacing
such a term by a fresh session name will never allow one to establish frame opacity.
  That is why we also defined other types of idealisations that produce
  more complex idealised messages but allow for a much better level of
  precision.
In practice, our tool UKano
  has three different built-in heuristics for computing idealisations
  which span the range between precision (syntactical idealisation)
  and efficiency (semantical idealisation).

\subsection{Well-authentication}
\label{sec:priv:wa}
Our second condition will prevent the attacker from obtaining some information
about agents through the outcome of conditionals. To do so, we will
essentially require that conditionals of $\ini$ and $\res$ can only be
executed successfully in honest, intended interactions.
However, it is unnecessary
to impose such a condition on conditionals 
that never leak any information, which are found in several security 
protocols. We characterise below a simple class of such conditionals, for 
which the attacker will always know the outcome of the conditional based on 
the past interaction.

\begin{definition}
  \label{def:safe}
 For a protocol $\Pi$,
  a conditional
  $\Let \; \vect z = \vect t \;\In \; P \; \Else \; Q$
  occurring in $\agent\in\{\ini,\res\}$ is \emph{safe} if
  $\vect t \in
  \T(\Sigma_\pub, \{x_1,\ldots,x_n\}\cup\{u_1,\ldots,u_m\})$,
  where the $x_i$ are the variables bound by the previous inputs
  of that role, and $u_i$ are the messages used in the previous
  outputs of that role.
\end{definition}

\begin{example}
Consider the process 
$\Out(c,u).\In(c,x).\Let \; z = \mathsf{neq}(x,u)\; \In\;
P\;\Else\;Q$.
The conditional is used to ensure that the
agent will not accept as input the message he sent at the previous
step. Such a conditional is safe according to our definition. 
\end{example}

Note that trivial conditionals required by the grammar of protocols
(\Cref{def:priv:role:grammar})
are safe and will thus not get in the way of our analysis.
We can now formalise the notion of association, which expresses that
two agents are having an honest, intended interaction, \ie the attacker
essentially did not interfere in their communications.
For an annotated trace $\ta$ and annotations $a$ and $a'$,
  we denote by $\restrict{\ta}{a, a'}$
  the subsequence of $\ta$ that consists of actions of the form
  $\alpha[a]$ or $\alpha[a']$.

\begin{definition}
    Given a protocol $\Pi$,  
    two annotations $a_1=\aagent_1(\vect k_1,\vect n_1)$ and $a_2=\aagent_2(\vect k_2,\vect n_2)$ are {\em associated} in $(\ta,\phi)$ if:
  \begin{itemize}
  \item they are \emph{dual}, \ie $\aagent_1\neq\aagent_2$, and
    $\vect{k_1}=\vect{k_2}$ when
    $\fn(\res)\cap\fn(\ini)\neq\emptyset$ (the shared case);
  \item the interaction  $\restrict{\ta}{a_1,a_2}$ is honest for
    $\phi$ (see Definition~\ref{def:honest}).
  \end{itemize}
\end{definition}

\begin{example} Continuing Example~\ref{ex:annotation}, $\aini(k',n'_I)$ and
  $\ares(k',n'_R)$ are associated in $(\ta,\phi_0)$.
\end{example}

Finally, we can state our second condition.

\begin{definition}
  \label{condi:auth}  
  The protocol $\proto$ is \emph{well-authenticating} if,
for any 
$ 
  (\pMa;\emptyset)
  \lrstep{\ta.\taut\annot{a}}
  (\p;\phi)
$,
  either the last action corresponds to a safe conditional of $\ini$ or $\res$, or
  there exists $a'$ such that:
  \begin{enumerate}
    \item[(i)]
      The annotations $a$ and $a'$ are associated in $(\ta, \phi)$;
    \item[(ii)] Moreover, when $\fn(\res)\cap\fn(\ini)\neq\emptyset$
      (the shared case), $a'$ (resp. $a$) is only associated with~$a$ (resp.~$a'$) in $(\ta, \phi)$.
  \end{enumerate}
\end{definition}

Intuitively, this condition does not require anything for safe conditionals as we already
know that they cannot leak new information to the attacker (he already knows their outcome).
For unsafe conditionals, condition \emph{(i)} requires that whenever an agent~$a$ evaluates them positively 
(\ie he does not abort the protocol), it must be the case that this agent~$a$
is so far having an honest interaction with a dual agent~$a'$.
Indeed, as discussed in introduction,
it is crucial to avoid such unsafe conditionals to be evaluated positively when the attacker
is interfering because this could leak crucial information.
In the rest of the paper, when considering a protocol $\Pi$, we will say
  that a conditional in a process resulting from $\pMa$ or $\pS$ is {\em safe} when
  it corresponds to a safe conditional in $\ini$ or $\res$.

As illustrated in the following example, condition \emph{(ii)} is needed to prevent from having
executions where an annotation is associated to several annotations, which would
break unlinkability in the shared case (\ie when $\fn(\res)\cap\fn(\ini)\neq\emptyset$).

\begin{example}
\label{ex:uniquely-associated}
We consider a protocol between an initiator and a responder that share
a symmetric key~$k$. The protocol can be described informally as follows:\\[1mm]
\null\hfill
$\begin{array}{rrl}
1.& I \to R: & \{n_I\}_k\\
2.& R \to I: & n_R
\end{array}$\hfill\null

\smallskip{}

Assuming that the two outputs are labelled with $\ell_1$ and $\ell_2$
respectively, the idealisation operator $\ell_1 \mapsto
x^\varN_1$, $\ell_2 \mapsto x^\varN_2$ is suitable to establish frame
opacity.
We may note that the only conditional is the one performed by the
responder role when receiving the ciphertext. He will check whether it
is indeed an encryption with the expected key $k$.
When an action $\taut[R(k, n_R)]$ occurs, it means that a
ciphertext encrypted with $k$ has been received by $R(k,n_R)$ and
since the key $k$ is unknown by the attacker, such a ciphertext has
been sent by a participant: this is necessarily a participant
executing the initiator role with key $k$. Hence condition $(i)$ of
well-authentication holds (and can actually be formally
proved). However, condition $(ii)$ fails to hold since two responder roles
may accept a same ciphertext $\{n_I\}_k$ and therefore be associated
to the same agent acting as an initiator. This corresponds to an attack
scenario w.r.t. our formal definition of unlinkability since such a
trace will have no counterpart in $\mathcal{S}_\Pi$. More formally,
the trace $\tr =
\Out(c_I,w_0).\In(c_R,w_0).\taut.\Out(c_R,w_1).\In(c_R,w_0).\taut.\Out(c_R,w_2)$
will be executable starting from $\mathcal{M}_\Pi$ and will allow one
to reach
$\phi = \{w_0 \mapsto \senc(n_I,k); \; w_1 \mapsto n_R; \; w_2 \mapsto n'_R\}$.
Starting from $\mathcal{S}_\Pi$ the second action $\taut$ will
not be possible, and more importantly this will prevent the observable
action $\Out(c_R,w_2)$ to be triggered.
\end{example}

While the condition (i) of well-authentication is verifiable
quite easily by expressing it as simple reachability properties (as explained in Section~\ref{sec:mecha-wa}),
the required condition (ii) for the shared-case is actually
harder to express in existing tools.
We therefore shall prove that, for the shared case,
once condition (i) of well-authentication is known to 
hold, condition (ii) is a consequence of two simpler conditions
that are easier to verify (as shown in Section~\ref{sec:mecha-wa-ii}).
First, the first conditional of the responder role should be safe
--- remark that if this does not hold, similar attacks as the one discussed above
may break unlinkability.
Second, messages labelled by some~$\ell$
outputted in honest interactions by different agents should
always be different.

\begin{lemma} \label{lem:wa-i}
  Let
  $\Pi = (\vect{k}, \vect n_I, \vect n_R, \dagger_I, \dagger_R, \ini, \res)$
  be a protocol such that  $\fn(\ini)\cap\fn(\res)\neq\emptyset$
  (shared case)
  that satisfies condition {(i)} of well-authentication. Then
  well-authentication holds
provided that:
  \begin{itemize}
    \item[(a)] the first conditional that occurs in $\res$ is safe;
    \item[(b)] for any execution $(\pMa;\emptyset)\sint{\ta}(\p;\phi)$,
      if  $\ta_1 = \restrict{\ta}{a_1,b_1}$
      and $\ta_2 = \restrict{\ta}{a_2,b_2}$ are
      honest with $a_1 \neq a_2$
      then for any 
      $\ell:\Out(c,w_1)\annot{a_1}\in\ta_1$
      and
      $\ell:\Out(c,w_2)\annot{a_2}\in\ta_2$
      then $\phi(w_1)\not =_\mathsf{E} \phi(w_2)$.
  \end{itemize}
\end{lemma}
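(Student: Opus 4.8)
The plan is to leverage the fact that condition (i) of well-authentication is assumed, so that the only thing left to establish is condition (ii): for every non-safe positively-evaluated conditional reached by $(\pMa;\emptyset)\lrstep{\ta.\taut\annot{a}}(\p;\phi)$, the dual agent $a'$ provided by (i) is \emph{uniquely} associated with $a$, and conversely. I would isolate the combinatorial core into a single claim, applicable to either role: if an agent $x$ has already executed its first input action in $\ta$, then $x$ is associated in $(\ta,\phi)$ with at most one agent of the dual role.

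To prove this claim, I would suppose $x$ is associated with two distinct dual agents $y_1\neq y_2$, so that both $\restrict{\ta}{x,y_1}$ and $\restrict{\ta}{x,y_2}$ are honest. These two restricted traces share every action of $x$, and in particular the action $\In(\_,R)\annot{x}$ realising the first input of $x$. Because honesty forces the strict alternation $\Out.\In.\Out.\In\ldots$, and because the restriction retains \emph{all} actions of both agents throughout $\ta$, the two agents stay in lockstep; hence the input $R$ of $x$ is matched against the \emph{first} output of $y_1$ (call its variable $w_1$) and also against the first output of $y_2$ (variable $w_2$): $R\phi\redv =_\E \phi(w_1)$ and $R\phi\redv =_\E \phi(w_2)$, so $\phi(w_1)=_\E\phi(w_2)$. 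Since $y_1$ and $y_2$ play the same dual role, their first outputs carry the \emph{same} label $\ell$; applying hypothesis (b) to the honest traces $\restrict{\ta}{y_1,x}$ and $\restrict{\ta}{y_2,x}$ with $y_1\neq y_2$ gives $\phi(w_1)\neq_\E\phi(w_2)$, a contradiction, which proves the claim.

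With the claim available, I would deduce condition (ii) by applying it twice. First to $a$: since the grammar of roles places every conditional after an input, the evaluation of $\taut\annot{a}$ guarantees that $a$ has already performed its first input, so $a$ is associated with at most one agent, hence only with $a'$. Second to $a'$: here I must check that $a'$ too has performed its first input. If $a'$ is a responder this is immediate, as its first action is an input and it is associated with $a$ through a non-empty honest interaction. If $a'$ is an initiator — which forces $a$ to be a responder — then $a'$ acquires its first input only after $a$ has emitted its first output, and this is precisely where hypothesis (a) enters: since $\taut\annot{a}$ is non-safe, by (a) it cannot be the first conditional of $\res$, so $a$ has already passed its first conditional and produced its first output, which $a'$ receives in their honest interaction. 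The claim then applies to $a'$ as well, so $a'$ is associated only with $a$, establishing (ii).

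The main obstacle, and the step I would treat most carefully, is the synchronisation argument inside the claim: justifying that honesty of $\restrict{\ta}{x,y_i}$ genuinely forces $y_1$ and $y_2$ to agree on their \emph{first} output rather than on some later one, given that the restriction keeps every action of $x$ and of each $y_i$ over the whole of $\ta$ and so prevents the agents from drifting apart. This is also what shows hypothesis (a) to be essential rather than cosmetic: Example~\ref{ex:uniquely-associated} exhibits a single initiator associated with two responders through their unsafe first conditional, in which case the initiator never acquires a constraining first input and the claim cannot be invoked on it — exactly the configuration that (a) rules out.
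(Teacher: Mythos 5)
Your proof is correct and follows essentially the same route as the paper's: both halves of condition (ii) are obtained by exhibiting one input of the relevant agent which, under a double association, would have to match same-labelled outputs of two distinct agents, contradicting hypothesis (b), and hypothesis (a) is invoked exactly where you invoke it, namely to guarantee that the dual agent has performed a constraining input when it is an initiator (the test-passing agent being then a responder whose unsafe conditional cannot be its first). The only differences are organisational: you factor the argument into a single claim applied twice and anchor it on the agent's \emph{first} input and the partners' \emph{first} outputs, whereas the paper runs two symmetric arguments anchored on the \emph{last} input of the agent performing $\taut$ (and on an input of its partner whose existence is derived from (a)) --- and both proofs rely, at the same level of rigour, on the ``lockstep'' reading of honest interleavings that you explicitly flag as the delicate step.
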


\begin{proof}
Consider an execution
$\pMa\lrstep{\ta.\taut\annot{a'}}(\p;\phi)$
where two agents~$a$ and $a'$
are associated and~$a'$ has performed the last {$\taut$}.
If this test corresponds to  a safe conditional, there is nothing to
prove. Otherwise, 
we shall prove that $a$ is only associated to~$a'$,
and vice versa.

\smallskip{}

\noindent \emph{Agent $a'$ is only associated to $a$.}
  Consider the last input of $a'$ (the one just before
  $\taut\annot{a'}$) and the output of~$a$ that occurs before
  this input of $a'$:
  $$\pMa\lrstep{
    \ta.\Out(c,w_\ell)\annot{a}.
      \ta'.\In(c',R)\annot{a'}.\ta''.\taut\annot{a'}
    }(\p;\phi)$$
  We have $R\phi\redc \phi(w_\ell)$ where $w_\ell$ is labelled $\ell$.
  Assume, for the sake of contradiction,
  that $a'$ is associated to another agent~$b \neq a$.
  Then, we have $R\phi\redc =_\E \phi(w_{\ell}')$ for some handle,
  and thus thanks to Item~\ref{def-cr-context-2} of Definition~\ref{def:computation-rel}, we have that
  $\phi(w_\ell) =_\E \phi(w_{\ell}')$,
  for a handle~$w_{\ell}'$ corresponding to some output of $b$ labelled $\ell$
  in the honest trace $\restrict{\ta}{a',b}$.
  This contradicts assumption~$(b)$.

\smallskip{}

\noindent\emph{Agent $a$ is only associated to $a'$.}
  Agent $a$ must have performed an input in $\ta$:
  this is obvious if~$a$ is a responder, and follows from assumption $(a)$
  otherwise.
Let $\ell: \Out(c,w_\ell)\annot{a'}$ be the output label (with
annotation $a'$) occurring in
$\ta$ just before the input of $a$ mentioned above.
  The considered execution is thus of the following form:
  $$\pMa\lrstep{
      \ta.\Out(c,w_\ell)\annot{a'}.
      \ta'.\In(c,R)\annot{a}.\ta''.\taut\annot{a'}
    }(\p;\phi)$$
  We know that the message $m$, satisfying $R\phi\redc m$, which is inputted by $a$ is equal
  (modulo $\E$)
  to the previous output of~$a'$, that is $\phi(w_\ell)$. As for the previous case,
  condition (b) implies that it cannot be equal to the output of another
  agent having an honest interaction in $\ta$,
  thus $a$ is only associated to~$a'$.
\end{proof}


\subsection{Main result}
\label{subsec:result}

Our main theorem establishes that the previous two conditions are
sufficient to ensure unlinkability and anonymity.

\begin{theorem}
\label{theo:main}
Consider a protocol $\Pi = (\vect{k}, \vect n_I, \vect n_R, \dag_I,
\dag_R, \ini, \res)$ and some identity names $\vect{id}
\subseteq \vect k$. If the protocol ensures both well-authentication
and frame opacity w.r.t.\ $\vect{id}$, then $\Pi$ ensures unlinkability and anonymity
w.r.t.\ $\vect{id}$.
\end{theorem}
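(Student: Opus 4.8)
The plan is to establish the two trace equivalences demanded by Definitions~\ref{def:un} and~\ref{def:anonymity}, namely $\pM \eint \pS$ for unlinkability and $\pMa \eint \pM$ for anonymity. I would treat both uniformly, since in each case one inclusion is immediate and the other is the real content. The inclusions $\pS \sqsubseteq \pM$ and $\pM \sqsubseteq \pMa$ hold by direct simulation: the richer process on the right can always mimic the simpler one on the left by refraining from spawning the extra sessions (resp.\ the disclosed agent), and the reached frames are then literally the same, so static equivalence is trivial. Everything therefore rests on the two hard inclusions $\pM \sqsubseteq \pS$ and $\pMa \sqsubseteq \pM$.

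For these, I would design a single \emph{identity-splitting} transformation on annotated executions. Given an execution of the richer process reaching $(\mathcal{Q};\phi)$ along $\ta$, well-authentication --- in particular condition~(ii) in the shared case --- guarantees that the association relation is a partial matching of agents into honest sessions. I would then relabel the identity parameters so that each matched dual pair, as well as each unmatched agent, receives pairwise distinct fresh identity parameters; in the anonymity case the disclosed constants $\vect{\idzero}$ are likewise replaced by a fresh hidden identity. Since $\pS$ spawns exactly one copy of $\ini$ and one of $\res$ per identity, the fact that the matching is single-valued is precisely what lets the relabelled execution fit inside $\pS$; this is why condition~(ii) cannot be dropped, as witnessed by Example~\ref{ex:uniquely-associated}.

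The heart of the argument --- and the step I expect to be the main obstacle --- is showing that the relabelled trace, which carries the \emph{same} observable actions and hence the same input recipes, remains executable with the same control flow. I would proceed by induction on the execution, carrying an invariant that matches each agent of the richer process with its relabelled counterpart and that asserts static equivalence of the two current frames. The delicate case is the conditionals. For a \emph{safe} conditional the outcome is computable from the public part of the trace, so static equivalence of the frames forces the same branch. For an \emph{unsafe} conditional that succeeds, well-authentication places the agent in an honest interaction with a unique dual; relabelling keeps that pair together and forwards identical messages, so the test still succeeds. For an unsafe conditional that fails, I would argue by contradiction: were it to succeed after relabelling, well-authentication would again supply an honest interaction in $\pS$, and since the invariant keeps the frames statically equivalent while the recipes are unchanged, this interaction would transfer back to the richer process, forcing success there too and contradicting the assumed failure. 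Feasibility of each input is inherited from the first clause of static inclusion.

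Finally, static equivalence of the two reached frames follows cleanly from frame opacity. Applying Definition~\ref{def:frame-opacity} to both executions yields $\phi \sim \ideaf(\ta)$ and $\phi' \sim \ideaf(\ta')$. Because idealised frames depend only on output labels, input recipes, and fresh names, and never on identity parameters (the remark following Definition~\ref{def:idealphi}), the two idealised frames coincide up to a bijective renaming of names and are therefore statically equivalent by Proposition~\ref{prop:id-represent}. Transitivity of $\sim$ then gives $\phi \sim \phi'$, which closes both hard inclusions and hence both equivalences, establishing unlinkability and anonymity w.r.t.~$\vect{id}$.
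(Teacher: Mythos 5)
Your overall strategy is the one the paper follows: dispose of the easy inclusions by simulation, and obtain the hard one(s) by an identity-splitting renaming of agents, proving by induction that the renamed trace stays executable with the same observable actions (safe conditionals handled via static equivalence of the frames, unsafe ones via well-authentication, and static equivalence itself via frame opacity plus the fact that idealised frames never depend on identity parameters). Your decomposition into $\pM \sqsubseteq \pS$ and $\pMa \sqsubseteq \pM$ instead of the paper's single stronger inclusion $\pMa \sqsubseteq \pS$ is an immaterial repackaging. There is, however, one genuine flaw: condition~(ii) of well-authentication does \emph{not} make the association relation a partial matching, because it constrains associations only at the instant an unsafe conditional is passed. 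Association by itself can be one-to-many: if the attacker replays an initiator's first output to two responders, the initiator is associated with both so long as neither has yet evaluated an unsafe test, and nothing in Definition~\ref{condi:auth} forbids this. Hence the relabelling ``keep each matched dual pair on a common fresh identity'' is ill-defined as you state it. The paper repairs exactly this point by introducing the finer relation of \emph{connection} (association in some prefix containing at least one successful unsafe conditional of one of the two agents), deriving from condition~(ii) that connection is functional (Proposition~\ref{prop:wa-doubleCo}), and building the renaming so that connected pairs, and only those, remain dual (Proposition~\ref{prop:single-session-renaming}, which is the hypothesis required by Lemma~\ref{lem:executable-renaming}). Your own treatment of successful unsafe conditionals in fact only ever uses the unique dual \emph{at the moment of success}, so the fix is compatible with your plan; but as written the matching claim is false and the renaming ambiguous.

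A secondary point you gloss over is executability bookkeeping. Comparing executions of $\pMa$, of the renamed trace, and of $\pS$ requires aligning processes with different replication, name-creation and sequencing structure, and when $\dag_I$ or $\dag_R$ is $\rec$ one must also check that the renaming does not introduce new sequential constraints between agents. The paper devotes real machinery to this: ground configurations $\K(\ta)$ and Proposition~\ref{prop:gc-pm} to abstract away unfoldings and work up to $\tau$-actions, the third condition of Definition~\ref{def:renaming} ensuring renamings only \emph{remove} sequential dependencies, and Proposition~\ref{prop:relate-rho} to pass from $\K(\ta)\rho$ back to $\K(\ta\rho)$. Identity-splitting indeed only relaxes sequencing, so nothing breaks, but your induction on raw executions of $\pM$/$\pMa$ would have to make this precise.
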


Note that, when $\vect{id} = \emptyset$, we have that
$\mathcal{M}_{\Pi, \vect{id}} \approx \mathcal{M}_{\Pi}$ and our two
conditions coincide on $\mathcal{M}_{\Pi, \vect{id}}$ and
$\mathcal{M}_{\Pi}$. We thus have as a corollary that if 
$\mathcal{M}_{\Pi}$ ensures well-authentication and frame opacity,
then $\Pi$ is unlinkable.

The proof of this theorem is detailed in Appendix~A, and we
explain in Section~\ref{sec:mechanization} how to check these two conditions in
practice relying on existing verification tools.
We apply our method on various case studies that are detailed 
 in Section~\ref{sec:casestudies}. Below, we only briefly
summarize the result of the confrontation of our method to our various
running examples, focusing on unlinkability.

\begin{figure}[h]
  \centering
  
  \begin{tabular}{l|lll}
 \multirow{2}{*}{Protocol \rule[-5mm]{0pt}{1cm} }     & Frame
                  & Well- 
                  &   \multirow{2}{*}{Unlinkability \rule[-5mm]{0pt}{1cm} } \\
& opacity & \; authentication \;& \\
 \hline
        Feldhofer (Example~\ref{ex:protocol-FH}) & \verif & \verif & \holds \\
        Toy protocol with $\rep$ (Example~\ref{ex:toyseq-att})& \verif & \nope & \attaque \\
      Toy protocol with $\rec$
    (Example~\ref{ex:toyseq}) & \verif & \verif (with \tamarin) & \holds \\
DAA-like with one issuer (Example~\ref{ex:DAA})& \verif & \verif & \holds\\
DAA-like with many issuers (Example~\ref{ex:DAA-s})& \nope & \verif & \attaque
  \end{tabular}
  \caption{Summary of our running examples.
         }
  \label{fig:summary-running}
\end{figure}

We note  \verif~for a condition automatically checked using \ukano~and \nope~when the condition does not hold.
For the analysis of the Toy protocol with $\rec$, we do not rely on
\ukano (which is based on \proverif)  since \proverif does not support
the $\rec$ operator. We establish the well-authentication property
using \tamarin
and frame opacity using \proverif
by allowing sessions to run concurrently 
and thus doing a sound over-approximation of the protocol's behaviors.
Frame opacity has been established relying on the syntactical idealisation as well as the semantical one,
except for \Cref{ex:DAA}. Indeed, as explained at the end of
Section~\ref{subsec:frame-opacity}, the
semantical idealisation is not suitable in this case.

\newcommand{\fst}[1]{\mathsf{fst}(#1)}
\newcommand{\snd}[1]{\mathsf{snd}(#1)}
\newcommand{\bi}{P}
\newcommand{\red}{\rightarrow}

\newcommand{\att}[1]{\mathsf{att}(#1)}
\newcommand{\biatt}[1]{\mathsf{att}'(#1)}

\newcommand{\ideap}[1]{P^\mathsf{ideal}}

\section{Mechanization}
\label{sec:mechanization}

We now discuss how to verify unlinkability and anonymity in practice,
through the verification of our two conditions.
More specifically, we describe how appropriate encodings allow one to
verify frame opacity (\Cref{sec:mecha-fo}) and
well-authentication~(\Cref{sec:mecha-wa}),
respectively through diff-equivalence and correspondence properties
in \proverif.

We additionally provide a tool, called \ukano~\cite{depotANO} (Section~\ref{sec:mecha-ukano}), which
mechanises the encodings described in this section.
Our tool takes as input a specification of a protocol in our class,
computes encodings, and calls \proverif to automatically check our two 
conditions, and thus unlinkability and anonymity.
As briefly mentioned in Section~\ref{sec:approach} and detailed in \Cref{sec:casestudies},
\ukano concludes on many interesting case studies.


\subsection{Frame opacity}
\label{sec:mecha-fo}

  We shall describe how to encode frame opacity using
  the diff-equivalence of \proverif~\cite{BlanchetAbadiFournetJLAP08}.
  In a nutshell, we will check this strong notion of equivalence
  between $\pMa$ and
  a modified version of it that produces idealised outputs
  instead of real ones, in order to check static equivalence between
  all pairs of frames of the form
  $(\Phi,\ideaf(\ta))$ where $\ta$ is executable by $\pMa$ and $\Phi$
  is the resulting frame.
  The main issue in implementing this idea arises from diff-equivalence being 
  too
  strong regarding tests and computations of idealized terms.
  In~\cite{HBD-sp16}, we had proposed a solution that avoided this
  problem by largely over-approximating the set of executable traces,
  which is sound but very imprecise.
  Moreover, this first solution is only adequate for the notion of
  idealization considered in \cite{HBD-sp16}, and not for the generalization
  proposed in the present paper.
  We describe below a simpler solution, that is much more precise and 
  efficient, and can accommodate our generalized notion of idealization,
  at the cost of a slight extension of \proverif's diff-equivalence.
We shall start with a brief reminder on diff-equivalence in \proverif,
in order to describe how we extend it, before showing how this extension
allows us to naturally encode frame opacity.

\paragraph{Diff-equivalence.}
Intuitively, diff-equivalence is obtained from trace equivalence by forcing
the two processes (or configurations) being compared to follow the same execution.
It has been introduced in \cite{BlanchetAbadiFournetJLAP08} as a means
to automatically verify observational equivalence in \proverif.
This paper deals with the full process algebra supported by \proverif, which
is more general but compatible with the process algebra of the present paper,
the main difference being that we do not account for private channels.
Processes of \cite{BlanchetAbadiFournetJLAP08} are equipped with a reduction
semantics, noted $P \red P'$. This allows to define observational equivalence,
which implies trace equivalence in our sense.
The key notion in \cite{BlanchetAbadiFournetJLAP08} is that of a 
\emph{bi-process}, that is a process in which some terms are replaced by 
\emph{bi-terms} of the form $\choice{u_1}{u_2}$. Given a bi-process $\bi$,
its first \emph{projection} $\fst{\bi}$ is defined by taking the first 
component of all choice operators occurring in it. The second projection
$\snd{\bi}$ is defined analogously. Bi-processes are given a reduction 
semantics by taking the same rules as those for processes (which we do not
recall) with three modified rules\footnote{
  We use our own notations here, taking advantage of the fact that
  our calculus is a simplification of the one used 
  in~\cite{BlanchetAbadiFournetJLAP08}:
  in particular, channels are public constants and choice operators cannot
  be used in channel positions.
  We also use the notation $\choice{\cdot}{\cdot}$ as in the tool \proverif,
  rather than $\mathsf{diff}[\cdot,\cdot]$ as in 
  \cite{BlanchetAbadiFournetJLAP08}.
}:
$$ \begin{array}{llr}
  \Out(c,u).Q \mid \In(c,x).P
  \;\red\; Q \mid P\{x\mapsto u\}
  & & (\text{Red I/O})
  \\
  \Let\; \vect{x} = \vect{t} \;\In\; P \;\Else\; Q
  \;\red\;
  P\{\vect{x}\mapsto\choice{\vect u_1}{\vect u_2}\}
  & \text{ if } \fst{\vect{t}}\redc \vect u_1
  \text{ and } \snd{\vect{t}}\redc\vect u_2
  & \quad (\text{Red Fun 1})
  \\
  \Let\; \vect{x} = \vect{t} \;\In\; P \;\Else\; Q
  \;\red\;
  Q
  & \text{ if } \fst{\vect{t}}\not\redc
  \text{ and } \snd{\vect{t}}\not\redc
  & \quad (\text{Red Fun 2})
\end{array} $$
A bi-process execution step thus consists of strongly synchronized execution 
steps of its two projections. In particular, a conditional in a bi-process
succeeds (resp.\ fails) if it succeeds (resp.\ fails) for both of its projections.
Formally, we have that $\bi\red\bi'$ implies $\fst{\bi}\red\fst{\bi'}$ and
$\snd{\bi}\red\snd{\bi'}$.
However, it could be that some execution step of $\fst{\bi}$
(resp.\ $\snd{\bi}$) cannot be obtained in this way from an execution step
of $\bi$. In fact,
\cite[Theorem~1]{BlanchetAbadiFournetJLAP08} shows that the two projections
of a bi-process $\bi$ are observationally equivalent if, for all $C$,
all reducts of $C[\bi]$ are uniform in the following sense:
\begin{definition}[\cite{BlanchetAbadiFournetJLAP08}]
  A bi-process $\bi$ is \emph{uniform} if for all reductions
  $\fst{\bi}\red P_1$, there exists $\bi'$ such that $\bi \red \bi'$ and 
  $\fst{\bi'}=P_1$, and symmetrically for $\snd{\bi}$.
\end{definition}
From now on, we say that a bi-process is \emph{diff-equivalent} when it 
satisfies the condition of \cite[Theorem~1]{BlanchetAbadiFournetJLAP08}.
By extension, we say that the two projections of a bi-process are 
diff-equivalent when the bi-process is.

\paragraph{Diff-equivalence verification in \proverif.}
The next contribution of \cite{BlanchetAbadiFournetJLAP08} is to show
that diff-equivalence can be automatically verified in \proverif by adapting 
its Horn clause encoding and resolution algorithm to bi-processes.
We will not recall the encoding and its modifications in detail here, but
only present some key ideas at a high level.
In the single-process case, a unary predicate $\att{\cdot}$ is used
to encode that the attacker knows some message.
The attacker's capabilities are expressed as Horn clauses involving
this predicate, \eg the ability to encrypt is translated as
$$\forall x \forall y.~ \att{x} \wedge \att{y} \Rightarrow \att{\enc{x}{y}}.$$
Then, a process fragment of the form
$\In(c,x). \Let\; y = \dec{x}{k} \;\In\; \Out(c',t)$,
where $t$ is a constructor term with free variable $y$,
is encoded as
$$\forall y.~ \att{\enc{y}{k}} \Rightarrow \att{t}.$$
Note that the variable $x$ does not appear in the clause,
but has been refined into $\enc{y}{k}$ as part of the translation.

Consider now the analogue bi-process fragment
$$\In(c,x). \Let\; y = \dec{x}{k} \;\In\; \Out(c',\choice{t_1}{t_2}).$$
It corresponds to two processes, each of which may receive a message and
attempt to decrypt it using $k$. Upon success, the processes output $t_1$
and $t_2$ respectively,
relying on the value $y$ obtained from the respective decryptions.
In \proverif, this bi-process would translate into the Horn clause
$$\forall y_1 \forall y_2.~
\biatt{\enc{y_1}{k},\enc{y_2}{k}} \Rightarrow \biatt{t'_1,t'_2}$$
{where $t'_1$ (resp.~$t'_2$) is $t_1$ (resp.~$t_2$) in which
all the occurrences of~$y$ have been replaced by $y_1$ (resp.~$y_2$).}
This time, a binary predicate $\biatt{\cdot,\cdot}$ is used to encode the 
attacker's knowledge on each side of the bi-process run:
the clause roughly says that if, at some point of the execution
of the bi-process, the attacker can deduce (using the same
derivation) a term of the form $\enc{y_1}{k}$ from the left frame
and a term $\enc{y_2}{k}$ from the right frame,
then he will learn $t_1$ on the left and~$t_2$ on the right.
The attacker's capabilities are also modified to encode the effect of the
attacker's capabilities on each side of the bi-process run, \eg for encryption:
$$\forall x_1 \forall x_2 \forall y_1 \forall y_2.~
\biatt{x_1,x_2} \wedge \biatt{y_1,y_2} \Rightarrow 
\biatt{\enc{x_1}{y_1},\enc{x_2}{y_2}}.$$

\paragraph{An extension of bi-processes.}
In the original notion of bi-processes~\cite{BlanchetAbadiFournetJLAP08},
the two sides of a bi-process are isolated and can execute independently.
However, the Horn clause encoding of bi-processes that is used for 
verification in \proverif makes it easy to lift this restriction in a way
that enables interesting new applications of diff-equivalence.
Specifically, we introduce the possibility of binding two variables at once
in a bi-process input, which we write $\In(c,\choice{x_1}{x_2}).P$.
For simplicity, we can consider that all inputs feature such choice variables,
as the usual form $\In(c,x).P$ can be replaced by 
$\In(c,\choice{x_1}{x_2}).P\{x\mapsto\choice{x_1}{x_2}\}$.
The intuitive semantics of such a construct is that $x_1$ is bound to the 
message received on the left side of the bi-process run, while $x_2$ is bound
to the message received on the right.
Formally, we change the (Red I/O) rule as follows:
$$ \Out(c,u).Q \mid \In(c,\choice{x_1}{x_2}).P
\quad\red\quad Q \mid P\{x_1\mapsto\mathsf{fst}(u), x_2\mapsto\mathsf{snd}(u)\}.$$
Crucially, each side of the bi-process will then have access to both $x_1$ and 
$x_2$, allowing a form of communication between the two sides.

With this modification, \cite[Theorem~1]{BlanchetAbadiFournetJLAP08} does
not hold anymore. In fact, $\fst{\bi}$ and $\snd{\bi}$ may not be 
ground
processes when $\bi$ uses choice variables in inputs, so that it does not even
make sense to compare them for observational equivalence. More generally, we cannot talk 
in general of the projection of a bi-process execution: the fact that
$\bi\red\bi'$ implies $\fst{\bi}\red\fst{\bi'}$ becomes not only false but
also ill-defined in general.
However, the notion of uniformity is still meaningful, if properly adapted to 
be mathematically well-defined: a bi-process $\bi$ is uniform if, \emph{whenever
$\fst{\bi}$ is ground}, it is the case that for all reductions $\fst{\bi}\red 
P_1$ there exists $\bi'$ such that $\bi\red\bi'$ and $\fst{\bi'}=P_1$, and
symmetrically for $\snd{\bi}$.
Furthermore, we shall see that it can be useful in cases where at least one 
projection of the executions of a bi-process is well-defined, as will be the 
case with our encoding of the notion of frame opacity through (extended) 
diff-equivalence.

Besides the problem of the new meaning of diff-equivalence, an 
important question is whether extended diff-equivalence can be verified 
automatically, and how. We claim that it is straightforward, as the Horn 
clause encoding of bi-processes already features what is needed for adequately 
encoding extended bi-processes, namely the duplicated input variables seen in 
the above example --- accordingly, we only had to modify a few tenth of lines of 
the \proverif tool to implement our extension. A formal justification of this 
claim would require to adapt the long technical development of 
\cite{BlanchetAbadiFournetJLAP08}, and is thus out of the scope of the present 
paper. We simply illustrate the idea here by getting back to our running example illustrating the various
Horn clause encodings, considering now the process fragment
$$\In(c,\choice{x_1}{x_2}).
\Let\; y = \dec{x_1}{k} \;\In\; \Out(c',\choice{t_1}{t_2})$$
where $t_1$ and $t_2$ are constructor terms with free variable $y$.
This would be encoded as
$$\forall {y_1} \forall x_2.~
\biatt{\enc{{y_1}}{k},x_2} \Rightarrow \biatt{t'_1,t'_2}$$
{where $t'_1$ (resp.~$t'_2$) is $t_1$ (resp. $t_2$) in which all the 
occurrences of~$y$ have been replaced by~$y_1$.}
This time, $x_2$ is not refined since the bi-process does not attempt
to deconstruct it, on either side. The clause expresses that if
the attacker can derive a term $\enc{y_1}{k}$ on the left (regardless of
what would be the corresponding term on the right) then he will learn
$t'_1$ on the left and $t'_2$ on the right.

\paragraph{Encoding frame opacity through extended diff-equivalence.}
Using this extended notion of bi-process, we can now directly express
frame opacity as the diff-equivalence of a bi-process. This bi-process
will have $\pMa$ as its first projection. Its second projection
should replace each message output with its idealization, so that
diff-equivalence of $\ideap{P}$ implies
$\Phi\sim\ideaf(\ta)$ for any $\ta$ that is executable by $\pMa$
with $\Phi$ as the resulting frame. In itself, this can be achieved
easily, as it suffices to create new names to use as values for
$\X^n$ variables and use appropriate input variables for the~$\X^i$ variables.
The difficulty lies with tests (and computation failures) which need
to be carefully used to obtain a correct encoding of frame opacity.

First, we need to ensure that
any reduction of $\pMa$ (in any context) can be obtained as the projection 
of a reduction of $\ideap{P}$ (in the same context).
In other words, the second projection of any test should agree with its first
projection, which corresponds to a normal execution of $\pMa$.
This would not hold in general if we performed the same test on the
idealizations which are computed in the second projection. We obtain the desired
behaviour using our extension of bi-processes, by performing
tests using only left-hand side input variables.

Second, we need to ensure that computation failures that occur while
computing idealizations result in a diff-equivalence failure. This is necessary
to obtain a match with frame opacity, which requires that idealizations are
well-defined even when destructors are involved.
Hence, when the bi-process computes idealized values in its second projection,
a failsafe computation should happen in the first projection.
Moreover, idealizations should be computed using the values of the input 
variables from the right side of the bi-process execution, in line with
the definition of idealization, \ie Definition~\ref{def:idealphi}.

\begin{figure}[t]
\footnotesize
\begin{minipage}{0.5\textwidth}
\begin{lstlisting}
let I (k:bitstring) =
  new nI:bitstring;
  out(ci, nI);
  in(ci, x:bitstring);
  let (=nI, xnr:bitstring) = dec(x, k) in
  out(ci, enc((xnr,nI),k)).   
\end{lstlisting}
\end{minipage}
 \begin{minipage}{0.45\textwidth}
 \begin{lstlisting}
let R (k:bitstring) =
  new nR:bitstring;
  in(cr, ynI:bitstring);
  out(cr, enc((ynI, nR), k));
  in(cr, y:bitstring);
  let (=nR,=ynI) = dec(y,k) in
  out(cr, ok).
 \end{lstlisting}
 \end{minipage}
 \begin{lstlisting}
let FH = ! new k:bitstring; ! (I(k) | R(k)).
 \end{lstlisting}
 \caption{Our running example (Feldhofer) using \proverif's syntax}
\label{fig:syntax-proverif}
\end{figure}

Before proving more formally that our translation is adequate,
let us illustrate it on our running example, \ie
Example~\ref{ex:protocol-FH}.
We first give in Figure~\ref{fig:syntax-proverif} the description
of the protocol in \proverif syntax. This syntax is actually very 
close to the one we introduced in Section~\ref{sec:protocol}. The main
difference is the fact that \proverif relies on types, and here any
name or variable is given the generic type bitstring. Next, we show
in Figure~\ref{fig:check-fo} the bi-process expressing frame opacity
as described above, using the syntaxic idealisation
(Section~\ref{subsec:syntactical-id}).
Note that, when decrypting the first input of the initiator role,
the variable \texttt{x} is used, corresponding to the left side of
the bi-process execution.
The variable \texttt{xid}, correspond to the right (idealized) side,
is not used in that case because this input is not used in idealizations.
However, the variable \texttt{ynIid} corresponding to an idealized
input in the responder role is used in the first output.
In this example, the idealisation operator does not contain any destructor,
hence the computation of the idealisation can never fail. If destructors
were present, they would be computed using $\Let$ constructs inside
the right component of the $\choice{\cdot}{\cdot}$ operator\footnote{
  This is not possible with the theoretical notion of bi-process,
  where $\choice{\cdot}{\cdot}$ operators can only contain terms, which cannot
  contain $\Let$ constructs. However, it is available in (vanilla)
  \proverif as syntactic sugar: it is equivalent to performing the
  $\Let$ outside the $\choice{\cdot}{\cdot}$ with a dummy first projection,
  which is exactly what we need.
} in output, so that
their failure would result in a non-equivalence.

We now conclude with a formal correctness argument.

\begin{proposition}
  Let $\ideamstef$ be an idealisation operator, and note $\ideap{P}$ the 
  corresponding encoding of a process $P$ into a bi-process.
  Assume that, for all $C$ and $B$ such that $C[\ideap{P}]\red^* B$, $B$
  is uniform. Then $P$ satisfies frame opacity wrt.\ $\ideamstef$.
\end{proposition}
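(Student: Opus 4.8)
The goal is to establish frame opacity directly from the uniformity hypothesis: for every execution $(\pMa;\emptyset)\lrstep{\ta}(Q;\phi)$ in the labelled semantics, I must show that $\ideaf(\ta)$ is defined and that $\phi\sim\ideaf(\ta)$. The plan is to mirror such an execution by a reduction of the bi-process $\ideap{P}$ placed in a suitable attacker context $C$ (recovering the recipe-based inputs of the labelled semantics through the reduction semantics of \cite{BlanchetAbadiFournetJLAP08}), arranged so that the first projection replays the real run, yielding the frame $\phi$, while the second projection computes the associated idealised frame $\ideaf(\ta)$. The uniformity assumption is then exploited at three distinct points. Note that frame opacity only concerns static equivalence of reachable frames, not a full behavioural equivalence, so I will not need to recover observational equivalence of the projections.

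First I would set up the simulation. By construction $\fst{\ideap{P}}=\pMa$, and since the encoding performs every test and every real output using only the left-hand input variables, the first projection of $\ideap{P}$—and of each of its reducts, in which the input variables are always instantiated by ground messages upon firing an input—behaves exactly like $\pMa$ and is ground. Thus the real run lifts to a run of $\fst{C[\ideap{P}]}$, and uniformity lets me lift each first-projection reduction to a bi-process reduction, producing some $B$ with $C[\ideap{P}]\red^* B$ whose first-projection frame is $\phi$. The second projection then outputs idealised messages; here the extended input construct $\In(c,\choice{x_1}{x_2})$ is essential, since it binds the right input variable $x_2$ to the recipe evaluated on the idealised frame, which is precisely the substitution $\sigma^\varI$ of Definition~\ref{def:idealphi}. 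Tracking the encoding label by label then shows that the second-projection frame of $B$ equals $\ideaf(\ta)$.

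Next, well-definedness of $\ideaf(\ta)$ follows from uniformity together with the failsafe design of the encoding: an idealisation involving destructors is computed through a $\Let$ whose left (first-projection) component trivially succeeds and whose right component performs the actual idealising computation. Were some idealisation to fail, the left conditional would take its then-branch while the right could not, so the first projection could still perform the ensuing output whereas no uniform bi-process reduction could match it, contradicting uniformity (rule Red Fun 1 versus Red Fun 2). Hence no idealisation fails along the run and $\ideaf(\ta)$ is defined throughout.

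Finally, to obtain $\phi\sim\ideaf(\ta)$ I would argue by contradiction. A violation of static equivalence is witnessed by a test, that is, two recipes computing to equal messages over one frame but to unequal (or failing) messages over the other. Such a test can be realised by extending $C$ with a conditional over the corresponding bi-recipe $\choice{R\phi}{R\ideaf(\ta)}$: by Red Fun 1 and Red Fun 2 this conditional succeeds on exactly one side, so the first projection takes a branch that the bi-process cannot take uniformly, contradicting the hypothesis. The main obstacle lies precisely here: because $\ideap{P}$ uses the extended input construct, its projections are in general not ground processes and \cite[Theorem~1]{BlanchetAbadiFournetJLAP08} cannot be invoked as a black box. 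The soundness of this extended diff-equivalence for static equivalence must therefore be re-derived directly from the uniformity condition, being careful throughout that the right projection consumes recipes evaluated over the idealised frame rather than over $\phi$.
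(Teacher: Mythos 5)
Your proof is correct and follows essentially the same route as the paper's: both replay the trace $\ta$ through a context built from it, lift that reduction to the bi-process $\ideap{P}$, and contradict uniformity by instantiating a distinguishing test with the bi-terms $\choice{\Phi(w)}{\ideaf(\ta)(w)}$, yielding a reduct whose first projection can reduce while the bi-process itself cannot. The differences are only organisational: the paper folds everything into a single contradiction argument (embedding the test in the trace-replaying context from the start and obtaining the lifting directly from the definition of the encoding), whereas you run a forward simulation justified step by step by uniformity and discharge well-definedness of $\ideaf(\ta)$ explicitly --- a case the paper's written proof leaves implicit.
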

\begin{proof}
  Assume, by contradiction, that there exists an execution
  $(P;\emptyset)\lrstep{\ta} (Q;\Phi)$ with $\Phi \not\sim \ideaf(\ta)$.
  Then there exists a test
  $T = \bigl( \Let\; \vect{y} = \vect{u} \;\In\; \Out(o,\ok) \bigr)$
  with $\fv(T)=\fv(\vect{u})\subseteq\dom(\Phi)$ such that
  $T\Phi$ can perform an output (after the successful evaluation of its
  $\Let$) while $T\ideaf(\ta)$ cannot.
  Further, we can construct in a standard way\footnote{
    Define $C_\ta[\bullet]=(\bullet \mid T_\ta)$
    with $T_\epsilon = T$, $T_{\tau.\ta} = T_\ta$,
    $T_{\Out(c,w).\ta} = \In(c,w).T_\ta$ and
    $T_{\In(c,R).\ta} = \Out(c,R).T_\ta$.
    In full details, the last case should include the creation of names
    that are used in $R$ but not previously in the trace.
  } from $\ta$ a context $C_\ta$
  such that $C_\ta[P] \red^* T\Phi$, with only
  communications and internal reductions of $P$ in that reduction.
  Because of this, the reduction can be lifted to the idealized bi-process
  as follows, by definition of $\ideap{P}$:
  $$C_\ta[\ideap{P}]\red^* 
  T'=T\{w\mapsto\choice{\Phi(w)}{\ideaf(\ta)(w)}\}_{w\in\dom(\Phi)}$$
  We have obtained our contradiction, since $T'$ is not uniform:
  indeed, the reduction step $\fst{T'}\red\Out(c,\ok)$ cannot be obtained
  as a projection of a reduction of $T'$, which in fact cannot perform
  any reduction at all.
\end{proof}

\begin{figure}[t]
\footnotesize
\begin{lstlisting}
let SYSTEM =
( !
  new k : bitstring;
     !
      ((
        new nI: bitstring;
        out(ci, nI);
        in(ci, choice[x,xid]: bitstring);
        let ((=nI,xnr: bitstring)) = dec(x,k) in
        new hole__xnr_I_0: bitstring;
        new hole__k_I_1: bitstring;
        out(ci, choice[enc((xnr,nI),k),enc((hole__xnr_I_0,nI),hole__k_I_1)])
      )|(
        in(cr, choice[ynI,ynIid]: bitstring);
        new nR: bitstring;
        new hole__k_R_2: bitstring;
        out(cr, choice[enc((ynI,nR),k),enc((ynIid,nR),hole__k_R_2)]);
        in(cr, choice[y,yid]: bitstring);
        let ((=nR,=ynI)) = dec(y,k) in
        out(cr, ok)
      ))
).
\end{lstlisting}
\caption{\proverif file checking frame opacity generated by \ukano (Feldhofer)}
\label{fig:check-fo}
\end{figure}

\paragraph{Practical application.}
Our tool \ukano automatically constructs the bi-process described
above from a description of the protocol, and calls the extension of
\proverif  in order to check frame opacity. Until this extension is 
integrated in the next release of \proverif, the source files of this
slight extension of \proverif are distributed with \ukano~\cite{depotANO}.
Out tool does not require the user to input the idealisation function.
Instead, a default idealisation is extracted from the protocol's outputs.
The user is informed about this idealisation, and if he wants to,
he can bypass it using annotations or choose another heuristic
to build idealisation operators. In practice, this is rarely necessary;
we provide more details about this in \Cref{sec:mecha-ukano}.
Also note that, although \proverif does not support the repetition
operator $\rec$, we can over-approximate the behaviours of protocols using
it by replacing occurrences
of~$\rec$ with~$\rep$ before checking frame opacity.


\subsection{Well-authentication} 
\label{sec:mecha-wa}

We explain below how to check condition \emph{(i)} of
well-authentication (see Definition~\ref{condi:auth}).
Once that condition is established, together with frame opacity,
we shall see that condition \emph{(ii)} is actually a consequence of a simple
assumption on the choice of idealisation, which is always guaranteed
when using \ukano. This result is established relying on the
sub-conditions that have been proved to be sufficient in Lemma~\ref{lem:wa-i}.

\subsubsection{Condition (i)}

Condition \emph{(i)} of well-authentication is basically a conjunction of reachability properties,
which can be checked in \proverif using correspondence
properties \cite{abadi2003computer}.
To each syntactical output $\Out(c,m_0)$ of the initiator role, we associate an event, namely
$\mathtt{Iout_i}(\vect k_I, \vect n_I, \vect m_I)$
 which uniquely identifies the action. We have that:
\begin{itemize}
\item $\vect k_I$ are the identity parameters used in the intiator role;
\item $\vect n_I$ are the sessions parameters; and
\item $\vect m_I$ are the messages inputted and outputted so far in this
  role 
  including $m_0$. 
\end{itemize}
Such an event is placed just before the action $\Out(c,m_0)$. We proceed
similarly for each syntactical input $\In(c,m_0)$ putting the event
$\mathtt{Iin_i}(\vect k_I, \vect n_I, \vect m_I)$ just after the
corresponding input. 
Lastly, we also apply this transformation on the responder role using
events of the form $\mathtt{Rout_i}(\vect k_R, \vect n_R, \vect m_R)$ and
$\mathtt{Rin_i}(\vect k_R, \vect n_R, \vect m_R)$.
To be able to express condition $(i)$ relying on events, we need to
consider some events that will be triggered when 
conditional are passed successfully. Therefore, we add events of the
form $\mathtt{Itest_i}(\vect k_I, \vect n_I, \vect m_I)$ (resp. $\mathtt{Rtest_i}(\vect k_R, \vect n_R, \vect m_R)$) at the beginning of
each $\Then$ branch of the initiator (resp. responder) role.

\smallskip{}

For each conditional of the protocol, we first check
if the simple syntactical definition of {\em safe} conditionals
holds (see \Cref{def:safe}).
If it is the case we do nothing for this conditional.
Otherwise, we need to check condition {(i)} of well-authentication.
This condition can be easily expressed as a correspondence
property relying on the 
events we have introduced. 
Let $\vect k_I = (k^1_I, \ldots, k^p_I)$ and $\vect k_R = (k^1_R,
\ldots, k^q_R)$. We denote $\vect x_I = (x_{k^1_I}, \ldots, x_{k^p_I})$
and $\vect x_R = (x_{k^1_R}, \ldots, x_{k^q_R})$. Note that when
$\vect k_I \cap \vect k_R \neq \emptyset$ (shared case), we have also  that
$\vect x_I \cap \vect x_R \neq\emptyset$ and the correspondence
property (see below) will therefore allow us to
express duality of the two underlying agents.

For instance,
given a conditional of the initiator role tagged with
event  $\mathtt{Itest_i}(\vect k_I,\vect n_I, \vect m_I)$,
we express as a correspondence property the fact that
if the conditional is positively evaluated, then the involved 
agent must be associated to a dual agent as follows:
\begin{enumerate}
\item when the event $\mathtt{Itest_i}(\vect x_I,\vect y_I, (z_1,
  \ldots, z_\ell))$ is fired,
\item   there must be a previous event $\mathtt{Iin_i}(\vect x_I,\vect
  y_I, (z_1,\ldots, z_\ell))$
  (the one just before the
  conditional),
\item  and a previous event $\mathtt{Rout_j}(\vect x_R,\vect
  y_R, (z_1,\ldots, z_\ell))$ (the one corresponding to the output that fed the
  input $\mathtt{Iin_i}$ in an honest execution),
\item and a previous event $\mathtt{Rin_j}(\vect x_R,\vect y_R, (z_1,
  \ldots, z_{\ell-1}))$
  (the one just before the output $\mathtt{Rout_j}$), 
 \etc 
\end{enumerate}
Note that by using
the same variables ($z_1, \ldots, z_\ell$) in both the intiator and
responder roles, we express that the
messages that are outputted and inputted
are equal modulo the equational theory $\mathsf{E}$.
We provide in Figure~\ref{fig:FH-with-events} the process obtained by
applying the transformation on the Feldhofer protocol
(Example~\ref{ex:protocol-FH}). In Figure~\ref{fig:query}, we show the
\proverif queries we
have to consider to check
condition~{(i)} on the two conditionals.

\begin{figure}
\footnotesize
\begin{lstlisting}
let SYSTEM = ( !  new k : bitstring; !((
       new nI:bitstring;
       event Iout_1(k,nI,nI); out(ci, nI);
       in(ci, x:bitstring); event Iin_1(k,nI,nI,x)
       let ((=nI,xnr:bitstring)) = dec(x,k) in event Itest_1(k,nI,nI,x);
       event Iout_2(k,nI,nI,x,enc((xnr,nI),k)); out(ci, enc((xnr,nI),k))
      )|(
       new nR: bitstring;   
       in(cr, ynI: bitstring); event Rin_1(k,nR,ynI);
       event Rout_1(k,nR,ynI,enc((ynI,nR),k)); out(cr, enc((ynI,nR),k));
       in(cr, y:bitstring); event Rin_2(k,nR,ynI,enc((ynI,nR),k),y);
       let ((=nR,=ynI)) = dec(y,k) in event Rtest_1(k,nR,ynI,enc((ynI,nR),k),y);
       event Rout_2(k,nR,ynI,enc((ynI,nR),k),y,ok); out(cr, ok)
      ))).
\end{lstlisting}
\caption{Process modelling the Feldhofer protocol with events} 
\label{fig:FH-with-events}
\end{figure}

\begin{figure}
\footnotesize
\begin{minipage}{0.45\textwidth}
\begin{lstlisting}
query x:bitstring, 
      y1:bitstring, y2:bitstring,
      z1:bitstring, z2:bitstring;
   (event(Itest_1(x,y1,z1,z2))  ==>
   (event(Iin_1(x,y1,z1,z2))  ==>
   (event(Rout_1(x,y2,z1,z2))  ==>
   (event(Rin_1(x,y2,z1))  ==>
   (event(Iout_1(x,y1,z1))))))).
\end{lstlisting}
\end{minipage}
 \begin{minipage}{0.50\textwidth}
 \begin{lstlisting}
query x:bitstring, y1:bitstring, 
      y2:bitstring, z1:bitstring, 
      z2:bitstring, z3:bitstring;
   (event(Rtest_1(x,y2,z1,z2,z3))  ==>
   (event(Rin_2(x,y2,z1,z2,z3))  ==>
   (event(Iout_2(x,y1,z1,z2,z3))  ==>
   (event(Iin_1(x,y1,z1,z2))  ==>
   (event(Rout_1(x,y2,z1,z2))  ==>
   (event(Rin_1(x,y2,z1))  ==>
   (event(Iout_1(x,y1,z1))))))))).
\end{lstlisting}
 \end{minipage}
\caption{\proverif queries for checking condition {(i)} on the
  Feldhofer protocol}
\label{fig:query}
\end{figure}

\paragraph{Some practical considerations.}
In our tool, safe conditionals are not automatically identified.
Actually, the tool lists all conditionals and tells which ones
satisfy condition (i) of well-authentication.
The user can thus easily get rid of the conditionals that he identifies as safe.
Furthermore, the structure of the \proverif file produced by \ukano
makes it easy for the user to remove the proof obligations corresponding
to safe conditionals. To obtain more precise encodings once the
translation in Horn clauses is performed by \proverif, we sometimes
push the creation of session parameters (i.e. instructions of the form
$\new\, \mathtt{nI}$).
Therefore, in order to ensure the existence of at
least one session parameter in each event, we systematically introduce
a fresh
session parameter \texttt{sessI} (resp. \texttt{sessR}) which is
is generated at the beginning
of the initiator (resp. responder) role. Such parameters are
systematically added in the events, and since they do not occur in the
messages exchanged during the protocol execution, there is no need to
push them.

Note that, for some examples,
we also verified condition (i) of well-authentication using \tamarin by encoding 
the queries described above as simple lemmas. 
In our case, one of the most important advantage of \tamarin over \proverif is
its capability to model the repetition operator~$\rec$ and thus protocols
for which a role executes its sessions in sequence.
Relying on \tamarin, we were thus able to verify
condition (i) for protocols that ensure unlinkability when
sessions are running sequentially but not when they are running
concurrently, \eg we automatically verified the toy example
described in \Cref{ex:toyseq}.

\subsubsection{Condition (ii) - shared case}
\label{sec:mecha-wa-ii}
To verify Condition (ii) of well-authentication,
we rely on Lemma~\ref{lem:wa-i} which provides two sufficient sub-conditions.
Condition $(a)$ of Lemma~\ref{lem:wa-i} can be checked
manually; \ukano leaves it to the user.
Condition $(b)$ may in general be very difficult to verify.
While it is surely possible to reduce the verification of this sub-condition
to classical reachability properties verifiable in \proverif, we
prefer to give a more direct verification technique.

Indeed, once frame opacity is known to hold, condition $(b)$ actually follows immediately
from simple properties of the idealisation function, since checking that 
honest outputs cannot be confused in executions of $\pMa$ is equivalent to 
checking that they cannot be confused in idealised executions.
Often, the idealisation function uses only function symbols that do not 
occur in $\E$ and such that at least one session variable $x^\varN \in \X^\varN$ 
occurs in $\ideam{\ell}$ for each honest output label~$\ell$.
Checking that the idealisation function enjoys these properties is 
straightforward. Let us now show that it implies
condition $(b)$ of Lemma~\ref{lem:wa-i}.

\begin{proposition}
  Let $\Pi = (\vect{k}, \vect n_I, \vect n_R, \dagger_I, \dagger_R, \ini, \res)$
  be a protocol such that  $\fn(\ini)\cap\fn(\res)\neq\emptyset$
  (shared case).
  Consider an idealisation operator
  $\ideam{\cdot}$ such that, for any label $\ell\in\mathcal{L}$
  occurring in the honest execution of $\Pi$,
    some name variable $x \in \X^\varN$ appears in
    $\ideam{\ell}$ in a position only under symbols $\ffun\in\Sigma_c$
    that do not occur in equations of $\E$.
  If $\Pi$ satisfies frame opacity for the idealised operator
  $\ideam{\cdot}$ then condition (b) of Lemma~\ref{lem:wa-i} holds.
\label{prop:mecha:fo:default}
\end{proposition}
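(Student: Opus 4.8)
The plan is to argue by contradiction, using frame opacity to transport an equality of real outputs into an equality of their idealisations, and then to exploit the free-constructor position guaranteed by the hypothesis to produce a clash between two distinct fresh names assigned by the injective function $\fr$. First I would set up the contradiction: suppose condition (b) of Lemma~\ref{lem:wa-i} fails for some execution $(\pMa;\emptyset)\sint{\ta}(\p;\phi)$, so that there are honest restrictions $\restrict{\ta}{a_1,b_1}$ and $\restrict{\ta}{a_2,b_2}$ with $a_1\neq a_2$, a common label $\ell$, and outputs $\ell:\Out(c,w_1)\annot{a_1}$ and $\ell:\Out(c,w_2)\annot{a_2}$ occurring in these honest interactions with $\phi(w_1)\theo\phi(w_2)$. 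Since both outputs occur in honest interactions, $\ell$ is a label occurring in an honest execution of $\Pi$, hence the hypothesis on $\ideam{\ell}$ applies.

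Next I would invoke frame opacity. By assumption $\Pi$ satisfies frame opacity for $\ideamstef$, so $\ideaf(\ta)$ is defined and $\phi\sim\ideaf(\ta)$. Taking the handles $w_1,w_2$ as recipes and using $\phi(w_1)\theo\phi(w_2)$, static equivalence immediately yields $\ideaf(\ta)(w_1)\theo\ideaf(\ta)(w_2)$. By Definition~\ref{def:idealphi}, these two idealised messages both arise from the same template: $\ideaf(\ta)(w_i)=\ideam{\ell}\sigma^\varI_i\sigma^\varN_i\redv$, where the name substitutions differ precisely because the two outputs carry the distinct annotations $a_1$ and $a_2$; concretely $\sigma^\varN_1(x)=\fr(a_1,x)$ whereas $\sigma^\varN_2(x)=\fr(a_2,x)$ for each $x\in\X^\varN$.

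The core of the argument, which I expect to be the main obstacle, is a rigidity property of the free constructors. Let $p$ be the position in $\ideam{\ell}$ witnessing the hypothesis: the name variable $x\in\X^\varN$ sits at $p$ and every symbol strictly above $p$ is a constructor in $\Sigma_c$ not occurring in any equation of $\E$. Since these symbols are constructors, item~2 of Definition~\ref{def:computation-rel} (together with uniqueness, item~6, which fixes the head symbol of a free constructor) guarantees that the spine above $p$ is preserved by the computation $\redv$, while $n\redc n$ (item~1) keeps the leaf in place. Consequently the subterm at $p$ is the name $\fr(a_1,x)$ in $\ideaf(\ta)(w_1)$ and the name $\fr(a_2,x)$ in $\ideaf(\ta)(w_2)$, and the free-constructor symbols along the path agree in both terms because they come from the common template $\ideam{\ell}$. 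The delicate point is then the following lemma on $\E$, which I would isolate and prove by induction on the length of the path: if $u\theo v$ and the path to a position $q$ runs only through constructors absent from $\E$ that agree in $u$ and $v$, then $u|_q\theo v|_q$. This reduces to the standard fact that a constructor $\ffun$ not occurring in $\E$ behaves freely, i.e.\ $\ffun(u_1,\ldots,u_k)\theo\ffun(v_1,\ldots,v_k)$ forces $u_j\theo v_j$ for every $j$, and never equates an $\ffun$-headed term with a term of a different head symbol.

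Applying this rigidity lemma to $u=\ideaf(\ta)(w_1)$, $v=\ideaf(\ta)(w_2)$ and $q=p$ yields $\fr(a_1,x)\theo\fr(a_2,x)$. Since the equations generating $=_\E$ range over $\T(\Sigma_c,\X)$ and mention no name, distinct names are never equated by $\E$, so this forces the syntactic equality $\fr(a_1,x)=\fr(a_2,x)$; as $a_1\neq a_2$, this contradicts the injectivity of $\fr$. Hence $\phi(w_1)\nottheo\phi(w_2)$, which establishes condition (b) of Lemma~\ref{lem:wa-i} and completes the proof.
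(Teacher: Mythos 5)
Your proof is correct and takes essentially the same route as the paper's: argue by contradiction, use frame opacity and static equivalence (with $w_1,w_2$ as recipes) to transfer $\phi(w_1) =_\E \phi(w_2)$ to the idealised frame, exploit the free-constructor spine above the name variable to extract $\fr(a_1,x) =_\E \fr(a_2,x)$, and contradict the injectivity of $\fr$ --- your rigidity lemma is exactly the step the paper compresses into the phrase ``non-malleable constructor symbols.'' One small repair: the fact that distinct names are never equated by $\E$ does not follow merely from the equations mentioning no names (the degenerate theory generated by $x = y$ mentions no names yet equates everything); it requires the paper's standing assumption that $\E$ is non-degenerate, which is precisely what the paper's own proof invokes at this final step.
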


\begin{proof}
  Consider an execution $\ta$ of $\pMa$ where agent $a_1$ performs an
  output with label $\ell$ and handle $w_1$, and agent $a_2\neq a_1$
  performs another output with label $\ell$ and handle $w_2$.
  We assume that $\ell$ occurs in the honest execution of $\Pi$
  and we note $\phi$ the resulting frame from the above execution.
  Assume, for the sake of contradiction, that
  $\phi(w_1) =_\E \phi(w_2)$. Since the protocol ensures frame opacity
  for the idealised operator $\ideam{\cdot}$, we deduce that
  $\ideaf^\fr(\ta)(w_1) =_\E \ideaf^\fr(\ta)(w_2)$.
  By hypothesis,
  some name variable $x_1\in\X^\varN$ occurs in $\ideam{\ell}$
  in a position which (even after a substitution) cannot be
  erased by the equational theory nor the computation relation.
  In other words we have that
  $\fr(a_1,x^\varN)$ occurs in $\ideaf^\fr(\ta)(w_1)$,
  and similarly
  $\fr(a_2,x^\varN)$ occurs in $\ideaf^\fr(\ta)(w_2)$,
  at the same position under non-malleable constructor symbols
  only. Since we have assumed (in \Cref{subsec:term})
  that our equational theory is non-degenerate,
  this implies that $\fr(a_1,x^\varN)=_\E\fr(a_2,x^\varN)$ and
  contradicts the injectivity of $\fr$.
\end{proof}

\subsection{The tool \ukano}
\label{sec:mecha-ukano}
As mentioned earlier,
the tool \ukano~\cite{depotANO}
automatises the encodings described in this section.
It takes as input a \proverif model specifying the protocol to be verified
(and the identity names~$\vect\id$) and returns:
\begin{enumerate}
\item whether frame opacity could be established or not: in particular, it infers  
an idealisation operator that, when in the shared case, satisfies
the assumptions of \Cref{prop:mecha:fo:default}; 
\item and the list of conditionals for which condition (i) of well-authentication holds.
\end{enumerate}
If frame opacity holds and condition~(i) of well-authentication holds for all conditionals
--- possibly with some exceptions for conditionals the user can identify
as safe --- then the tool concludes that the protocol given as input ensures unlinkability and anonymity w.r.t.~$\vect{\id}$.
Note that the tool detects whether $\fn(\ini) \cap \fn(\res) =
\emptyset$ or not 
and adapts the queries for verifying item~(i) of well-authentication accordingly.
Our tool uses heuristics to build idealised operators
that always satisfy the assumptions
of \Cref{prop:mecha:fo:default}.
Actually, three different heuristics have been implemented.
\smallskip{}

\begin{description}
\item[Syntaxic heuristic.]
The syntaxic heuristic fully adopts the canonical syntactical construction from \Cref{subsec:frame-opacity}
(and displays a warning message when in the shared case, since all requirements are not met in this case).
It can be enabled using the option {\texttt{--ideal-syntaxic}}.

\item[Semantic heuristic.]
The semantic heuristic (enabled with the option \texttt{--ideal-semantic})
follows the semantical construction from \Cref{subsec:frame-opacity} with only tuples identified
as transparent. Roughly,  idealisation of a tuple is a tuple of idealisations of the corresponding 
sub-terms and
idealisation of any other term is a fresh session variable in $\mathcal{X}^\mathsf{n}$.
Such an idealised operator is much less precise (\ie may lead to more false negatives) but since
idealised messages are much simpler, it allows better performance when it works.

\item[Quasi-syntaxic heuristic.]
This heuristic follows the canonical syntactical construction
described in \Cref{subsec:frame-opacity} except that sub-terms having
a function symbol at top-level that is involved in the equational theory will be replaced by
a fresh session name in order to comply with hypothesis of \Cref{prop:mecha:fo:default}.
This is the default heuristic in UKano.
\end{description}

Finally, the user can also define its own idealisations and  the tool
\ukano will check that assumptions of 
\Cref{prop:mecha:fo:default} are satisfied when in the shared case.

\smallskip{}

At a technical level, we built \ukano on top of \proverif.
We only re-used the lexer, parser and AST of \proverif and build upon those 
a generator and translator of \proverif models implementing our sufficient conditions via the above encodings.
This effort represents about 2k OCaml LoC.
The official page of the tool \ukano with distributed releases of the tool
can be found at \url{http://projects.lsv.ens-cachan.fr/ukano/}.
We also distribute \proverif v1.97 modified for handling extended
diff-equivalence (see \Cref{sec:mecha-fo}). {The difference
  between our modified version of \proverif v1.97 and the original one is about 60 lines of code.}

\newcommand{\mongen}{\mathsf{gen}}

\newcommand{\hfun}{\mathsf{h}}
\newcommand{\montag}{\mathsf{Tag}}
\newcommand{\reader}{\mathsf{Reader}}
\newcommand{\verifier}{\mathsf{Verifier}}
\newcommand{\issuer}{\mathsf{Issuer}}
\newcommand{\client}{\mathsf{Client}}
\newcommand{\tpm}{\mathsf{TPM}}
\newcommand{\PACE}{\mathsf{PACE}}
\newcommand{\ar}{\cdot}

\section{Case studies}
\label{sec:casestudies}
In this section we apply our verification method to several case studies.
We rely on our tool \ukano
to check whether the protocol under study satisfies frame opacity
and well-authentication as defined in Section~\ref{sec:approach}.
We also discuss some variations of the protocols
to examine how privacy is affected. 
Remind that if privacy can be established for concurrent sessions (\ie $\dagI=\dagR=!$)
then it implies privacy for all other scenarios as well, \ie when
$\dagI, \dagR \in \{\rec, \rep\}$.
We thus model protocols with concurrent sessions and discuss 
alternative scenarios only when attacks are found.
The source code of our tool and material to reproduce
results can be found  at
\begin{center}
\url{http://projects.lsv.ens-cachan.fr/ukano/}.
\end{center}
All case studies discussed in this section except two
(\ie DAA in \Cref{subsec:pace} and ABCDH in \Cref{sec:irma}) have been automatically
verified using our tool \ukano without any manual effort.
We discuss little manual efforts needed to conclude for DAA and ABCDH in the
dedicated sections.
We used UKano~\texttt{v0.5} based on ProVerif~v1.97
on a computer with following specifications:
\begin{itemize}
\item OS: Linux 3.10-2-amd64 \#1 SMP Debian 3.10.5-1x86\_64 GNU\slash Linux
\item CPU \slash\ RAM: Intel(R) Xeon(R) CPU X5650 @ 2.67GHz  \slash\ 47GO
\end{itemize}

\subsection{Hash-Lock protocol}
\label{subsec:hl}

We consider the Hash-Lock protocol 
as described in~\cite{juels2009defining}. This is an RFID protocol that has been designed to achieve privacy even if no formal proof is
given. We suppose that,
initially, each tag  has his own key~$k$ and the reader maintains a database containing
those keys. The protocol relies on a hash function, denoted $\hfun$,  and can be informally
described as follows.
$$
\begin{array}{rcll}
\reader & \to & \montag: & n_R\\
\montag & \to & \reader: & n_T, \;\hfun(n_R,n_T,k)\\
\end{array}$$

This protocol falls into our generic class of $2$-party protocols in the shared case, and
frame opacity and well-authentication can be automatically established 
in less than 0.01 second. We can therefore conclude that the protocol preserves
unlinkability (note that anonymity does not make sense here).
Actually, all implemented heuristics 
were able to successfully establish frame opacity automatically.

figure

\subsection{LAK protocol}
\label{subsec:lak}

We present an RFID protocol first introduced in~\cite{LAK'06}, and we refer
to the description given in~\cite{van2008attacks}.
To avoid traceability attacks, the main idea is to ask the tag to
generate a nonce and to use it to send a different message at 
each session.
We suppose that
initially, each tag  has his own key~$k$ and the reader maintains a database containing
those keys. 
The protocol is informally described below ($\hfun$ models a hash
function). 
In the original version (see \emph{e.g.}~\cite{van2008attacks}),
in case of a successful
execution, both parties update the key $k$ with $\hfun(k)$ (they
always store the last two keys).
Our framework does not allow one to model protocols that rely on a mutable
state. Therefore,
we consider here a version where the key is not updated at the end of
a successful execution allowing the key $k$ to be reused from one
session to another.
This protocol lies in the shared case
since the identity name $k$ is used by the reader and the tag.
$$
\begin{array}{rcll}
\reader & \to & \montag: & r_1\\
\montag & \to & \reader: & r_2, \; \hfun(r_1 \oplus r_2 \oplus k)\\
\reader & \to & \montag: & \hfun(\hfun(r_1 \oplus r_2 \oplus k) \oplus k \oplus r_1)
\end{array}
$$

Actually, this protocol suffers from an authentication attack. The
protocol does not allow the reader to authenticate the  tag. 
This attack can be informally described as follows (and already exists
on the original version of this protocol). By using algebraic properties
of $\oplus$, an attacker can impersonate a tag 
by injecting
previously eavesdropped messages. Below, $I(\mathsf{A})$ means that the attacker plays
the role $\mathsf{\aagent}$.
$$
\begin{array}{rcll}
I(\reader) & \to & \montag: & r_1 \\
\montag & \to & \reader: & r_2, \; \hfun(r_1 \oplus r_2 \oplus k)\\[2mm]
\reader & \to & \montag: & r'_1 \\
I(\montag) & \to & \reader :& r_2^I, \; \hfun(r_1 \oplus r_2\oplus k)\\
\reader & \to & \montag: & \hfun( \hfun({r_1 \oplus r_2}\oplus k)
                              \oplus k \oplus r'_1)
\end{array}
$$
where $r_2^I = r_1' \oplus r_1 \oplus r_2$, thus 
$\hfun(r_1 \oplus r_2 \oplus k)\theo
\hfun(r_1' \oplus r_2^I \oplus k)$.

\smallskip{}

Due to this, the protocol does not satisfy our well-authentication
requirement {even with sessions in sequence for $\montag$ and $\reader$.}
Indeed, the reader can end a session with a tag whereas
the tag has not really participated to this session. In other words,
the reader passes a test (which does not correspond to a safe
conditional) with success, and therefore performs a $\taut$
action whereas it has not interacted honestly with a tag.
Actually, this trace can be turned into 
an attack against the
unlinkability property (for any combination of $\dagI, \dagR \in \{\rec, \rep\}$). Indeed, by continuing the previous trace,
the reader can send a new request to the tag generating a fresh nonce
$r''_1$. The attacker $I(\montag)$ can again
answer to this new request choosing his nonce $r''_2$ accordingly,
\ie $r''_2  = r_1'' \oplus r_1 \oplus r_2$. This execution, involving
two sessions of the reader talking to the same tag, cannot be
mimicked in the single session scenario, and corresponds to an attack
trace.

More importantly, this scenario can  be seen as a traceability attack 
on the stateful version of the protocol leading to a practical attack.
The attacker will first start a session with
the targeted tag by sending it a nonce~$r_1$ and storing its answer.
Then, later on, he will interact with
the reader as described in the second part of the attack scenario. Two
situations may occur: either the interaction is successful meaning
that the targeted tag has not been used since its last interaction
with the attacker; or the interaction fails meaning that the key has
been updated on the reader's side, and thus the targeted tag has performed a
session with the reader since its last interaction with the attacker.
This attack shows that the reader may be the source of leaks exploited by the attacker
to trace a tag. This is why we advocate for the strong notion of unlinkability
we used, taking into account the reader and considering it as important as the tag.

We may note that the same protocol was declared untraceable
in~\cite{van2008attacks} due to the fact that they have in mind a weaker
notion of unlinkability. Actually, their notion captures the
  intuitive notion that a tag is untraceable if for any execution in
  which two actions are performed by the same tag, there is another
  execution indistinguishable from the original one in which the
  actions have been performed by two different tags. We may
  note that in the attack scenario described above, 
 the tag in itself does not leak anything but the reader does,
  explaining why this weak notion of untraceability missed this attack.

Now, to avoid the algebraic attack due to the properties of the xor
 operator, we may replace it by the pairing operator. 
The resulting protocol is a 2-party protocol that falls into our
class, and for which frame opacity and well-authentication can be
established {(with concurrent sessions)} using \ukano (any heuristic
is suitable for that).
 Therefore, Theorem~\ref{theo:main} allows
us to conclude that it preserves unlinkability.


\subsection{BAC protocol and some others}
\label{subsec:bac}

An e-passport is a paper passport with an RFID chip that stores the
critical information 
printed on the passport. The International Civil Aviation Organization
(ICAO) standard~\cite{ICAO-Passport}
specifies several protocols through which this
information can be accessed.
Before executing the Basic Access Control (BAC) protocol,
the reader optically scans a weak secret from which
it derives two keys $k_E$ and $k_M$ that are then shared between the
passport  and the reader.
Then, the BAC protocol establishes a key seed from which two sessions keys
are derived. 
The session keys are then used to 
prevent skimming and eavesdropping on subsequent
communications.

In~\cite{arapinis-csf10}, two variants
of the BAC protocol are
described and analysed. We refer below to these two variants as the French version and the United Kingdom (U.K.) version.
The U.K.~version is claimed unlinkable (with no formal proof)
whereas an attack is reported on
the French version. 
We first give an informal  description of the BAC protocol
using Alice \& Bob notation:
$$
\begin{array}{ll}
\montag \to \reader: & n_T\\
\reader \to \montag: & \{n_R,n_T,k_R\}_{k_E},
                       \mac(\{n_R,n_T,k_R\}_{k_E}, k_M)\\
\montag \to \reader: & \{n_T,n_R,k_T\}_{k_E},
                       \mac(\{n_T,n_R,k_T\}_{k_E}, k_M)\\
\end{array}
$$

Then, to explain the difference between the two
versions, we give a description of the passport's role  in
Figure~\ref{fig:passport}. 
\begin{figure}[t]
$$
\begin{array}{ll}
T({k_E},{k_M}) \;  = &
\new n_T. \new k_T. 
\Out(c_T, n_T).\In(c_T, x).\\
&
\Let \; x_E = \projl{x}, \;x_M = \projr{x},  \;
  z_{\mathsf{test}} = \eq(x_M, \mac(x_E,k_M)) \; \In\\
&\hspace{0.7cm} \Let \;  z'_{\mathsf{test}} =
  \eq(n_T,\projl{\projr{\sdec(x_E,{k_E})}}) \; \In \;
\Out(c_T,\langle m, \mac(m,{k_M})\rangle)\\
\phantom{\Let\; } 
&\hspace{0.7cm} \Else \;\Out(\mathsf{{error}_{Nonce}}) \\
&\Else \; \Out(\mathsf{error_{Mac}}) \\
\end{array}
$$
where $m = \senc(
\langle n_T, \langle \projl{\sdec(x_E,k_E)},k_T\rangle\rangle,{k_E})$.
\caption{Description of the passport's role}
\label{fig:passport}
\end{figure}

 We do not model the
  \textsf{getChallenge} constant message that is used to initiate the
protocol but it is clear this message does not play any role
regarding the security of the protocol.
We consider the  signature given in~Example~\ref{ex:signature}
augmented with a
function symbol $\mac$ of arity $2$. This is a public constructor 
 whose purpose is to model message
authentication code,
taking as arguments the message to authenticate and the mac key.
There is no rewriting rule and no equation
regarding this symbol.
We also assume public constants to model error messages.
The U.K.~version of the protocol does not distinguish the two
cases of failure, \emph{i.e.} $\mathsf{error_{Mac}}$ and
$\mathsf{error_{Nonce}}$ are the same constant, whereas the French
version does.
 The relevant point is the fact that, in
case of failure, the French version sends a different error message
indicating whether the failure occurs due to a problem when checking
the mac, or when checking the nonce. This allows the attacker to
exploit this conditional to learn if the mac key of a tag is the one used
in a given message  $\langle m,\mac(m,k) \rangle$. Using this, he can very easily
trace a tag~$T$ by first eavesdropping an honest interaction between the tag~$T$
and a reader.

The U.K.~version of the BAC protocol is a 2-party protocol
according to our definition. Note that since  the two
error messages are actually identical, we can merge the two
\textsf{let} instructions, and therefore satisfy our definition of
being a responder role.
Then, we automatically proved frame opacity and
well-authentication using \ukano.
It took less than 0.1 second independently of the chosen heuristic
regarding frame opacity.
Therefore, Theorem~\ref{theo:main} allows us to conclude that
unlinkability is indeed satisfied.

Regarding the French version of this protocol, it happens that the
passport's role is neither an initiator role, nor a responder role
according to our formal definition. Indeed, our definition of a role,
and therefore of a 2-party protocol does not allow to model two
sequences of tests that will output different error messages in case
of failure. As illustrated by the attack on the French version of the
BAC protocol, imposing this syntactic condition is actually a good design
principle w.r.t. unlinkability.

Once the BAC protocol has been successfully executed, the reader
gains access to the information stored in the RFID tag through
the Passive and Active Authentication protocols
(PA and AA).
They are respectively used to prove authenticity of
the stored information and prevent cloning attacks,
and may be executed in any order.
A formal description of these protocols is available in~\cite{ACD-csf12}.
These two protocols also fall into our class and our conditions can be
checked automatically both for unlinkability and anonymity
properties.
We can also use our technique to analyse directly the
three protocols together (\ie the U.K.~version of the BAC together with 
the PA and AA protocols in any order).
We analysed both orders, \ie BAC followed by PA, and then AA, as well
as BAC following by AA, and then PA.
We establish unlinkability and anonymity
w.r.t.\ all private data stored in the RFID chip
(name, picture, \etc). \ukano concludes within 1 second to
establish both well-authentication and frame opacity (independently of
the selected heuristic).


\subsection{PACE protocol}
\label{subsec:pace}

The Password Authenticated Connection
Establishment protocol 
(PACE)
has been proposed by the German Federal Office for
Information Security (BSI) to replace the BAC protocol.
It has been studied in the literature~\cite{bender2009security},
\cite{bender2012pace},
\cite{cheikhrouhou2012merging} 
but to the best of our knowledge, no formal proofs about privacy 
have been given.
Similarly to BAC, its purpose is to establish a secure channel based 
on an optically-scanned key~$k$.
%
%
This is done in four main steps (see Figure~\ref{fig:pace:AB}):
\begin{itemize}
\item The tag chooses a random number $s_T$,
encrypts it with the symmetric key $k$ shared between the tag and the reader 
and sends the encrypted random number to the reader (message 1).
\item Both the tag and the reader perform a Diffie-Hellman exchange  (messages 2 \& 3),
  and derive~$G$ from $s_T$ and $g^{n_R n_T}$.
\item  The tag and the reader perform a Diffie-Hellman exchange
  based on the parameter $G$ computed at the previous step  (messages 5 \& 6).
\item The tag and the reader derive a session key $k'$ 
which is confirmed by exchanging and checking the authentication
tokens (messages 8 \& 9).
\end{itemize}
Moreover, at step 6, the reader is not supposed to accept
as input a message which is equal to the previous message that it has
just sent.
\begin{figure}[t]
  \centering
  $
  \begin{array}{lrcll}
    1.&\montag& \to &\reader : & \{s_T\}_{k}\\
    2. &\reader & \to & \montag: & g^{n_R}\\
    3. &\montag & \to & \reader: & g^{n_T}\\
    4. & \multicolumn{4}{l}{\mbox{ Both parties compute $G  = \mongen(s_T,g^{n_R n_T})$.}}\\
    5. & \reader &\to & \montag: & G^{n'_R} \\
    6. & \montag & \to & \reader: & G^{ n'_T}\\
    7. & \multicolumn{4}{l}{\mbox{Both parties compute $k' =G^{n'_R n'_T}$}}\\
    8. &\reader & \to & \montag: & \mac(G^{n'_T}, k')\\
    9.& \montag & \to &\reader: & \mac(G^{n'_R},k')
  \end{array}
  $
  \caption{PACE in Alice \& Bob notation}
\label{fig:pace:AB}
\end{figure}

\smallskip{}

To formalise such a protocol, we consider $\Sigma_c = \{\senc, \; \sdec, \; \dihe, \; \mac, \; \mongen, \; \gconst,
\;\ok\}$, and  $\Sigma_d = \{ \mathsf{neq}\}$.

\noindent Except $\gconst$ and $\ok$ which are public constants, all these
function symbols are public constructor symbols of arity 2.
The destructor $\mathsf{neq}$ has already be defined in \Cref{ex:neq}.
The symbol $\dihe$ is used to
model modular exponentiation whereas $\mac$ will be used to model
message authentication code. We consider the equational theory $\E$
defined by the following equations:
$$\sdec(\senc(x,y),y) \;=\; x\qquad 
\dihe(\dihe(x,y),z) \;=\; \dihe(\dihe(x,z),y)
$$

\begin{figure}[t]
  \centering
$  \begin{array}{ll}
    \res_\PACE  := & \In(c_R, y_1). \\
    &\Out(c_R,  \dihe(\gconst, n_R)). \In(c_R, y_2). \; \\
    & \Out(c_R, \dihe(G,n'_R)).  \In(c_R, y_3). \\
    & \Let  \; y_\mathsf{test} = \mathsf{eq}(\mathsf{yes},\mathsf{neq}(y_3, \dihe(G,n'_R))) \; \In\\
    &\ \ \Out(c_R, \mac(y_3,k')); \\
    &\ \   \In(c_R, y_4). \\
    &\ \  \Let \; y_5 = \eq(y_4, \mac(\dihe(G,n'_R), k')) \; \In \; R'.
  \end{array}$

\smallskip{}

\noindent where $G =  \mongen(\sdec(y_1,k) , \dihe(y_2, n_R))$ and $k' = \dihe(y_3,n'_R)$.

  \caption{Process $\res_\PACE$}
\label{fig:pace:proc}
\end{figure}

We consider the process $\res_\PACE$ as described in
Figure~\ref{fig:pace:proc}. We do not detail the continuation~$R'$ {and we omit trivial conditionals}.
The process modelling the role $\ini_\PACE$ can be obtained in a similar way. 
Then, we consider $\Pi_\PACE = (k, (s_T, n_T,n'_T), (n_R, n'_R), !, !, 
\ini_\PACE, \res_\PACE)$ which falls into our generic class of
$2$-party protocols.
Unfortunately, \proverif cannot handle the equation above on
  the $\dihe$ operator (due to some termination issues). 
Instead of that single equation,
we consider the following equational theory that is more
suitable for \proverif:
$$
\dihe(\dihe(\mathsf{g},y),z) \;=\; \dihe(\dihe(\mathsf{g},z),y) \qquad
\dihe(\dihe(\mongen(x_1,x_2),y),z) \;=\; \dihe(\dihe(\mongen(x_1,x_2),z),y) 
$$

\noindent This is sufficient for the protocol to work properly but it obviously lacks
equations that the attacker may exploit.

\smallskip{}

First, we would like to highlight an imprecision in the official
specification 
that may lead to practical attacks on unlinkability.
As the specification seems to not forbid it, we could have 
assumed that the decryption operation in $G =  \mongen(\sdec(y_1,k) , \dihe(y_2, n_R))$ is implemented
in such a way that it may fail when the key $k$ does not match with the key of the ciphertext
$y_1$. In that case, an attacker could eavesdrop a first message $c^0=\senc(s_T^0,k^0)$ of a certain tag $T^0$
and then, in a future session, it would let the reader optically scan a tag $T$ but replace its challenge $\senc(s_T,k)$
by $c^0$ and wait for an answer of the reader. If it answers, he learns that the decryption did not fail and
thus $k=k^0$: the tag $T$ is actually~$T^0$.
We discovered this attack using our method since, in our first attempt to model the protocol,
we modelled $\sdec(\cdot,\cdot)$ as a destructor (that may fail) and the computation of $G$ as an evaluation:
$$
\Let \; G = \mongen(\sdec(y_1,k), \dihe(y_2,n_R)) \; \In\; [...]
$$
In order to declare the protocol well-authenticating, this conditional
computing~$G$ which is not safe has to satisfy our requirement (see
Definition~\ref{condi:auth}). However, as witnessed by the attack
scenario described above (the reflection attack),  the condition actually
fails to hold.
Incidentally, the same attack scenario shows that the protocol does not ensure unlinkability
(this scenario cannot be observed when interacting with~$\pS$).
Similarly to the attack on LAK, we highlight here the importance to take the reader
into account and give it as much importance as the tag in the definition of unlinkability.
Indeed, it is actually a leakage from the reader that allows an
attacker to trace a specific tag.

\smallskip{}

Second, we now consider that decryption is a constructor, and thus
cannot fail, an we report on an attack
that we discovered using our method
on some models of PACE found in the literature~\cite{bender2009security},\cite{bender2012pace},\cite{cheikhrouhou2012merging}.
Indeed, in all those papers,
the first conditional of the reader
%
$$ \Let  \; y_\mathsf{test} = \mathsf{eq}(\mathsf{yes},\mathsf{neq}(y_3, \dihe(G,n'_R))) \; \In$$
\noindent is omitted.
Then the resulting protocol is not well-authenticating.
To see this, we
simply have to consider a scenario where the attacker will send to the
reader the message it has outputted at the previous step.
Such an execution will allow the reader to execute its role until the
end, and therefore execute $\taut$, but the resulting trace is
not an honest one. Again, this scenario can be turned into an attack
against unlinkability as explained next.
As before, an attacker could eavesdrop a first message $c^0=\senc(s_T^0,k^0)$ of a certain tag $T^0$.
Then, in a future session, it would let the reader optically scan a tag $T$ but replace its challenge $\senc(s_T,k)$
by $c^0$. Independently of whether~$k$ is equal to~$k^0$ or not, the reader answers $g^{n_R}$. The attacker then plays the two rounds of
Diffie-Hellman by reusing messages from the reader (he actually
performs a reflection attack).
More precisely, he replies with $g^{n_T}=g^{n_R}$, $G^{n'_T}=G^{n'_R}$ and $\mathsf{mac}(G^{n'_R}, k')=\mathsf{mac}(G^{n'_T}, k')$.
The crucial point is that the attacker did not prove he knows~$k$ (whereas he is supposed to do so to generate $G$ at step 4)
thanks to the reflection attack that is not detected.
Now, the attacker waits for the reader's answer. If it is positive (the process $R'$ is executed),
he learns that $k=k^0$: the tag $T$ is actually
the same as $T^0$.

\smallskip{}

Third, we turn to PACE as properly understood from the official specification:
when the latter test is present and the decryption may not fail.
In that case, we report on a new attack.
\ukano~found that the last test of the reader violates well-authentication.
This is the case for the following scenario: the message $\senc(s_T,k)$ sent by a tag $T(k,n_T)$
is fed to two readers $R(k,n_R^1),R(k,n_R^2)$ of same identity name.
Then, the attacker just forwards messages from one reader to the other. They can thus complete
the two rounds of Diffie-Hellman (note that the test avoiding reflection attacks holds).
More importantly, the mac-key verification phase (messages~8 and~9 from Figure~\ref{fig:pace:AB})
goes well and the attacker observes that the last conditional of the two readers holds.
This violates well-authentication but also unlinkability because the latter scenario cannot
be observed at all in $\pS$: if the attacker makes two readers talk to each other in $\pS$ they cannot
complete a session because they must have different identity names.
In practice, this flaw seems hard to exploit but it could be a real privacy concern:
if a tag initiates multiple readers, an attacker
may learn which ones it had initiated by forwarding messages from one to another.
It does not seem to be realistic in the e-passport scenario, but could be harmful in other contexts.
It seems that, in the e-passport context,  a modelling with sequential sessions would be more realistic.
We come back to such a modelling at the end of this section.

\smallskip{}

Further, we propose a simple fix to the above attack by adding tags avoiding confusions between reader's
messages and tag's messages.
It suffices to replace messages~8 and~9 from Figure~\ref{fig:pace:AB} by respectively
$\mac(\langle \mathsf{c}_r, G^{n'_T}\rangle, k')$
and $ \mac(\langle \mathsf{c}_t, G^{n'_R}\rangle, k')$
where $\mathsf{c}_r,\mathsf{c}_t$ are public constants,
and adding the corresponding checks.
Well-authentication can be automatically established using \ukano in around 1 minute.
Frame opacity can be automatically established using any heuristic described in \Cref{sec:mecha-ukano}. 
Heuristics producing more complex idealisations (\ie the syntaxic one)
are less efficient. Nevertheless, the tool concludes in at most 16
seconds. 
We thus conclude that PACE with tags preserves unlinkability in the model considered here.

\smallskip{}

\subsection{Attributed-based authentication scenario using ABCDH protocol}
\label{sec:irma}

Most authentication protocols are identity-based: the user needs to provide his identity
and prove to the service provider he is not trying to impersonate somebody else.
However, in many contexts, the service provider just needs to know that the user has some non-identifying
attributes (\eg age, gender, country, membership).
For instance, a liquor shop just needs to have the proof that the user has the right to buy liquors (\ie that he is old enough)
and does not need to know the full identity of the user (\eg as it is currently done when showing ID cards).
Attribute-based authentication protocols solve this problem and
allow a user to prove to another user, within a secure channel,
that he has some attributes without disclosing its identity.

We used our method to automatically establish unlinkability of a typical use case of such a protocol
taking part to the IRMA
project\footnote{For more information about IRMA (``I Reveal My Attributes''), see \url{https://www.irmacard.org.}}.
We analysed a use case of the protocol ABCDH as defined in~\cite{alpar2013secure}. This protocol
allows a smartcard $C$ to prove to some Verifier $V$
that he has the required
attributes. The protocol aims at fulfilling this goal without revealing the identity of~$C$ to~$V$ or to anyone else.
One of its goal is also to avoid that any other smartcard $C'$ replays those attributes later on.
The protocol should also ensure unlinkability of~$C$. To the best of our knowledge, there was no prior formal analysis
of that security property for this protocol.

The key ingredient of this protocol is {\em attribute-based credential} (ABC). It is a cryptographic container for attributes.
In ABC, attributes are signed by some issuers and allow for {\em selective 
disclosure} (SD): it is possible to produce
a zero-knowledge (ZK) proof revealing a subset of attributes signed by the issuer along with a proof that the selected
disclosed attributes are actually in the credential. This non-interactive proof protocol can be bound to some fresh data
to avoid replay attacks. We shall use the notation $\SD(\vect{a_i};n)$ to 
denote the selective disclosure of attributes $\vect{a_i}$ bound to $n$.
Note that $\SD(\emptyset;n)$ (no attribute is disclosed) still proves the existence of a credential.
There are two majors ABC schemes: Microsoft U-Prove~\cite{u-prove} and IBM's 
Idemix~\cite{camenisch2012electronic}. We decided to model IBM's scheme
(since it is the one that is used in IRMA) following the formal model
given in~\cite{camenisch2010formal}. We may note that we
  consider here some privacy issues whereas the security analysis
  presented  in~\cite{camenisch2010formal} is dedicated to the
  analysis of some reachability properties.
It involves complex cryptographic primitives (\eg commitments, blind signature, ZK proofs) but \proverif can deal with them all.
In this scheme, each user has a master secret never revealed to other parties.
Issuers issue credentials bound to the master secret of users (note that users are known to issuers under pseudonyms).
A SD consists in a ZK proof bound to $n$ proving some knowledge:
knowledge of the master secret,
knowledge of a credential bound to the master secret,
knowledge that the credential has been signed by the given organisation,
knowledge that the credential contains some given attributes.

We analyse the ABCDH~\cite{alpar2013secure} using the model of SD from~\cite{camenisch2010formal} used in the following scenario:
\begin{itemize}
\item an organisation $\Oage$ issues credentials about the age of majority;
\item an organisation $\Ocheck$ issues credentials giving the right to check the age of majority;
\item a user $C$ wants to watch a movie rated adult-only due to its violent contents; his has a credential from $\Oage$ with the attribute \texttt{adult};
\item a movie theatre $V$ wants to verify whether the user has the right to watch this movie; it has a credential from $\Ocheck$ with the attribute \texttt{canCheckAdult}.
\end{itemize}
The scheme is informally given in Figure~\ref{fig:abcdh}.

$n_V,n_C$ and $n$ are fresh nonces.
Functions $\mathsf{f}_1\slash 1$,$\mathsf{f}_2\slash 1$ and $\mathsf{f}_3\slash 2$ are independent
hash functions; we thus model them
as free constructor symbols.
The construction $\SD(\cdot;\cdot)$ is not modelled atomically
and follows~\cite{camenisch2010formal} but we do not describe here its details.
We note however that when $V$ (respectively $C$) sends a $\SD(\cdot;\cdot)$,
  the corresponding message we do not detail here contains an identity-parameter
$\textit{user}_V$ (respectively $\textit{user}_C$).

\newcommand{\g}{g}

\begin{figure}[h]
  \centering
    $
  \begin{array}{lrcll}
    1.&V& \to &C : & \dihe(g,n_V),\SD( \texttt{canCheckAdult};\mathsf{f}_1(\dihe(g,n_V)))\\
    2.&C& \to &V : & \dihe(g,n_C),\SD(\emptyset;\mathsf{f}_1(\dihe(g,n_V)),\dihe(g,n_C))\\
    3.&V& \to &C : & \senc(\langle\mathtt{0x00},\ok\rangle,k) \\
    4.&C& \to &V : & \senc(\langle\mathtt{0x01},\ok\rangle,k) \\
    5.&V& \to &C: & \senc(\langle n,\mathtt{requestAdult}\rangle,k) \\
    6.&C& \to &V : & \senc(\langle \mathtt{adult},
                     \SD(\mathtt{adult};\mathsf{f}_3(n,\mathtt{seed}))\rangle,
                     k)
\end{array}$
  \caption{ABCDH (where $seed=\dihe(\dihe(\g,n_C),n_V)$ and $k=\mathsf{f}_2(seed)$)}
  \label{fig:abcdh}
\end{figure}

This is a 2-party protocol that falls into our class. Actually, we
have that $\textit{user}_V\in\fn(\ini)\cap\vect{k} \neq \emptyset$ and
$\textit{user}_C\in\fn(\res)\cap \vect{k}\neq\emptyset$, but  $\fn(\ini) \cap \fn(\res) = \emptyset$
(non-shared case).
The complete model of this protocol is quite complex and can be found
in~\cite{depotANO}.
  Frame Opacity can be automatically established
  using the syntaxic heuristic (see \Cref{sec:mecha-ukano}) in less than 40 seconds.
  The other heuristics were not enough precise to conclude (yielding negative results) showing the importance of having the choice between
  heuristics that are precise but less efficient and ones that are more efficient but less precise. 
Regarding well-authentication, due to the length of the protocol,
the queries are also quite long. Because of the latter
and the high complexity of the underlying term algebra,
it required too much time for \proverif to terminate.
We addressed this performance issue by soundly splitting up big queries into smaller ones.
This way, we successfully established well-authentitcation
for this protocol within 3 hours.


\subsection{DAA join \& DAA sign}
\label{sec:DAA-join-sign}
A Trusted Platform Module (TPM) is a hardware device aiming at protecting cryptographic keys and at performing
some cryptographic operations. Typically, a user may authenticate himself to a service provider relying on
such a TPM.
The main advantage is to physically separate the very sensitive data from the rest of the system. 
On the downside however, such devices may be used by malicious agents to breach users' privacy
by exploiting their TPMs.
Direct Anonymous Attestation (DAA) protocols have been designed to let TPMs authenticate themselves whilst
providing accountability and privacy.

In a nutshell, some issuers issue credentials representing membership to a group
to the TPM using group signatures via the {\em DAA join} protocol.
Those credentials are bound to the internal secret of the TPM that must remain unknown to the service provider.
Then, when a TPM is willing to prove to a verifier its membership to a group, it uses the {\em DAA sign} protocol.
We analysed the RSA-based DAA join and sign protocols as described in~\cite{smyth2015formal}.
Both protocols rely on complex cryptographic primitives (\eg blind signatures, commitments, and Zero Knowledge proofs) but \proverif can
deal with them all.
Note that the authors of~\cite{smyth2015formal} have automatically established a game-based version of unlinkability
of the combination of DAA Join and DAA Sign using \proverif.
We only provide an analysis of each protocol in isolation since the combination of the two protocols is a 3-party protocol.

\subsubsection{DAA join}
\label{sec:DAA-join-sign:join}
In the RSA-based DAA join protocol, the TPM starts by sending a credential request in the form
of a commitment containing its internal secret, some session nonce and the public key of the issuer.
The issuer then challenges the TPM with some fresh nonces encrypted asymmetrically with
the public key of the TPM. After having received the expected TPM's answer, the issuer sends a new nonce
as second challenge.
To this second challenge, the TPM needs to provide a ZK proof bound to this challenge
proving that he knows the internal secret on which the previous commitment was bound.
Finally, after verifying this proof, the issuer blindly signs the commitment allowing the TPM
to extract the required credential.

\newcommand{\TPM}{\mathsf{TPM}}
\newcommand{\pubI}{pub_I}
\begin{figure}[h]
\centering
$\begin{array}{lrcll}
1. & \tpm & \to & \issuer: & N_I,U \\
2. & \issuer & \to & \tpm: & \mathsf{penc}(n_e,n,\pk(sk_{\TPM})) \\
3. & \tpm & \to & \issuer: & \h((U,n_e))\\
4. & \issuer & \to & \tpm: & n_i \\
5. & \tpm & \to & \issuer: & n_t,\mathsf{ZK}_{\mathsf{join}}(
                                (tsk,n_v),
                                (zeta_I,N_I,U,(n_t,n_i)))\\
6. & \issuer & \to & \tpm: & \mathsf{clsign}((U,r),sk_I)
\end{array}$
 \caption{DAA Join}
  \label{fig:daa-join}
\end{figure}

We give in \Cref{fig:daa-join} an Alice \& Bob description of the protocol between the TPM and the issuer.
The message $zeta_I=\h((\mathsf{0},bsnI))$ relies on $bsnI$: using
a fresh $bsnI$ allows to ensure that the session of DAA Join will be unlinkable from previous ones.
The message $tsk=\h((\h((\textrm{DAAseed},\h(KI))),cnt,0))$ combines the internal
secret of the $\TPM$ (\ie $\textrm{DAAseed}$) with the public long-term key of the issuer (\ie $KI$).
The commit message $N_I= \mathsf{commit}(zeta_I,tsk)$ binds $zeta_I$ with the internal secret
while the commit message $U= \mathsf{clcommit}(\pk(sk_I),n_v,tsk)$ expresses a credential request for a signature key $sk_I$.
The goal of the $\TPM$ will be to get the message $U$ signed by the issuer. More precisely, the issuer
will blindly sign the message $U$ with the signature key $sk_I$ after making sure that the $\TPM$ can decrypt challenges encrypted with its public key (step 2.)
and that he can provide a fresh ZK proof showing he knows its internal secret binds in $U$ and $N_I$ (step 5.).
Finally, if all checks are successful, the issuer will blindly sign the credential request $U$ (step 6.).
We note $\mathsf{clsign}((U,r),sk_I)$ the blind signature of a commitment
$U$ with signature key $sk_I$ and some random $r$. The function $\mathsf{penc}$ denotes a randomized
asymmetric encryption scheme.
Note that $\mathsf{ZK}(\ar,\ar)$ has two arguments: 
the first one should contain private data and the second
one should contain public data. One can always extract public data from ZK proofs and one can check
if both public and private data match as expected.

This protocol falls in our class and lies in the shared case\footnote{Both roles share the identity name $sk_\TPM$ (but note that the Issuer only uses $\pk(sk_\TPM)$). Indeed, before executing the join protocol, the TPM and the issuer should establish a one-way authenticated channel that is not specified by the DAA scheme. Therefore, an Issuer session is associated to a single TPM's identity it is expected to communicate with.}
(\ie $\fn(\ini)\cap\fn(\res)=\{sk_\TPM\}$).
%
\ukano automatically established frame opacity in less than 30
  seconds using the syntaxic idealisation, and in less that 3 seconds
  when using the quasi-syntaxic heuristic. Note that the semantic
  one is not precise enough to allow one to conclude.
Regarding well-authentication, we had to leverage the same splitting technique explained in \Cref{sec:irma}
so that \ukano could conclude in a reasonable amount of time (around 30 seconds).

\subsubsection{DAA sign}
\label{sec:caseStudies:DAA}
\label{sec:DAA-join-sign:sign}
Once a TPM has obtained such a credential, it may prove its membership using the DAA sign protocol.
This protocol is played by a TPM and a verifier: the verifier starts by challenging the TPM with a fresh nonce (step 1.),
the latter then sends a complex ZK proof bound to this nonce (step 2.).
The latter ZK proof also proves that the TPM knows a credential from the expected issuer bound to a secret he knows
(essentially a message $\mathsf{clsign}((U,r),sk_I)$ received in a previous session of DAA join).
The verifier accepts only if the ZK proof can be successfully checked (step 3.).

We give in Figure~\ref{fig:daa-sign} an Alice \& Bob description of the protocol between a verifier
and the TPM willing to sign a message
$m$ using its credential $cred= \mathsf{clsign}((U,r),sk_I)$
he received from a past DAA join session.
From its credential $cred$,
the TPM will compute a new credential dedicated to the current sign session:
$cred'=\mathsf{clcommit}((\pk(sk_I),cred),n_c)$. Indeed, if the TPM
had directly used $cred$ then two sessions of DAA sign would have been trivially linkable.
The TPM also computes a commit of $tsk$ that was used to obtain the credential: $N_V=\mathsf{commit}(tsk,zeta_V)$ where $zeta_V$ is a fresh nonce\footnote{The protocol also specifies a mode that makes different signatures linkable by construction using $zeta_V=\h((0,{bsnV}))$. We focus on the other mode for which unlinkability is expected to hold.}.

\begin{figure*}[th]
  \centering$
  \begin{array}{lrcll}
    1. & \verifier & \to & \TPM: & n_v\\
    2. & \tpm  & \to & \verifier: & (zeta_V,\pk(sk_I),N_V,cred',n_t,\\
       &       &     &           &\phantom{(}
                              \mathsf{ZK_{sign}}((tsk,n_c),(zeta_V,\pk(sk_I),N_V,cred',(n_t,n_v,m)))) \\ 
    3. & \verifier & \to & \tpm: & \mathsf{accept}\slash\mathsf{reject}
  \end{array}$
  
  \caption{DAA Sign}
  \label{fig:daa-sign}
\end{figure*}

  Similarly to Examples~\ref{ex:DAA},\ref{ex:DAA-s}, we distinguish two cases whether $sk_I$ is
  considered as a private constant or as an identity parameter.
  We recall that this choice critically impacts the privacy property that
  is modeled. Indeed, the privacy set~\cite{pfitzmann2001anonymity} is considered to be
  (a) the set of users who obtained a credential from a given issuer in the former case, or,
  (b) the set of all users in the latter case.
\smallskip{}

\noindent{\bf (a) $sk_I$ as a private constant.}
This 2-party protocol falls in our class and lies in the non-shared case.
Indeed, we model infinitely many different TPMs that may take part to the DAA sign protocol
with any verifier whose role is always the same (he has no proper identity).
We automatically analysed this protocol with \ukano
and established both frame opacity and well-authentication in less than 4 seconds.
Frame opacity has been established using a well-chosen idealisation adapted from the syntaxic heuristic.
\smallskip{}

\noindent{\bf (b) $sk_I$ as an identity parameter.}
  This 2-party protocol falls in our class and lies in the shared case.
  Indeed, we model infinitely many different TPMs with credentials signed by pairwise different issuers that may take part to the
  DAA sign protocol with a verifier who is checking credential from the corresponding
  issuer\footnote{We discuss a more precise modelling and why it cannot be analyzed in our framework in~\ref{subsec:class}.}.
  We automatically analysed this protocol with \ukano
  and we found that frame opacity is violated for any of \ukano heuristic
  (note that the idealisation for the case (a) is not conform for (b)).
  By inspecting the attack trace returned by \ukano, one can quickly
  rebuild
  an attack against unlinkability and anonymity.
  Indeed, the attack on frame opacity shows that an attacker can exploit the fact that
  the ZK proof contains in its public part the public
  key of the issuer (\ie $\pk(sk_I))$). A passive eavesdropper is thus able
  to learn the issuer that has signed the credential used in a ZK proof sent by a prover,
  hence breaking anonymity and unlinkability.
  This is not surprising as the privacy mechanism of DAA sign was intended to
  protect users' privacy inside a certain group (associated with an issuer),
  which is a property we have checked, and which holds, with the variant (a).


\subsubsection{Summary}
We now summarise our results in \Cref{tab:case:summary}.
We only summarize results obtained regarding unlinkability and
considering concurrent sessions.
For each protocol, we mention the identity parameters of each role.
Most of our case studies fall into the shared case with $\fn(\ini)
\cap \fn(\res) \neq \emptyset$. 
We indicate the verification time in seconds to verify both conditions.
When there is an attack, we give the time \proverif takes to show that one of the condition fails to hold.
We note  \verif~for a condition automatically checked using our tool \ukano~and \nope~when the condition does not hold.
Note that all positive results were established automatically using
our tool \ukano~(which is based on \proverif)
without any manual effort (except for the cases indicated by \verif$^*$
for which little manual efforts were needed).


\newcommand{\ra}[1]{\renewcommand{\arraystretch}{#1}}
\renewcommand{\sp}{\phantom{a}}
\newdimen\POV
\newdimen\PART
\newdimen\DIR
\newdimen\CELL
\POV=0.15em
\PART=0.1em
\DIR=0.03em
\CELL=30pt

\begin{table}[ht]
  \centering
  \ra{1}
  \begin{tabular}{@{}l
    cc
    ccc
    c@{}}
    \toprule \midrule
    \multirow{2}{*}{Protocol \rule[-5mm]{0pt}{1cm}} &
    \multicolumn{2}{c}{Identity parameters} & \multirow{2}{*}{\parbox{\CELL}{Frame opacity} \rule[-5mm]{0pt}{1cm}} 
    & \multirow{2}{*}{\parbox{\CELL}{\; Well-auth.} \rule[-5mm]{0pt}{0.92cm}} 
    & \multirow{2}{*}{Unlink. \rule[-5mm]{0pt}{1cm} }
    & \multirow{2}{*}{\parbox{1.5\CELL}{Verification \ \quad\null\hfill time\hfill\null} \rule[-5mm]{0pt}{1cm}}\\
    \cmidrule[\DIR](r){2-3}&&&&\\[-10pt]
    & in role $\ini$ & in role $\res$& &&& \\    
    \cmidrule[\POV]{1-7}
    Hash-Lock & $k$ & $k$&\verif & \verif & \holds & < 1s \\[1mm]
    Fixed LAK & $k$ &$k$ & \verif & \verif & \holds & < 1s\\
    \cmidrule[\DIR](r){1-7}
    BAC     & $k_E, k_M$ &  $k_E, k_M$ & \verif & \verif & \holds & < 1s \\[1mm]
    BAC\slash PA\slash AA   & $k_E, k_M$ &  $k_E, k_M$ &\verif & \verif & \holds& < 1s \\[1mm]
    BAC\slash AA\slash PA   & $k_E, k_M$ &  $k_E, k_M$ &\verif &
                                                                 \verif & \holds& < 1s \\
    \cmidrule[\DIR](r){1-7}
    PACE (faillible dec) &  $k$ & $k$ & $-$ & \nope & \attaque & < 30s \\[1mm]
        PACE (as in~\cite{bender2009security})
             &     $k$ & $k$ &$-$ & \nope & \attaque & < 1m \\[1mm]
        PACE   &  $k$ & $k$& $-$ & \nope & \attaque & < 2m\\[1mm]
        PACE with tags  &  $k$ & $k$ &\verif & \verif & \holds & <
                                                                 2m\\
    \cmidrule[\DIR](r){1-7}

    ABCDH (irma) & $\textit{user}_V$ & $\textit{user}_C$ &\verif & \ \ \verif$^*$ & \holds & < 3h\\
    \cmidrule[\DIR](r){1-7}
        DAA join & $\textrm{DAAseed}, sk_\TPM$ & $sk_\TPM$&\verif & \ \ \verif$^*$ & \holds & < 5s\\[1mm]
    DAA sign (a) &  $\emptyset$& $\textrm{DAAseed}, cnt, r$ &\verif$^*$ & \verif & \holds & < 5s\\[1mm]
    DAA sign (b)&  $sk_I$& $sk_I, \textrm{DAAseed}, cnt, r$ &\nope & \verif & \attaque & < 1s\\
    
    \midrule \bottomrule\\
  \end{tabular}
  \caption{Summary of our case studies regarding unlinkability with
    concurrent sessions}
  \label{tab:case:summary}
\end{table}

\newcommand{\sid}{\mathit{sid}}

\section{Limitations of our approach}
\label{sec:limitations}

In this section, we would like to discuss some further limitations of our approach. We first explain some 
limitations that come from the approach itself 
in Section~\ref{subsec:theory}.
In Section~\ref{subsec:limitations-tool}, we then discuss 
some limitations of our tool \ukano which inherits some of the limitations of 
the \proverif tool on which it is based.


\subsection{Limitations of our Theorem~\ref{theo:main}}
\label{subsec:theory}

Our approach consists of providing two sufficient conditions under
which anonymity (see Definition~\ref{def:anonymity}) 
and unlinkability (see Definition~\ref{def:un}) 
are satisfied.  
These condtions, even if they are satisfied by many concrete examples,
may not be fullfilled by some protocols that are nevertheless
anonymous and unlinkable (see examples described in Section~\ref{subsec:theory:tightness}).
We then discuss in Section~\ref{subsec:class} some limitations that
come from the class of protocols we consider. 

\subsubsection{Tightness of our conditions}
\label{subsec:theory:tightness}

As illustrated  by the toy
protocols given in \Cref{ex:too-strong-1,ex:too-strong-2},
our conditions are sufficient but not necessary
to ensure unlinkability or anonymity.

\begin{example}
\label{ex:too-strong-1}
We suppose that, initially, 
each tag  has its own key~$k$ and the reader maintains a database containing
those keys. The protocol relies on symmetric encryption, and 
can be informally
described as follows.
$$
\begin{array}{rcll}
\montag & \to & \reader: & \{\id\}_k\\
\end{array}$$

Once the reader receives the encryption, it opens it and checks  the identity of the tag before accepting (or not) 
to  grant access to the tag.
This protocol falls into our generic class of $2$-party protocols (shared case).
Anonymity w.r.t.~$\id$ is satisfied but unlinkability is not: a given tag always sends the same message.

Regarding our conditions, frame opacity does \emph{not} hold.
Consider
$$
\phi = 
\{w_1 \mapsto \{id_1\}_{k_1}, \; w_2 \mapsto \{id_1\}_{k_1}, \; w_3 \mapsto \{id_2\}_{k_2}\}.
$$ 
Such a frame can be obtained when executing the $\mathcal{M}_\Pi$ process. The syntactical idealisation will rename  both occurrences 
of $id_1$ (resp.~$k_1$) using different names whereas 
the semantical  idealisation will idealise each output using a fresh names. In both cases,  the resulting  
idealised frame is not statically equivalent to $\phi$. Thus, frame opacity cannot be established using these idealisations.
Actually, no idealisation will be able to idealise the two first  outputs in the same way  and the two last outputs  in different way at the same time.
This illustrates that frame opacity is a too strong condition when considering anonymity.

Regarding well-authentication, we can establish that such a condition does \emph{not} hold as well.
Still considering the execution leading to the frame $\phi$ above, we can then consider a reader that starts two sessions accepting twice $w_1$ as an input.
It will then continue by executing its conditionals positively. The annotations of these two 
conditionals will be respectively $R(\{k_1, \id_1\}, \sid)$ and $R(\{k_1,\id_1\}, \sid')$.
Therefore, condition (ii) of Definition~\ref{condi:auth} is not satisfied. These two conditionals are not safe and they are both associated to the same annotation (the one carried out by the output $w_1$).
This does not break anonymity but simply shows that replaying messages is a scenario that allows an attacker to fool one party (here the reader) up to some point (here until the end).
\end{example}

\begin{example}
\label{ex:too-strong-2}
In order to ensure unlinkability, we now suppose that the tag sends its identity accompanied with a freshly generated random number $r$. Therefore, we have that:
$$
\begin{array}{rcll}
\montag & \to & \reader: & \{\langle r, \id\rangle\}_k.\\
\end{array}$$

This protocol falls into our generic class of $2$-party protocols (shared case).  The identity parameters of both roles are $\id$ and $k$ whereas $r$ is the session parameter of role $\ini$.
As in the previous example, anonymity w.r.t. $\id$ holds. Unlinkability should hold, assuming that the reader does not output any message indicating whether the test has been passed with success or not.

Actually, frame opacity can be established relying on either the syntaxical idealisation or the semantical one. The fresh random number inside 
each encryption allows one to ensure that all the ciphertexts are different. However, for the same reason as the one explained in the previous example, well-authentication does not hold: condition (ii)
is not satisfied. 

{We recall that this can be considered as a false attack only if the protocol and the use case both enforce that
  the continuation of the protocol in case the test passes is always indistinguishable from
  the continuation in the other case; this is a strong assumption.}
\end{example}

\subsubsection{Class of protocols}
\label{subsec:class}

Among the limitations coming from our definition of protocols, we
first reconsider the DAA sign and PACE protocols 
to highlight some limitations of our approach.

\paragraph*{Two parties only.}
Our notion of protocols only covers 2-party protocols. This obviously excludes important protocols with more than
2 parties such as secure group communication
protocols~\cite{meadows2003formal}, e-voting
protocols~\cite{DRS-ifiptm08},
make
mobile communication protocols~\cite{basin2018formal},
the combination of DAA join and DAA sign~\cite{DRS-ifiptm08} that features 3 parties, \etc
This also excludes scenarios where privacy is considered between {\em group of entities}.
For instance for DAA sign (see Example~\ref{ex:DAA} or Section~\ref{sec:caseStudies:DAA}),
verifiers and clients may be associated to different issuers.
Using our framework, one can analyze privacy between users in a single group (as in Example~\ref{ex:DAA}),
or between groups but where each goup has only one user (as in Example~\ref{ex:DAA-s}).
In the latter case, one would rather want to model privacy between groups where each group contains an unbounded number of users,
but this is out of the scope of our approach. This is not surprinsing since such a scenario actually features
three parties: clients, verifiers, issuers (forming groups); all with unbounded number of entities.
\smallskip{}

\paragraph*{Honest trace.}
Now, we want to report on a potential limitation we discovered when analysing the PACE protocol using \tamarin.
Our initial aim was to investigate the scenario where sessions can be executed
only sequentially. We have turned to \tamarin since \proverif is not able
to faithfully model such scenarios. We wrote a \tamarin model encoding
well-authentication and found surprisingly that this condition does not hold, even
with the tagged version.
This contrasts with the positive result obtained with \proverif.
Actually, this comes from the fact that \tamarin models Diffie-Hellman exponentiation
in a more faithful way than \proverif. Some behaviours that were not possible in the \proverif model become possible,
and it happens that well-authentication is not satisfied in such a model.

Indeed, the attacker can alter the Diffie-Hellman shares, as informally
depicted in \Cref{fig:attack:pace:dh}, without impacting the successive 
conditionals.
\begin{figure}[ht]
  \centering
  \newcommand{\attacker}{\mathsf{Attacker}}
$ \begin{array}{lrcll}
    1.&\montag& \to &\reader : & \{s_T\}_{k}\\
    2. &\reader & \to & \attacker: & g^{n_R}\\
    2'. &\attacker & \to & \montag: & (g^{n_R})^X\\
    3. &\montag & \to & \attacker: & g^{n_T}\\
    3'. &\attacker & \to & \reader: & (g^{n_T})^X\\
\end{array}$
  \caption{Example of successful but dishonest interaction ($X$ can be any message)}
  \label{fig:attack:pace:dh}
\end{figure}

This is problematic because successful tests will pass (independently of the message $X$)
while such interactions are not honest according to our current
definition of honest trace (see \Cref{def:honest}). 
This problematic interaction is however not detected in \proverif,
due to the lack of equations in the underlying
  equational theory: the final keys computed by both parties will
  be different, $((g^{n_R})^X)^{n_T}$ for the tag and
  $((g^{n_T})^X)^{n_R}$ for the reader.
Therefore such an interaction cannot be completed sucessfully, and well-authentication will
  be established.

  The failure of well-authentication described above does not yield
  a failure of unlinkability. It is thus a case where one might want to
  make well-authentication less restrictive. One direction for
  weakening it is to
extend the notion of  ``honest trace associated to a protocol''
(\Cref{def:proto}):
instead of a single honest trace
we would associate to a protocol a set
of symbolic traces that are, roughly,
traces with (possibly) variables in recipes.
For PACE, one may for instance use
$\tr_h = \Out(c_I,w_1).\In(c_R,\dihe(w_1,X)).\Out(c_R,w_2).\In(c_I,\dihe(w_2,X)).\ldots$
in addition to the standard
trace.
However, in order to adapt our proof technique, we need to make sure that
whatever the recipes chosen to fill in the variables (\eg $X$ in $\tr_h$),
the resulting concrete trace can be executed by the protocol
and the produced frame always has the same idealisation.
Remark that this is the case for $\tr_h$ in the case of the PACE protocol.

In practice, this limitation on the well-authentication condition
does not seem very important, as the issue is tied to the
peculiar use of two Diffie-Hellman rounds in PACE and,
expanding the discussion beyond privacy,
the attack on well-authentication shows a potential weakness in that protocol.
    Indeed, Tag and Reader fail to establish an agreement on each other's Diffie-Hellman
    shares from the first round (\ie $g^{n_R}$ and  $g^{n_T}$).
    Therefore, an attacker is able to manipulate those shares without being detected,
    which goes against best practices in protocol design.
    In contrast, the MAC messages 8 and 9 (see \Cref{fig:pace:AB}) allow the 
    Tag and Reader to agree on each other's
    Diffie-Hellman shares from the second round (\ie $G^{n_T'}$ and $G^{n_R'}$)
    and on the shared key resulting from the first round (\ie $g^{n_R n_T}$ in $G$).
    Adding $g^{n_T}$ (respectively $g^{n_R}$) to the first (respectively second) MAC message would fix this lack of agreement.
    We have formally verified with \tamarin that PACE with this modification indeed satisfies well-authentication, thus providing an agreement on the full protocol transcript.

\smallskip{}

\paragraph*{Stateless only.}
Our framework and our theorem only applies to stateless protocols (\ie no mutable states persistent across sessions).
This immediately excludes numerous real-world protocols such as
secure messaging protocols~\cite{cohn2017formal},
mobile communication protocols~\cite{basin2018formal}, \etc


\subsection{Limitations of our tool \ukano}
\label{subsec:limitations-tool}
Our tool \ukano  also suffers from some limitations. 
In particular, the heuristics we propose to build idealisation could be improved. 
For instance, in case of DAA sign, we were unable to establish frame opacity fully automatically: we had to propose 
a well-chosen idealisation adapted from the syntaxic heuristic. Note that, in general the syntaxic heuristic is a good choice but yields large messages that may cause some efficiency issues to \proverif. The semantical heuristic is less precise but much more efficient. In case of DAA sign, we make a trade-off betweeen these two choices to establish frame opacity.
Regarding well-authentication, the resulting queries happen to be quite big and may also cause some troubles to \proverif.
This can be adressed by soundly splitting the query (see Section~\ref{sec:irma}) but this feature has not been implemented in \ukano.

Besides the limitations of our tool \ukano, we also inherit some of the limitations of the backend tool, i.e. \proverif.
For instance, in terms of cryptographic primitives, we have seen that \proverif only consider a very abstract model 
for modular exponentiation, and does not allow one to model all the algebraic properties of the exclusive-or operator.
We are also unable to faithfully model scenarios involving sequential compositions. This feature could be modeled in \proverif 
relying on private chanels but abstractions performed by \proverif when modeling private channels will not allow us to benefit from the extra information
of that encoding.
Of course, this limits the scope of our approach but progress made on existing 
verification tools will directly benefit to our approach as well. In 
particular, a natural extension for our tool \ukano would be to consider 
\tamarin as a backend. This would bring precise support for sequential composition, 
a more faithful model for Diffie-Hellman exchange, and also the
recent extension to deal with the exclusive-or operator~\cite{dreier2018automated}.

\section{Conclusion}
\label{sec:conclusion}

We have identified two conditions, namely well-authentication and
frame opacity, which imply anonymity and unlinkability
for a wide class of protocols.
Additionally, we have shown that these two conditions
can be checked automatically using the tool \proverif, and we have
mechanised their verification in our tool \ukano.
This yields a new verification technique to check anonymity and unlinkability for an
unbounded number of sessions. It has proved quite effective
on various case studies. In particular, it has brought first-time
unlinkability proofs for the BAC protocol (e-passport)
and ABCDH protocol.
Our case studies also illustrated that our methodology is
useful to discover attacks against unlinkability and anonymity
as illustrated by the new attacks we found on PACE and LAK.


In the future, we plan to improve the way our sufficient conditions are checked.
For instance, we would like to let \ukano build idealisations in a cleverer way; notably in the choice
of heuristics to adopt, and  we are interested in other tools we may
leverage as back-ends (\eg \tamarin).
We are also interested in simplifying further the verification of our 
conditions towards completely reducing the verification of equivalence-based 
properties to pure reachability verifications, which are known to be
much simpler.
Specifically,
we believe that frame opacity could be verified via reachability and 
syntactical checks only.
Obtaining such a result 
would certainly be useful, as it would allow
us to use a richer toolset to verify case studies.

Based on limitations discussed in Section~\ref{subsec:class},
we also identify a number of research problems aimed at generalizing
the impact of our technique.
We would like to investigate the extension of our main theorem
to the case of protocols with states. This is certainly technically 
challenging, but would make it possible to model more protocols, or
at least model them more faithfully. We are also interested in extending
our method to protocols with more than 2 parties which are commonplace
(\eg the combination of DAA join and DAA sign is essentially a 3-party protocol).


Finally, we believe that the overall methodology developed in this paper
(\ie privacy via sufficient conditions) could be applied in other contexts where privacy is
critical: \eg e-voting, attribute-based credentials, blockchain technologies,
 transparent certificate authorities. 
In our opinion, the privacy via sufficient conditions approach
also sheds light on the privacy notions themselves. Indeed, each
sufficient condition helps to get a better grasp of necessary ingredients
for preserving privacy.
It might thus be interesting to translate such conditions into more comprehensive guidelines
helping the design of new privacy-enhancing protocols.

\paragraph*{Acknowledgement.} We would like to thank Bruno Blanchet
for his valuable help regarding the mechanisation in ProVerif of our frame opacity
condition.
The extension of bi-processes in Section~\ref{sec:mecha-fo} is due to him.
We also thank Solène Moreau for her useful feedback on earlier versions
of this paper.

\bibliographystyle{abbrv}
\bibliography{biblio.bib}

\appendix
\label{app:proofs}
We provide in this appendix the proof of our main result (Theorem~\ref{theo:main}).
Our main argument consists in showing that,
for any execution of $\pMa$, there is an indistinguishable execution
of $\pS$ (the other direction being easy).
This indistinguishable execution will be obtained by a modification of the involved
agents. We will proceed via a renaming of agents
applied to an abstraction of the given execution of~$\pMa$.
We then prove that the renamed executions can still be executed and produce
an indistinguishable frame.

\smallskip{}

We fix a protocol $\Pi = (\vect{k}, \nI, \nR, \dag_I,
\dag_R, \ini, \res)$, some identity names $\vect\id$ and some fresh constants~$\vect\idzero$ yielding a process $\pMa$ as defined in
\Cref{sec:properties}.
We denote ${\idzero}_i$ (resp $\id_i$, $k_i$, and $k_i'$) the  
$i$\textsuperscript{th} element of the sequence $\vect\idzero$
(resp. $\vect\id$, $\vect k$, and $\vect{k'}$).
The construction of the proof will slightly differ depending
on $\dagI,\dagR$ (sequential vs concurrent sessions)
and whether $\fn(\ini)\cap\fn(\res)=\emptyset$ or not (non-shared case
vs shared case).

%
%
\section{Abstraction of configurations} 

Instead of working with $\pMa$, $\pM$, and $\pS$, it will be more
convenient to work with {\em ground configurations}.
Intuitively, we will associate to each execution of 
$\pMa$, $\pM$, and $\pS$, a ground configuration
that contains all agents involved in that execution,
already correctly instantiated.
By doing so, we are able to get rid of technical details such as
unfolding replications and repetitions a necessary number of times,
create necessary identity and session parameters, \etc
These ground configurations are generated from sequences of annotations 
satisfying some requirements.

\subsection{Sequences of annotations}

The sequence of annotations from which we will build ground configurations
shall satisfy some requirements that we list below. Essentially, the goal
is to make sure that no freshness condition over session and identity names
is violated.

\begin{definition}
\label{def:well-formed}
A sequence $S$ of annotations is \emph{well-formed} if the following
conditions hold.
\begin{itemize}
\item In all annotations $A(\vect {k'}, \vect {n'})$ with $A \in \{I, R\}$, the session parameters
  $\vect {n'}$ are names,
  and the identity parameters $\vect {k'}$ are made
  of names or constants $\vect \idzero$.
  We also have that $|\vect {n'}|=|\vect {n_A}|$ and
  $|\vect {k'}|=|\vect k|$. 
  Moreover, when $\vect \idzero\cap\vect{k'}\neq\emptyset$, 
  $k'_i={\idzero}_j$ if, and only if, $k_i=\id_j$.
\item No name appears both as identity and session parameter in any two annotations.
\item Two different annotations never share a session parameter.
\item Two annotations either have the same identity parameters, or do
  not share any identity parameter at all.
\end{itemize}
We say that a sequence of annotations $S$ is {\em single-session} when
two different annotations of the same role never share an identity parameter
and 
no annotation contains a constant in $\vect{\idzero}$.
\end{definition}

We straightforwardly lift those definitions to annotated traces by 
only keeping the underlying sequence of annotations and dropping actions.
A well-formed (resp. well-formed, single-session) sequence of annotations contains
annotations of agents that can be instantiated
from~$\pMa$ (resp.~$\pS$). Conversely,
we may note that given an annotated trace~$\ta$ such that $\pMa \lrstep{\ta} K'$, we have
that  $\ta$ is well-formed.

\subsection{Ground configurations}
After the introduction of some notations,
we explain how {\em ground configurations} are obtained from well-formed sequences of annotations.
Given a well-formed sequence of annotations $S$,
and some role $A \in \{I,R\}$, we introduce the following notations:
\begin{itemize}
\item $\mathrm{id}_A(S)$ is the set of identity parameters of agents of role~$A$
  occurring in $S$, \ie
$$\mathrm{id}_A(S) = \{\vect{k}\ |\ A(\vect k,\vect n)\in S\}.$$
\item for $\vect{l}\in\mathrm{id}_A(S)$, we note 
$\mathrm{sess}_A(S,\vect{l})$ the set of session parameters of agents of role $A$ and identity
  parameters $\vect{l}$ occurring in $S$, \ie
$\mathrm{sess}_A(S,\vect{l}) = \{\vect{n}\ |\ A(\vect l,\vect n)\in S\}.$
\item for $\vect{l}\in\mathrm{id}_A(S)$,
  we note 
$\mathrm{sess}^\mathsf{seq}_A(S,\vect{l})$ the sequence made
  of elements from $\mathrm{sess}_A(S, \vect{l})$,
  without repetition, in order of first occurrence in $S$.
\end{itemize}

Finally, for some sequence of elements $L=e_1,e_2,e_3,\ldots$ and a
process $P(e)$ parametrized by $e$, we denote by
$\coprod_{e\in L} P(e)$ the process
$P(e_1);(P(e_2);(P(e_3);\ldots))$.

\begin{definition}
Let $S$ be a well-formed sequence of annotations.
The {\em ground configuration associated to~$S$}, denoted by $\K(S)$, is
the multiset $\p_I \sqcup \p_R$ where $\p_I$
is defined as follows depending on~$\dagI$:
\begin{itemize}
\item if $\dagI = \,\rep$
then 
$
\p_I = \Big\{\;
        \bigl(\ini\{\vect{k}\mapsto\vect{l},\vect{n}_I\mapsto\vect{m}\}\bigr)\annot{I(\vect{l},\vect{m})}%
        \;\;|\;\;
    I(\vect{l},\vect{m})\in S 
            \;\Big\};
$
\item if $\dagI=\rec$ then
$
\p_I = \left\{\;
    \coprod_{\vect{m}\in S_{\vect{l}}}
\bigl(\ini\{\vect{k}\mapsto\vect{l},\vect{n}_I\mapsto\vect{m}\}\bigr)\annot{I(\vect{l},\vect{m})}
    \;\;|\;\;
    \vect{l}\in\mathrm{id}_I(S) \text{ and }
    S_{\vect{l}}=\mathrm{sess}_I^\mathsf{seq}(S,\vect{l})
    \;\right\}.
$
\end{itemize}
The multiset $\p_R$ is computed in a similar way,
  replacing $\ini$, $\vect n_I$ and $I$ by
  $\res$, $\vect n_R$ and $R$ respectively.
\end{definition}

\newcommand{\toy}{\mathsf{toy}}

\begin{example}
Consider the following toy protocol
$\Pi_\toy \defe (k, n_I, n_R, \rec, \rec, \ini, \res)$
where $\ini=\Out(c_I,\senc(n_I,k))$
and $\res=\In(c_R,x)$.
We have that
$$
\mathcal{M}_{\Pi_\toy} = !\;\new k.(\rec\;\new n_I.\ini \ |\ \rec\;\new n_R.\res)
\lrstep{\ta}
(\q;\phi)
$$
for
$\ta=\tau.\tau.\tau.\tau.\tau.\;\ell:\Out(c_I,w_0)\annot{I(k,n_I)}.
                         \tau.\tau.\tau.\;\ell:\Out(c_I,w_1)\annot{I(k,n_I')}.$
The ground configuration associated to $\ta$ is the following multiset
with one element:
$$\K(\ta)= \{
(\Out(c_I,\senc(n_I,k))\annot{I(k,n_I)});
(\Out(c_I,\senc(n_I',k))\annot{I(k,n_I')})\}.$$
Note that $\K(\ta)$ is also able to produce the annotated trace $\ta$
up to some $\tau$ actions.
\end{example}

We lift those definitions to annotated traces as before.
A ground configuration associated to a well-formed sequence of annotations is essentially
an ``unfolding'' of $\pMa$.
Therefore, there is a strong relationship between the original process and the one
obtained through $\K(\cdot)$ as established in the following
proposition.

\begin{proposition}
\label{prop:gc-pm}
Let $\ta$ be a well-formed annotated trace. We have that:
\begin{enumerate}
\item[(1)] If $\pMa \LRstep{\ta} K$ (resp. $\pS \LRstep{\ta} K$), then 
$\K(\ta) \LRstep{\ta} K'$ for some $K'$ such that $\phi(K) = \phi(K')$.
\item[(2)] If $\K(\ta) \LRstep{\ta} K$, then $\pMa \LRstep{\ta} K'$ for
  some $K'$ such that $\phi(K) =\phi(K')$.
\item[(3)] If $\K(\ta) \LRstep{\ta} K$ and $\ta$ is single-session,
then $\pS \LRstep{\ta} K'$ for some $K'$ such that $\phi(K) =\phi(K')$.
\item[(4)] If $\ta=\ta_1.\ta_2$ and $\K(\ta_1.\ta_2)\lrstep{\ta_1} K$ then
  $\K(\ta_1)\lrstep{\ta_1}K'$ for some $K'$ such that $\phi(K) =\phi(K')$.
\end{enumerate}
\end{proposition}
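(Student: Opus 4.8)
The plan is to prove each of the four items by exploiting the fact that $\K(\ta)$ is, by construction, the "fully unfolded" version of $\pMa$ (resp.\ $\pS$) along the trace $\ta$: it contains exactly the agents that appear in $\ta$, already instantiated with their identity and session parameters, and sequenced according to $\dagI,\dagR$ whenever these equal $\rec$. The key structural fact to establish first is that the only difference between an execution of $\pMa$ and the corresponding execution of $\K(\ta)$ lies in the $\tau$-actions: in $\pMa$ the $\tau$-steps unfold replications/repetitions, split parallel compositions, and generate fresh names via \textsc{New}, whereas in $\K(\ta)$ all agents are already present and instantiated. Since $\LRstep{\cdot}$ quotients away $\tau$-actions (via $\upto$), these bookkeeping steps should match up exactly. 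I would make this precise by an induction on the length of $\ta$, maintaining an invariant that relates the multiset of processes reachable in $\pMa$ to the multiset $\K(\ta')$ for the prefix $\ta'$ consumed so far, together with the residual (not-yet-unfolded) copies of $\pMa$.

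For item (1), given $\pMa \LRstep{\ta} K$, I would observe that every non-$\tau$ action of $\ta$ is performed by some annotated agent $A(\vect l,\vect n)$, and that well-formedness of $\ta$ guarantees the freshness side-conditions needed so that all these agents can legitimately coexist. The ground configuration $\K(\ta)$ contains precisely these agents (correctly sequenced when $\dagI=\rec$ or $\dagR=\rec$, using $\mathrm{sess}^\mathsf{seq}$ to recover the order of first occurrence), so replaying the non-$\tau$ actions of $\ta$ on $\K(\ta)$ yields the same frame; the necessary \textsc{Seq} and \textsc{Rep} steps in the sequential case are absorbed into $\upto$. The case of $\pS$ is identical except that $\K(\ta)$ must then be single-session, which is where item (3) differs. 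Items (2) and (3) are the converse directions: from $\K(\ta) \LRstep{\ta} K$, I would reconstruct an execution of $\pMa$ (resp.\ $\pS$) by prefixing the appropriate $\tau$-steps that unfold the outermost replication of $\pMa$ enough times to produce each distinct identity, then unfold the inner $\dagI/\dagR$ operators to produce each session, choosing the bound names to coincide with those in the annotations of $\ta$. Well-formedness ensures these name choices respect the freshness conditions of \textsc{New}; for item (3), single-session-ness of $\ta$ is exactly what guarantees that each identity is used at most once per role, matching the shape of $\pS$ where the $\rep/\rec$ under the identity binder has been removed.

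For item (4), the prefix property, I would argue that $\K(\ta_1)$ is a sub-configuration of $\K(\ta_1.\ta_2)$ in the sense that $\K(\ta_1)$ contains exactly those agents (and prefixes of sequential bundles) that are touched by $\ta_1$. Since $\ta_1$ only involves agents whose annotations appear in $\ta_1$, and since $\K$ distributes over concatenation up to the extra trailing agents and extra tail-segments of sequential compositions present in $\K(\ta_1.\ta_2)$, executing $\ta_1$ on $\K(\ta_1)$ produces the same observable actions and the same frame. The only subtlety is in the sequential case: an agent $A(\vect l,\vect m)$ that occurs in a sequential bundle $\coprod$ in $\K(\ta_1.\ta_2)$ may be followed there by later-session continuations that are absent in $\K(\ta_1)$; but since $\ta_1$ never reaches those continuations, they play no role, and the frames agree.

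The main obstacle I expect is the careful handling of the sequential case ($\dag_A=\rec$) and its interaction with the $\upto$ relation. In that case an agent is wrapped inside a $\coprod$ bundle, and firing its actions requires interleaving \textsc{Seq} simplification steps (Figure~\ref{fig:seq}) with \textsc{Rep}${-}{\rec}$ unfoldings; I must check that each such administrative step is a $\tau$ and hence invisible to $\upto$, and that the simplification rules genuinely push the prefix structure through the leading output/input/restriction of each role (which they do, precisely because roles begin with an output or input and contain no parallel composition). A secondary delicate point is ensuring, in the $\pMa \to \K$ direction, that the annotations carried by actions are preserved verbatim, so that the annotated traces — not merely their observable projections — coincide up to $\tau$; this relies on the uniqueness-of-annotation assumption ($\vect n_I,\vect n_R\neq\emptyset$) stated earlier. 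Once these administrative matters are dispatched, the equality of frames $\phi(K)=\phi(K')$ is immediate since both executions perform the same outputs of the same instantiated messages.
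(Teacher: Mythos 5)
Your proposal is correct and follows essentially the same route as the paper's own proof: item (1) by construction of $\K(\cdot)$ (agents are immediately available, so only fewer $\tau$-steps are needed), items (2)--(3) by re-creating the agents in $\pMa$ (resp.\ $\pS$) via replication/repetition unfoldings and \textsc{New} steps whose freshness side-conditions are guaranteed precisely by well-formedness (resp.\ single-session-ness), and item (4) by observing that the extra processes in $\K(\ta_1.\ta_2)$ relative to $\K(\ta_1)$ are simply unused when executing $\ta_1$. If anything, your treatment is more explicit than the paper's sketch (the induction invariant, the $\upto$-absorption of \textsc{Seq}/\textsc{Rep}-$\rec$ administrative steps); just note that in the sequential case these $\tau$-unfoldings must be interleaved with the agents' actions rather than all ``prefixed'', which your obstacle paragraph already acknowledges.
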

\begin{proof}
  Item (1) holds by construction of the operator $\K(\cdot)$, which has been built
  by closely mimicking how $\pMa$ and $\pS$ create agents. We thus have that when an agent
  is at top-level in a configuration in the execution of $\pMa$ (or $\pS$) then it is also
  available in the execution of $\K(\ta)$.
  In general, less $\tau$ actions are necessary for the execution starting
  with $\K(\ta)$ than for the executions starting with $\pMa$ or $\pS$.
  Indeed, there is no need, in ground configurations,
  to spawn agents by unfolding replications or repetitions or
  creating fresh names.
  This is because agents are (more) immediately available in $\K(\ta)$.
  
  Item (2) heavily relies on the well-formedness of $\ta$. One can thus prove that
  all agents in $\K(\ta)$ can be created along the execution by choosing appropriate names
  when triggering rules \textsc{New}. For instance, the first item of \Cref{def:well-formed} makes sure that the arity
  of parameters in agents matches the number of names to be created.
  The second and third items of \Cref{def:well-formed} ensure that the freshness guard conditions of the rule \textsc{New}
  holds for names to be created.
  Finally, the fourth item of \Cref{def:well-formed} implies that when an agent $a=A(\vect k,\vect n)$ must be
  created then either (i) names in $\vect{k}$ are completely fresh and this identity~$\vect{k}$ can be created from $\pMa$ by unfolding~$!$ and create names $\vect{k}$
  or (ii) names in $\vect{k}$ have already been created and thus the agent $a$ can be created
  from the last replicated process used to create identity $\vect{k}$ in the first place.

  Item (3) is similar to (2). The single-session hypothesis provides exactly
  what is needed to mimic the execution using $\pS$ rather than $\pMa$.
    
  Finally, item (4) stems from a simple observation.
  Compared to $\K(\ta_1)$, the multiset of processes $\K(\ta_1.\ta_2)$
  adds processes in parallel and in sequence after some processes
  of $\K(\ta_1)$. However, these extra processes are unused
  when executing $\ta_1$, thus $\K(\ta_1)$ can perform the same
  execution.
\end{proof}


\subsection{Renamings of annotations}
As mentioned before, we shall prove that for any execution of $\pMa$,
there is an indistinguishable execution of $\pS$.
This indistinguishable execution that $\pS$ can perform
will be obtained by a renaming of annotations.
We define next a generic notion of such renamings of annotations.
However, the crux of the final proof is to find a {\em good} renaming
that implies: (i) the executability by $\pS$ of the renamed trace, and
(ii) the static indistinguishability of the resulting frames (before and after
the renaming).

\begin{definition}
\label{def:renaming}
A {\em renaming of annotations} (denoted by $\rho$) is an injective mapping from
annotations to annotations such that:
\begin{itemize}
\item for any well-formed sequence of annotations $S$, $S\rho$ is well-formed;
\item $\rho$ is role-preserving:
\ie initiator (resp.\ responder) annotations are mapped to initiator
(resp.\ responder) annotations;
\item for any two annotations $a_1 = A_1(\vect{k_1}, \vect{n_1})$, $a_2 = 
  A_{2}(\vect{k_{2}}, \vect{n_{2}})$, if
    $\rho(a_1)$ and
    $\rho(a_2)$ have the same identity parameters, then  $\vect{k_1} =  \vect{k_2}$.
\end{itemize}
\end{definition}
The two first conditions are expected: renaming of annotations shall only modify session and identity parameters
whilst preserving well-formedness. The final condition ensures that renamings do not create more ``sequential dependencies''
between agents (\ie agents sharing the same identity and whose role
can execute sessions only sequentially): after renaming, less pairs of agents have same identity.

Next, we define $\ta\rho$ as the annotated trace obtained
from $\ta$ by applying $\rho$ to annotations only.
Note that, by definition of renamings, the resulting $\ta\rho$ is well-formed as well.

One can also define the effect of renamings on ground configurations.
If $\rho(\aagent(\vect{k\vphantom{'}},\vect{n\vphantom{'}}))
=\aagent(\vect{k'},\vect{n'})$,
the renaming $\sigma$ induced by $\rho$ on 
$\aagent(\vect{k\vphantom{'}},\vect{n\vphantom{'}})$
is the (injective) mapping such that
$\vect {k\vphantom{'}} \sigma = \vect{k'}$ and $\vect{n\vphantom{'}}\sigma = \vect{n'}$.
Given a ground configuration
$\p = \{
 \coprod_{j}
    P^i_j\annot{a^i_j}
\}_i$,
we define
$\p\rho = \{
 \coprod_{j}
    P^i_j\sigma^i_j\annot{\rho(a^i_j)}
 \}_i$
where $\sigma^i_j$ is the renaming induced by $\rho$ on $a^i_j$.
Note that the renaming on parameters induced by a renaming of annotations
may conflict: this happens, for example, when
$\rho(\aagent(\vect{k},\vect{n}))=\aagent(\vect{k_1},\vect{n})$
and
$\rho(\aagent(\vect{k},\vect{m}))=\aagent(\vect{k_2},\vect{m})$.

A renaming of annotations can break executability. 
Even when executability is
preserved, it is not obvious to relate processes before and after the 
renaming, as messages can be affected in complex ways and conditionals may not
evaluate to the same outcome. Fortunately, frame opacity and 
well-authentication will provide us with strong properties to reason
over executions, as seen in the next subsections.

\begin{example}
\label{ex:renaming-breaks-executability}
Consider the annotated trace $\ta$ from \Cref{ex:annotation}
that $\pMa$ can execute.
The ground configuration $\K(\ta)$ can execute it as well, using $\LRstep{}$.
We now define $\rho$ as follows: 
$\rho(\aini(k',n'_I))= \aini(k_1,n'_I)$ and
$\rho(\ares(k',n'_R))= \ares(k_2,n'_R)$ for some
fresh names $k_1,k_2$.
The trace $\ta\rho$ can no longer be executed by $\pMa$ nor by $\K(\ta)\rho$
  (even using $\LRstep{}$) because the first output sent
by $\ares(k_2,n'_R)$ (\ie $\senc(\langle {n'_I}, {n'_R}\rangle,k_2)$)
  will not be accepted by  $\aini(k_1,n'_I)$ since $k_1\neq k_2$.
\end{example}


\section{Control is determined by associations}
We show in that section that the outcome of tests is entirely determined by
associations. This will be useful to show that, if we modify
an execution (by renaming agents) while preserving enough
associations, then the control flow is left unchanged.

\begin{proposition} \label{prop:then-iff-assoc}
  We assume that $\Pi$ satisfies item (i) of the well-authentication
  condition.
  Let $\ta=\ta_0.\tau_x\annot{a_1}$ with $\tau_x\in\{\taut,\taue\}$
  be a well-formed annotated trace such that
  $$\K(\ta)\lrstep{\ta_0.\tau_x\annot{a_1}}(\p;\phi)$$
  and the last action (\ie $\tau_x\annot{a_1}$)
  is performed by an unsafe conditional.
  We have that $\tau_x=\taut$ if, and only if, 
  there exists $a_2\in\agents$ such that $a_1$ and $a_2$
  are associated in $(\ta_0,\phi)$.
\end{proposition}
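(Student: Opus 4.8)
The plan is to prove the two implications separately, moving freely between executions of $\K(\ta)$ and of $\pMa$ by means of \Cref{prop:gc-pm}, and relying on the fact that the association relation depends only on $\restrict{\ta_0}{a_1,a_2}$ being honest, a property insensitive to the insertion or deletion of $\tau$ actions (so associations are preserved by $\upto$). Throughout I will use that the conditional fired by $\tau_x\annot{a_1}$ occupies a fixed syntactic position in the role of $a_1$, determined by the number of inputs and outputs $a_1$ has already performed; by the grammar of \Cref{def:priv:role:grammar} this conditional immediately follows an input of $a_1$.

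For the implication $\tau_x=\taut \Rightarrow$ there exists an associated $a_2$, I would first lift the given execution. From $\K(\ta)\lrstep{\ta_0.\taut\annot{a_1}}(\p;\phi)$, \Cref{prop:gc-pm} (item~2) yields $\pMa\LRstep{\ta_0.\taut\annot{a_1}}(\p';\phi)$ with the same frame $\phi$; concretely $\pMa\lrstep{\ta_0'.\taut\annot{a_1}}(\p';\phi)$ for some $\ta_0'\upto\ta_0$. Since the last action is performed by an unsafe conditional, the first disjunct in \Cref{condi:auth} is excluded, so item~(i) of well-authentication (which we assume) provides an annotation $a'$ associated with $a_1$ in $(\ta_0',\phi)$. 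As $\ta_0'\upto\ta_0$ and honesty of $\restrict{\cdot}{a_1,a'}$ is invariant under $\tau$, the annotations $a_1$ and $a'$ are associated in $(\ta_0,\phi)$ as well, giving the required $a_2:=a'$.

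For the converse, suppose $a_1$ and some $a_2$ are associated in $(\ta_0,\phi)$, so that $\restrict{\ta_0}{a_1,a_2}$ is honest and $a_1,a_2$ are dual. I would fix a \emph{reference} honest run: starting from the honest trace $\tr_h$ of $P_\Pi$ guaranteed by \Cref{def:proto}, apply the bijective renaming (and, in the disclosed-identity case, the substitution of $\vect\id$ by the public constants $\vect\idzero$) sending the canonical parameters of $\ini,\res$ to the parameters $(\vect k_1,\vect n_1),(\vect k_2,\vect n_2)$ of $a_1,a_2$; in the shared case duality forces $\vect k_1=\vect k_2$ so the renaming is consistent, and in the non-shared case the roles use disjoint names so no clash arises. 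By stability of the semantics under renaming this reference run is honest, and since an honest trace contains no $\taue$, all of its conditionals evaluate to $\taut$. The core is then an induction on the length of $\restrict{\ta_0}{a_1,a_2}$ showing that at every step the process states of $a_1$ and $a_2$ match those of the reference run and each message they send or receive equals, modulo $\E$, the corresponding reference message. The output step uses that outputs are constructor terms built from names and previously received (matching) messages; the input step uses honesty, namely that the message $R_i\phi\redv$ input by $a_1$ equals $\phi(w_i)$ where $w_i$ is the handle of the immediately preceding output of the dual agent; and the conditional step uses \Cref{def:computation-rel} (items~5 and~6), by which the success of a computation is invariant under replacement of subterms by $=_\E$-equal ones, so the conditional of $a_1$ succeeds exactly as in the reference run. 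Applying this to the final action $\tau_x\annot{a_1}$ gives $\tau_x=\taut$.

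The main obstacle will be this last induction, and specifically two points within it: constructing the reference run with exactly the right parameters in the disclosed-identity case (where some identity parameters are public constants rather than names, so pure renaming-invariance must be supplemented by the uniform substitution $\{\vect\id\mapsto\vect\idzero\}$, which one must check preserves honesty and conditional outcomes), and transferring computation success across $=_\E$ at conditionals, which is precisely where the closure properties of \Cref{def:computation-rel} are needed. The bookkeeping of matching syntactic positions between $\restrict{\ta_0}{a_1,a_2}$ and the reference run across both the shared and non-shared cases is routine but must be stated carefully, so that the conditional fired by $a_1$ is correctly identified with a conditional that succeeds in $\tr_h$.
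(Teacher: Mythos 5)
Your proof is correct, and its skeleton is the paper's: the forward direction is literally identical (lift to $\pMa$ via \Cref{prop:gc-pm}, apply item (i) of \Cref{condi:auth}, transfer the association back along $\upto$), and the converse rests on the same insight that an honest interaction between dual agents must replay the honest execution of $P_\Pi$, in which no $\taue$ occurs. Where you differ is in how the converse is realized. The paper first performs a trace surgery: since honesty gives $M_i\phi\redv \theo w_i\phi$, each attacker recipe $M_i$ is replaced by the corresponding handle $w_i$ without changing any message, after which $\restrict{\ta'}{a_1,a_2}$ becomes a \emph{self-contained} execution of the two agents alone; this execution is then placed under the restrictions of $P_\Pi$, and a single determinism claim (``everything is deterministic up to $\E$'') identifies it with a prefix of the honest trace, so the last conditional must be $\taut$. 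You instead keep the original trace and prove the correspondence by induction against a reference run obtained by renaming $\tr_h$, transferring conditional outcomes via items 5--6 of \Cref{def:computation-rel}. Your induction is essentially the unfolded content of the paper's determinism appeal, so nothing is lost; what it buys is explicitness: it localizes exactly where the computation-relation axioms are used, and it surfaces the disclosed-identity case, where some identity parameters are the constants $\vect\idzero$ and pure renaming-stability is insufficient --- a genuine subtlety, since the paper's two-agent execution is started from $P_\Pi$, whose \textsc{New} rule produces only names, so the paper silently absorbs this case into ``up to a bijective renaming of parameters.'' What the paper's route buys is brevity: the recipe-replacement trick collapses your input-step bookkeeping into one observation, and the determinism claim spares the step-by-step matching you would have to write out.
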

\begin{proof}
  ($\Rightarrow$) 
  We start by
  applying Proposition~\ref{prop:gc-pm} to obtain
  an execution $$\pMa\LRstep{\ta_0.\taut\annot{a_1}}(\p';\phi)
  \text{ and thus }
  \pMa\lrstep{\ta_0^*.\taut\annot{a_1}}(\p'';\phi).$$
  for some $\ta_0^*\upto\ta_0$.
  As a consequence of well-authentication, item (i) applied on the above execution,
  we obtain that for some $a_2\in\agents$, 
  $a_1$ and $a_2$ are associated in $(\ta_0^*,\phi)$.
  Since $\ta_0\upto\ta_0^*$, they are also associated in $(\ta_0,\phi)$.

  ($\Leftarrow$) For this other direction, we observe that (up to changes
  of recipes that do not affect the resulting messages) if two agents are 
  associated in the above execution (starting with $\K(\ta)$),
  then they are executing \emph{the} honest trace of $\Pi$
  modulo a renaming of parameters, thus the considered test must be successful.
  We thus assume that $a_1 = \aagent_1(\vect{k_1},\vect{n_1})$
  and $a_2 = \aagent_2(\vect{k_2},\vect{n_2})$ are associated in $(\ta_0,\phi)$
  we shall prove that $\tau_x = \taut$.
  By association, $\restrict{\ta_0}{a_1,a_2}$ is honest:
  its observable actions are of the form
  $\Out(c_1,w_1).\In(c'_1,M_1)
  \ldots \Out(c_n,w_n).\In(c'_n,M_n)$
  with possibly an extra output at the end, and are
  such that $M_i\phi\redc \theo w_i\phi$ for all $1\leq i \leq n$.
  Consider~$\ta'$ obtained from~$\ta_0$ by replacing
  each recipe $M_i$ by $w_i$.
  Since this change of recipes does not affect the resulting messages,
  the modified trace can still be executed by $\K(\ta)$ and
  yields the same configuration $(\p;\phi)$.
  But now $\restrict{\ta'}{a_1,a_2}$ is a self-contained execution,
  \ie if $P$ and $Q$ are the processes (possibly sub-processes) respectively annotated
  $a_1$ and $a_2$ in $\K(\ta)$, we have:
  $$(\{P\annot{a_1},Q\annot{a_2}\};\emptyset)
  \lrstep{\restrict{\ta'}{a_1,a_2}}
  (\{P'\annot{a_1},Q'\annot{a_2}\};\phi')
  \lrstep{\tau_x\annot{a_1}}
  (\{P''\annot{a_1},Q'\annot{a_2}\};\phi').$$
  In the shared case (\ie $\fn(\ini)\cap\fn(\res)\neq\emptyset$), by definition of association, the identity parameters of $a_1$ are equal to those of
  $a_2$. Otherwise, it holds that $\fn(\ini)\cap\fn(\res)=\emptyset$. In both cases, we thus have:
  $$
  \begin{array}{rcl}
  (\{\new\;\vect{k}.(\new\;\vect{n}_I.\ini\ |\ \new\;\vect{n}_R.\res)\};\emptyset) &\lrstep{\tau^*}&
  (\{P\annot{a_1},Q\annot{a_2}\};\emptyset)\\
  &\lrstep{\restrict{\ta'}{a_1,a_2}}&
  (\{P'\annot{a_1},Q'\annot{a_2}\};\phi')\\
  &\lrstep{\tau_x\annot{a_1}}&
  (\{P''\annot{a_1},Q'\annot{a_2}\};\phi').\\
  \end{array}
  $$
  In that execution, everything is deterministic (up to the equational
  theory) and thus
  the execution is actually a prefix of \emph{the} honest execution
  of $\Pi$ (from the process $P_\Pi$ defined in \Cref{def:proto}), up to a bijective renaming of parameters
  (note that $P$ and $Q$ do not share session parameters).
  Remind that all tests must be positive in the honest execution (\ie $\taue$
  does not occur in the honest execution).
  Therefore, $\tau_x=\taut$ concluding the proof.
\end{proof}

\section{Invariance of frame idealisations}

In general, a renaming of annotations can break executability: as
illustrated in Example~\ref{ex:renaming-breaks-executability},
mapping two dual annotations to annotations with different 
identities breaks the ability of the two underlying agents to communicate successfully.
Moreover, even when executability is preserved, parameters change (so do names) and thus
frames are modified.
However, as stated next in Proposition~\ref{prop:invariance-id-op}, such renamings do not change idealised frames.
We obtain the latter
since we made sure that idealised frames only depend on what is already observable and not on
specific identity or session parameters.
In combination with frame opacity, this will
imply (\Cref{prop:invariant-idealisation}) that a renaming of annotations has no observable effect on the resulting 
\emph{real} frames.

\begin{proposition}
  \label{prop:invariance-id-op}
Let $\ta$ be an annotated trace such that $\ideaf(\ta)$ is
  well-defined. Let $\rho$ be a renaming of
annotations. We have that  $\ideaf(\ta)
\sim \ideaf(\ta\rho)$.
\end{proposition}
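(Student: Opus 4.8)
The plan is to reduce the statement to the already-established independence of idealised frames from the auxiliary function $\fr$ (Proposition~\ref{prop:id-represent}). The key observation is that, by Definition~\ref{def:idealphi}, the frame $\ideaf^\fr(\ta)$ depends on the trace $\ta$ only through (i) the sequence of output labels $\ell$, (ii) the input recipes $R_j$ attached to each agent, and (iii) the values $\fr(a,x^\varN_j)$ chosen for the name variables. Since $\ta\rho$ is obtained from $\ta$ by relabelling annotations only — leaving every action, handle and recipe untouched — the data (i) and (ii) are literally identical for $\ta$ and $\ta\rho$, and the only change is that the annotation $a$ decorating each output becomes $\rho(a)$.

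First I would define $\frr := \fr \circ (\rho \times \mathrm{id})$, that is, $\frr(a,x^\varN_j) := \fr(\rho(a),x^\varN_j)$. Because $\rho$ is injective (Definition~\ref{def:renaming}) and $\fr$ is injective, their composite $\frr$ is again an injective function $\agents \times \X^\varN \to \N$, hence a legitimate choice of name-assignment in Definition~\ref{def:idealphi}.

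Next I would prove, by induction on $\ta$, the exact equality $\ideaf^\fr(\ta\rho) = \ideaf^\frr(\ta)$. The base case $\ta = \epsilon$ and the case of a non-output action are immediate, using that $\rho$ preserves the kind of each action (so $\alpha\rho$ is an output iff $\alpha$ is). For an output $\ell{:}\Out(c,w)\annot{a}$, the renamed trace $(\ta.\ell{:}\Out(c,w)\annot{a})\rho$ equals $\ta\rho.\ell{:}\Out(c,w)\annot{\rho(a)}$; computing the new frame entry on each side, the name substitution contributes $\fr(\rho(a),x^\varN_j) = \frr(a,x^\varN_j)$ on the left and exactly $\frr(a,x^\varN_j)$ on the right, while the input substitution uses the $j$-th input recipe $R_j$ of $\rho(a)$ in $\ta\rho$, which is precisely the $j$-th input recipe of $a$ in $\ta$. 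Together with the induction hypothesis $\ideaf^\fr(\ta\rho) = \ideaf^\frr(\ta)$, the two new frame entries $\{w \mapsto \ideam{\ell}\sigma^\varI\sigma^\varN \redv\}$ coincide. Since this is an equality of computations (not merely of results), it also transfers well-definedness from $\ideaf(\ta)$ to $\ideaf(\ta\rho)$, as required by the hypothesis.

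Finally, Proposition~\ref{prop:id-represent} gives $\ideaf^\frr(\ta) \sim \ideaf^\fr(\ta)$, and chaining with the equality above yields $\ideaf^\fr(\ta\rho) = \ideaf^\frr(\ta) \sim \ideaf^\fr(\ta)$, which is the desired $\ideaf(\ta) \sim \ideaf(\ta\rho)$ under the notational convention introduced after Proposition~\ref{prop:id-represent}. I expect the only delicate point to be the bookkeeping in the inductive step: one must check that relabelling annotations does not disturb the order-preserving correspondence between the inputs of $a$ in $\ta$ and those of $\rho(a)$ in $\ta\rho$ (which relies on the standing assumption that each annotation decorates a unique agent), and that the recipes $R_j \in \T(\Sigma_\pub,\W)$ genuinely depend only on the handles in $\W$ and not on the renamed identity or session parameters.
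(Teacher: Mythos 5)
Your proof is correct and follows essentially the same route as the paper's: an induction establishing an exact equality of idealised frames under a companion name assignment, followed by an appeal to Proposition~\ref{prop:id-represent}. The only (immaterial) difference is that you pull $\fr$ back along $\rho$, defining $\frr(a,x^\varN_j)=\fr(\rho(a),x^\varN_j)$ (whose injectivity is immediate since $\rho$ and $\fr$ are injective), whereas the paper pushes forward via $\fr_2(a\rho,x)=\fr_1(a,x)$; both choices support the identical inductive argument, resting on the facts that recipes are unchanged by the renaming and that name-variable values correspond through $\rho$.
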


\begin{proof}
Let $\ta$ be an annotated trace such that $\ideaf(\ta)$ is
  well-defined. Let $\rho$ be a renaming of annotations.
Let $\fr_1$ be an arbitrary name assignment,  and $\fr_2$ be an injective function satisfying
   $\fr_2(a\rho,x) = \fr_1(a,x)$. First, we show, by induction on $\ta$, that $\ideaf^{\fr_1}(\ta)=\ideaf^{\fr_2}(\ta\rho)$.
The only interesting case is when 
  $\ta=\ta_0.(\ell:\Out(c,w)\annot{a})$. In such a case, we have that:
  $$\ideaf^{\fr_1}(\ta_0.(\ell:\Out(c,w)\annot{a}))=\ideaf^{\fr_1}(\ta_0)
  \cup\{w\mapsto \ideam{\ell}\sigma^\mathsf{i}_1\sigma^\mathsf{n}_1
  \redv\}$$
  with
  $\sigma^\mathsf{n}_1(x^\mathsf{n}_j)={\fr_1}(a,x^\mathsf{n}_j)$ and
  $\sigma^\mathsf{i}_1(x^\mathsf{i}_j)=R_j\ideaf^{\fr_1}(\ta_0)$ where
  $R_j$ is the $j$-th input of~$a$ in~$\ta_0$.
  The idealised frame
  $\ideaf^{\fr_2}(\ta\rho)$ is defined similarly, \ie
$$\ideaf^{\fr_2}(\ta_0\rho.(\ell:\Out(c,w)\annot{a\rho}))=\ideaf^{\fr_2}(\ta_0\rho)
  \cup\{w\mapsto \ideam{\ell}\sigma^\mathsf{i}_2\sigma^\mathsf{n}_2
  \redv\}$$
  with
  $\sigma^\mathsf{n}_2(x^\mathsf{n}_j)={\fr_2}(a\rho,x^\mathsf{n}_j)$ and
  $\sigma^\mathsf{i}_2(x^\mathsf{i}_j)=R_j^\rho\ideaf^{\fr_2}(\ta_0\rho)$ where
  $R_j^\rho$ is the $j$-th input of~$a\rho$ in~$\ta_0\rho$.
  By induction hypothesis we know that
  $\ideaf^{\fr_1}(\ta_0)=\ideaf^{\fr_2}(\ta_0\rho)$. Therefore, to
  conclude, it remains to show that
  $\ideam{\ell}\sigma^\mathsf{i}_1\sigma^\mathsf{n}_1=\ideam{\ell}\sigma^\mathsf{i}_2\sigma^\mathsf{n}_2$.
  Actually, we have that $R_j^\rho = R_j$, thus
  $\sigma^\mathsf{i}_1(x^\mathsf{i}_j)=\sigma^\mathsf{i}_2(x^\mathsf{i}_j)$,
  and
  $\sigma^\mathsf{n}_2(x^\mathsf{n}_j)={\fr_2}(a\rho,x^\mathsf{n}_j)=
  {\fr_1}(a,x^\mathsf{n}_j)=\sigma^\mathsf{n}_1(x^\mathsf{n}_j)$. 

We have shown that $\ideaf^{\fr_1}(\ta)=\ideaf^{\fr_2}(\ta\rho)$ and
relying on Proposition~\ref{prop:id-represent}, we easily deduce that $\ideaf(\ta)
\sim \ideaf(\ta\rho)$.
\end{proof}

\begin{proposition} \label{prop:invariant-idealisation}
  We assume that $\Pi$ satisfies frame opacity.
  Let $\rho$ be a renaming of annotations and $\ta$ be a well-formed
  annotated trace.
  If $\K(\ta) \LRstep{\ta} (\p_1;\phi_1)$
  and $\K(\ta\rho) \LRstep{\ta\rho} (\p_2;\phi_2)$,
  then we have that $\phi_1 \sim \phi_2$.
\end{proposition}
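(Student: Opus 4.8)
The plan is to reduce static equivalence of the two \emph{real} frames to static equivalence of their \emph{idealised} frames, which is already known to be preserved under renamings of annotations by Proposition~\ref{prop:invariance-id-op}. Concretely, I would chain three available facts: frame opacity, which ties each reachable real frame to its idealisation; Proposition~\ref{prop:invariance-id-op}, which gives $\ideaf(\ta)\sim\ideaf(\ta\rho)$; and transitivity of $\sim$.

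First I would lift both executions from the ground configurations back to $\pMa$. Since $\ta$ is well-formed by hypothesis and $\ta\rho$ is well-formed by the definition of renamings (Definition~\ref{def:renaming}), Proposition~\ref{prop:gc-pm}(2) applies to each of $\K(\ta)\LRstep{\ta}(\p_1;\phi_1)$ and $\K(\ta\rho)\LRstep{\ta\rho}(\p_2;\phi_2)$, yielding configurations $K'_1,K'_2$ with $\pMa\LRstep{\ta}K'_1$, $\pMa\LRstep{\ta\rho}K'_2$, $\phi(K'_1)=\phi_1$ and $\phi(K'_2)=\phi_2$. Unfolding $\LRstep{}$ then gives traces $\hat\ta\upto\ta$ and $\widehat{\ta\rho}\upto\ta\rho$ with $\pMa\lrstep{\hat\ta}K'_1$ and $\pMa\lrstep{\widehat{\ta\rho}}K'_2$, so that frame opacity applies to these honest-starting executions of $\pMa$.

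By frame opacity, $\ideaf(\hat\ta)$ and $\ideaf(\widehat{\ta\rho})$ are defined and satisfy $\ideaf(\hat\ta)\sim\phi_1$ and $\ideaf(\widehat{\ta\rho})\sim\phi_2$. Because idealised frames are computed inductively and are left literally unchanged by the insertion or deletion of $\tau$ actions (Definition~\ref{def:idealphi} modifies the frame only on outputs, while $\upto$ preserves all observable actions together with the input recipes $R_j$ used in $\sigma^\varI$), I obtain $\ideaf(\hat\ta)=\ideaf(\ta)$ and $\ideaf(\widehat{\ta\rho})=\ideaf(\ta\rho)$. Hence $\ideaf(\ta)$ is defined, $\ideaf(\ta)\sim\phi_1$ and $\ideaf(\ta\rho)\sim\phi_2$. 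Since $\ideaf(\ta)$ is defined, Proposition~\ref{prop:invariance-id-op} gives $\ideaf(\ta)\sim\ideaf(\ta\rho)$, and transitivity of static equivalence yields $\phi_1\sim\ideaf(\ta)\sim\ideaf(\ta\rho)\sim\phi_2$, i.e.\ $\phi_1\sim\phi_2$.

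The proposition is essentially glue between three earlier results, so I do not expect a genuine obstacle; the only care needed is bookkeeping: reconciling $\LRstep{}$ with $\lrstep{}$ through $\upto$, checking that $\ta\rho$ is well-formed so that Proposition~\ref{prop:gc-pm} applies, and justifying that passing to the $\tau$-free witnesses $\hat\ta,\widehat{\ta\rho}$ leaves the idealised frames \emph{identical} (for a fixed $\fr$) rather than merely statically equivalent, which is what lets frame opacity be transported from $\hat\ta$ to $\ta$.
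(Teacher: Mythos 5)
Your proposal is correct and follows essentially the same route as the paper's own proof: lift both executions to $\pMa$ via Proposition~\ref{prop:gc-pm}, apply frame opacity to the resulting $\upto$-equivalent traces, observe that idealised frames are literally unchanged under $\tau$-padding, and conclude with Proposition~\ref{prop:invariance-id-op} and transitivity of $\sim$. The extra bookkeeping you flag (well-formedness of $\ta\rho$, and identity rather than mere equivalence of idealisations across $\upto$) is exactly the care the paper's proof also takes.
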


\begin{proof}
  We start by applying Proposition~\ref{prop:gc-pm} on the two given executions
  to obtain two executions starting from~$\pMa$:
\begin{itemize}
\item   $\pMa\lrstep{\ta^*}(\p_1';\phi_1)$  with $\ta^*\upto\ta$; and
\item  $\pMa\lrstep{\ta_\rho^*}(\p_2';\phi_2)$ with
$\ta_\rho^*\upto\ta\rho$.
\end{itemize}
Relying on frame opacity, we know that $\ideaf(\ta^*)$ (resp.
$\ideaf(\ta_\rho^*)$) is well-defined and $\ideaf(\ta^*) \sim \phi_1$
(resp. $\ideaf(\ta_\rho^*) \sim \phi_2$).

 Note that, if $\ta_1$ and $\ta_2$ are two annotated traces such that $\ta_1\upto\ta_2$ and
  $\fr_1$ is a name assignment, then
  $\ideaf^{\fr_1}(\ta_1)=\ideaf^{\fr_1}(\ta_2)$. 
Thanks to this
  remark, we
  easily deduce that $\ideaf(\ta)  \sim \ideaf(\ta^*)$
  and   $\ideaf(\ta\rho) = \ideaf(\ta_\rho^*)$.
Thanks to Proposition~\ref{prop:invariance-id-op}, we know that 
$\ideaf(\ta) \sim \ideaf(\ta\rho)$ and by transitivity of $\sim$, we
conclude that $\phi_1 \sim \phi_2$.
\end{proof}

\section{A sufficient condition for preserving executability}

We can now state a key lemma (\Cref{lem:executable-renaming}),
identifying a class of renamings which
yield indistinguishable executions.
More precisely, this lemma shows that for renamings satisfying some requirements,
if $\K(\ta)$ can execute an annotated trace $\ta$ then
$\K(\ta)\rho$ has an indistinguishable execution following the annotated trace
$\ta\rho$. Remark that, in the conclusion, the renaming is applied after building
the ground configuration ($\K(\ta)\rho$) instead of building the ground configuration
of the renamed trace ($\K(\ta\rho)$).
Both variants are a priori different.
However, in the final proof and in order to leverage previous propositions, we will need to
relate executions of $\K(\ta)$ with executions
of $\K(\ta\rho)$.
The following easy proposition bridges this gap.
We also state and prove a variant of \Cref{prop:gc-pm}, item (4)
when $\rho$ is applied after building the ground configuration.

\begin{proposition}
Let $\ta$ be a well-formed annotated trace and $\rho$ be  a renaming
of annotations.

If 
$\K(\ta)\rho\lrstep{\ta\rho}(\p';\phi)$
then
$\K(\ta\rho)\LRstep{\ta\rho}(\p'';\phi)$.
\label{prop:relate-rho}
\end{proposition}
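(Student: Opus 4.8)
The plan is to compare the two configurations $\K(\ta)\rho$ and $\K(\ta\rho)$ at the structural level and to show that they consist of exactly the same multiset of fully-instantiated, annotated atomic agents, differing only in how these agents are assembled into sequential compositions $\coprod$. First I would observe that the renaming $\sigma^i_j$ induced on the parameters of an agent with annotation $a^i_j = A(\vect{l},\vect{m})$ maps $\vect{l}$ to $\vect{l'}$ and $\vect{m}$ to $\vect{m'}$, where $\rho(a^i_j) = A(\vect{l'},\vect{m'})$. Hence the atomic agent $\bigl(A\{\vect{k}\mapsto\vect{l},\vect{n}_A\mapsto\vect{m}\}\bigr)\sigma^i_j$ appearing in $\K(\ta)\rho$ is syntactically the process $A\{\vect{k}\mapsto\vect{l'},\vect{n}_A\mapsto\vect{m'}\}$ annotated by $\rho(a^i_j)$, which is precisely the atomic agent that $\K(\ta\rho)$ builds for the annotation $\rho(a^i_j)\in\ta\rho$. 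For a role with $\dag_A = \rep$ both constructions place these agents at top level and the two configurations are literally equal, so only the $\rec$ case remains.

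In the $\rec$ case, the only difference lies in the grouping into $\coprod$-sequences: $\K(\ta)\rho$ groups agents by the identity parameters they had in $\ta$ (then renames), while $\K(\ta\rho)$ groups by their identity parameters after renaming. I would then argue that the grouping of $\K(\ta\rho)$ refines that of $\K(\ta)\rho$. This uses the third clause of Definition~\ref{def:renaming}: if $\rho(a_1)$ and $\rho(a_2)$ share identity parameters then $a_1$ and $a_2$ already did, so distinct identity groups of $\ta$ are never merged by $\rho$; each original-identity sequence of $\K(\ta)\rho$ is therefore split into one or more concurrent sequences in $\K(\ta\rho)$. Moreover, since $\ta\rho$ has exactly the same action order as $\ta$, the first-occurrence order used to sequence each group is preserved, so the order within every $\K(\ta\rho)$-sequence is an infix of the order of the original sequence it came from.

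With these two observations the simulation is immediate. Given the execution $\K(\ta)\rho \lrstep{\ta\rho} (\p';\phi)$, I would replay the actions of $\ta\rho$ one by one in $\K(\ta\rho)$, attributing each annotated action to the identically-annotated agent. The only thing to check is readiness: when an agent $a$ performs its first action, all agents preceding it in its $\K(\ta\rho)$-sequence must already have reduced to $0$. But such predecessors share $a$'s renamed identity, hence (by the refinement) also share $a$'s original identity and precede $a$ in the corresponding original-identity sequence of $\K(\ta)\rho$; since the latter is executed strictly sequentially, they have necessarily completed before $a$'s first action in the given execution, so all their actions occur earlier in $\ta\rho$ and have already been replayed in $\K(\ta\rho)$ — being the same processes, they have likewise reduced to $0$ there. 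Thus every action of $\ta\rho$ can be performed, in order, by $\K(\ta\rho)$. As the atomic agents are identical and perform identical input/output actions, the resulting frame is exactly $\phi$; the two executions may differ only in the $\tau$ actions arising from unfolding the sequential compositions, which is precisely what the relation $\LRstep{}$ (matching up to $\upto$) absorbs, giving $\K(\ta\rho) \LRstep{\ta\rho} (\p'';\phi)$.

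The main obstacle, and the only part requiring genuine care, is the readiness argument: making precise that the strict sequential order imposed by the coarse grouping in $\K(\ta)\rho$ is a linear extension of the partial order imposed by the finer grouping in $\K(\ta\rho)$, so that following the schedule dictated by $\ta\rho$ never violates a sequential constraint in $\K(\ta\rho)$. Everything else is routine bookkeeping about how $\K(\cdot)$ and the induced parameter renaming interact, together with the freshness guarantees carried by well-formedness of $\ta\rho$ (which holds because $\rho$ is a renaming of annotations), ensuring that the sequence-simplification side conditions of Figure~\ref{fig:seq} remain satisfied for the shorter sequences of $\K(\ta\rho)$.
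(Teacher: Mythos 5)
Your proof is correct and takes essentially the same approach as the paper's: both identify $\K(\ta)\rho$ and $\K(\ta\rho)$ as the same multiset of annotated atomic agents differing only in their sequential grouping, and both use the third clause of Definition~\ref{def:renaming} to show that every sequential constraint in $\K(\ta\rho)$ is already present in $\K(\ta)\rho$, so the given execution can be replayed and the residual availability differences are exactly what the weak relation $\LRstep{\ta\rho}$ absorbs. Your version merely spells out the refinement and readiness bookkeeping that the paper's terser argument leaves implicit.
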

\begin{proof}
Essentially, the proposition follows from the fact that there are less agents in sequence in
$\K(\ta\rho)$ than in $\K(\ta)\rho$, thanks to the third item of \Cref{def:renaming}.

More formally, by considering $\K(\ta\rho)$ and $\K(\ta)\rho$ as multiset of processes without sequence
(by removing all sequences and taking the union of processes), we obtain the same multisets.
Next, it suffices to prove that no execution is blocked by a sequence in $\K(\ta\rho)$.
By definition of the renaming $\rho$ (third requirement in \Cref{def:renaming}), if an agent $P\annot{\rho(a)}$ occurring
in $\K(\ta\rho)$ is in sequence with an agent $\rho(a')$ before him, then
$P\annot{a\rho}$ occurring in $(\K(\ta)\rho)$ must be in sequence with $a$ before him as well.
Hence, when a process $P\annot{\rho(a)};Q$ is available (\ie at top-level) at some point
in the execution from $\K(\ta)\rho$, then a similar process
$P\annot{\rho(a)};Q'$ is also available at the same point in the execution from $\K(\ta\rho)$.
However, a process may become available in the execution from 
$\K(\ta)\rho$ only after having performed rule \textsc{Seq},
while the same process may be
immediately available in the multiset in the execution from $\K(\ta\rho)$.
This is why we only obtain a weak execution
$\K(\ta\rho)\LRstep{\ta\rho}(\p'';\phi)$.
\end{proof}

\begin{proposition}
If $\ta=\ta_1.\ta_2$ is a well-formed annotated trace and $\K(\ta_1.\ta_2)\rho\lrstep{\ta_1\rho} K$ then
$\K(\ta_1)\rho\lrstep{\ta_1\rho}K'$ with $\phi(K) =\phi(K')$.
\label{prop:gc-pm:var}
\end{proposition}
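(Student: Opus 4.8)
The plan is to replay the argument already used for item (4) of \Cref{prop:gc-pm}, modulo the renaming $\rho$. The single new ingredient is that, by definition, $\p\rho$ keeps exactly the same chain skeleton $\{\coprod_j(\cdot)\}$ as $\p$, only substituting parameters and relabelling annotations; hence the structural comparison between $\K(\ta_1)$ and $\K(\ta_1.\ta_2)$ transports verbatim to their renamed versions.

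First I would spell out the relationship between $\K(\ta_1)$ and $\K(\ta_1.\ta_2)$. Since $\ta_1$ is a prefix of $\ta_1.\ta_2$, for every role $A$ and identity $\vect{l}$ the sequence $\mathrm{sess}^{\mathsf{seq}}_A(\ta_1,\vect{l})$ is an initial segment of $\mathrm{sess}^{\mathsf{seq}}_A(\ta_1.\ta_2,\vect{l})$ (ordering by first occurrence is monotone under taking prefixes), while each identity occurring only in $\ta_2$ contributes a fresh chain absent from $\K(\ta_1)$. Thus $\K(\ta_1.\ta_2)$ is obtained from $\K(\ta_1)$ by appending agents at the tails of existing chains and adding new chains in parallel. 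Applying $\rho$ pointwise, the same description holds for $\K(\ta_1.\ta_2)\rho$ relative to $\K(\ta_1)\rho$: each chain of $\K(\ta_1)\rho$ is an initial segment of the corresponding chain of $\K(\ta_1.\ta_2)\rho$, and the surplus consists only of chain tails and fresh parallel chains.

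Next I would observe that the execution $\K(\ta_1.\ta_2)\rho\lrstep{\ta_1\rho}K$ activates only the heads of chains. Every action of $\ta_1\rho$ carries an annotation $\rho(a)$ with $a$ occurring in $\ta_1$, and inside a chain an agent may fire only after its predecessors have terminated; hence the agents touched by $\ta_1\rho$ are, for each $\vect{l}$, exactly the initial segment indexed by $\mathrm{sess}_A(\ta_1,\vect{l})$, which is precisely the whole chain present in $\K(\ta_1)\rho$. The appended tails and the fresh parallel chains are never reduced during $\ta_1\rho$.

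Consequently the identical sequence of transitions is available from $\K(\ta_1)\rho$, since the surplus processes of $\K(\ta_1.\ta_2)\rho$ play no role and removing them blocks no step; this gives $\K(\ta_1)\rho\lrstep{\ta_1\rho}K'$. As both executions fire the same rules on the same inputs and produce the same outputs, we obtain $\phi(K')=\phi(K)$. The only delicate point is the monotonicity of $\mathrm{sess}^{\mathsf{seq}}_A$ under prefixes together with the fact that $\p\rho$ leaves the chain skeleton intact; once these are granted, the statement is the faithful renamed copy of the proof of \Cref{prop:gc-pm}(4).
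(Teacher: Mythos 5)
Your proof is correct and takes essentially the same approach as the paper's: the paper likewise reduces the statement to the two observations that the processes present in $\K(\ta_1.\ta_2)\rho$ but not in $\K(\ta_1)\rho$ are simply unused when executing $\ta_1\rho$, and that $\rho$ acts identically on the processes common to both ground configurations. You merely make explicit the chain-skeleton and prefix-monotonicity details (initial segments of $\mathrm{sess}^{\mathsf{seq}}_A$, tails never firing) that the paper's proof leaves implicit.
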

\begin{proof}
  The argument is the same as for \Cref{prop:gc-pm}, item (4):
  the processes that are added to $\K(\ta_1)\rho$ when considering 
  $\K(\ta_1.\ta_2)\rho$ are unused in the execution of $\ta_1\rho$;
  moreover, the effect of $\rho$ on the processes of $\K(\ta_1)$
  is obviously the same as in $\K(\ta_1.\ta_2)$.
\end{proof}

Finally, after having defined the notion of {\em connection} between agents, we can state our
key lemma.

\begin{definition}
  Annotations $a$ and $a'$ are \emph{connected} in $(\ta,\phi)$ if
  they are associated in $(\ta_0,\phi)$ for some prefix $\ta_0$ of $\ta$
  that contains at least one $\taut$ action of an unsafe conditional
  annotated with either~$a$ or~$a'$.
\end{definition}

\begin{lemma}
\label{lem:executable-renaming}
We assume that $\Pi$ satisfies frame opacity and
item (i) of well-authentication.
Let $\ta$ be a well-formed annotated trace such that
$\K(\ta) \lrstep{\ta} (\p;\phi)$.
Let $\rho$ be a renaming of annotations.
Moreover, when $\fn(\ini)\cap\fn(\res)\neq\emptyset$ (shared case),
we assume that for any annotations $a$, $a'$, it holds that~$a$ and~$a'$ are connected in $(\ta,\phi)$, if, and only if,
$\rho(a)$ and $\rho(a')$ are dual.

We have that $\K(\ta)\rho \lrstep{\ta\rho} (\q;\psi)$
for some $\psi$ such that $\phi \sim \psi$.
\end{lemma}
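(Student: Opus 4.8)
The plan is to prove the executability of $\ta\rho$ by $\K(\ta)\rho$ by induction on the length of $\ta$, and to read off the static equivalence $\phi\sim\psi$ at the end as a direct corollary. The empty trace is trivial, so in the inductive step I would write $\ta=\ta_0.\beta$, assume $\K(\ta)\rho\lrstep{\ta_0\rho}(\q_0;\psi_0)$, and first record that $\psi_0\sim\phi_0$, where $\phi_0$ is the frame reached after $\ta_0$ in the given execution. I do not expect to prove this equivalence by hand: using Proposition~\ref{prop:gc-pm:var} then Proposition~\ref{prop:relate-rho} to move from $\K(\ta)\rho$ to $\K(\ta_0\rho)$, and Proposition~\ref{prop:gc-pm} item~(4) to move from $\K(\ta)$ to $\K(\ta_0)$, I land exactly in the hypotheses of Proposition~\ref{prop:invariant-idealisation} (which invokes frame opacity), whence $\phi_0\sim\psi_0$. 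With this equivalence available I would then extend the renamed execution by $\beta\rho$, proceeding by case analysis on the shape of $\beta$.

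The structural cases (internal reductions, \textsc{Par}, \textsc{Seq}) and the \textsc{Out} case are immediate, since the relevant renamed agent sits at the appropriate control point in $\K(\ta)\rho$ by construction. For an input $\In(c,R)\annot{a}$, the recipe $R$ is untouched by $\rho$; since $R\phi_0\redc u$ for some message and $\phi_0\sim\psi_0$, static inclusion yields $R\psi_0\redc u'$ for some message $u'$, so the \textsc{In} rule fires. The genuinely delicate case is a conditional $\tau_x\annot{a}$: I must ensure that the renamed agent $\rho(a)$ takes the \emph{same} branch. If the conditional is safe, its outcome is a Boolean combination of recipes over previously input and output handles, hence a static-equivalence invariant of the frame, so $\phi_0\sim\psi_0$ forces the same branch.

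For an \emph{unsafe} conditional I would invoke Proposition~\ref{prop:then-iff-assoc} twice, once on the original $\K(\ta)$-execution and once on the renamed $\K(\ta_0\rho)$-execution (the latter reached via Propositions~\ref{prop:gc-pm:var} and~\ref{prop:relate-rho}), reducing the question to whether $a$, respectively $\rho(a)$, is associated to some agent. If the original outcome is $\taut$, then $a$ is associated to some $a'$ in $(\ta_0,\phi)$; since the trailing $\taut$ is unobservable, $a$ and $a'$ remain associated in $(\ta,\phi)$ and hence are connected, so in the shared case the lemma hypothesis gives that $\rho(a)$ and $\rho(a')$ are dual (in the non-shared case duality is automatic from role-preservation). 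Honesty of $\restrict{\ta_0}{a,a'}$ for $\phi_0$ transfers to honesty of $\restrict{\ta_0\rho}{\rho(a),\rho(a')}$ for $\psi_0$, because recipes and handles are unchanged by $\rho$ and $\phi_0\sim\psi_0$; thus $\rho(a)$ is associated to $\rho(a')$ and Proposition~\ref{prop:then-iff-assoc} yields $\taut\annot{\rho(a)}$. The converse outcome $\taue$ is the crux: here $a$ is associated to no one, and I argue by contradiction, assuming $\rho(a)$ is associated to some $\rho(a'')$. Reflecting this backwards needs the converse direction of the connected hypothesis (shared case) together with the third clause of Definition~\ref{def:renaming} — equal identity parameters after renaming force $\vect{k}$-equality before — to recover duality of $a,a''$, and $\phi_0\sim\psi_0$ to recover honesty, producing an association of $a$ that contradicts the assumption.

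Once $\K(\ta)\rho\lrstep{\ta\rho}(\q;\psi)$ is established, the concluding equivalence $\phi\sim\psi$ follows again from Proposition~\ref{prop:invariant-idealisation} (via Proposition~\ref{prop:relate-rho}), with no extra work. The main obstacle, as indicated, is the unsafe-conditional case, and within it the else-branch: showing that the renaming creates no spurious associations is exactly where Proposition~\ref{prop:then-iff-assoc}, both directions of the connected hypothesis, and the identity-reflecting clause of the renaming definition must be combined carefully. Everything else reduces to bookkeeping with the cited propositions relating $\K(\ta)$, $\K(\ta)\rho$, $\K(\ta_0\rho)$ and the two reference processes.
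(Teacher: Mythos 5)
Your proof is correct, and its skeleton coincides with the paper's: an induction over prefixes of $\ta$, static equivalence $\phi_0\sim\psi_0$ recovered at every prefix by chaining \Cref{prop:gc-pm}~(4), \Cref{prop:gc-pm:var}, \Cref{prop:relate-rho} and \Cref{prop:invariant-idealisation} (this is exactly how the paper obtains its frame invariant), and \Cref{prop:then-iff-assoc} to force equal outcomes of unsafe conditionals. Where you genuinely depart from the paper is the bookkeeping. The paper packages the two executions into a single bi-process and maintains, besides executability, explicit inductive invariants: a recipe decomposition (a') attached to every not-yet-executed safe conditional, and an exact correspondence (c$_1$)/(c$_2$) between associations before and after renaming, whose preservation must be checked through every input, output and conditional. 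You keep only executability as induction hypothesis and re-derive the needed facts on the spot at each conditional: honesty of a restricted trace is a conjunction of computation-success and equality tests over the frame, so it transfers in both directions along $\phi_0\sim\psi_0$; duality transfers forward through the connected-iff-dual hypothesis (note that what you actually get is association in the prefix $\ta_0.\taut\annot{a}$, hence connectedness in $(\ta,\phi)$ --- your phrase ``remain associated in $(\ta,\phi)$'' must be read this way, since association in the \emph{full} trace may fail and is not needed), and backward through the third clause of \Cref{def:renaming} plus role preservation. This local re-derivation is sound, is leaner than the paper's (c$_1$)/(c$_2$) machinery, and even shows that the converse direction of the connected hypothesis is not used in this lemma (it matters only later, when the renaming is constructed in \Cref{prop:single-session-renaming}). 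The one place where the paper's extra invariant does work that your write-up elides is the safe-conditional case: to justify that the outcome is ``a static-equivalence invariant of the frame'' you must replace the role's previously output messages by their handles and its input variables by the corresponding input recipes, and then argue, via the closure properties of \Cref{def:computation-rel}, that evaluating the instantiated terms succeeds iff the resulting single recipe succeeds on the frame --- on both sides, where the renamed side instantiates those inputs with different messages. That is precisely what invariant~(a') tracks through the paper's input case; it is routine, but it is a real step that your one-sentence justification skips.
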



\newcommand{\stefdiff}{\mathsf{choice}}



\begin{proof}
The ground configurations $\K(\ta)$ and $\K(\ta)\rho$ have the
  same shape: these processes only differ by their terms. Thus, we
  can put them together to form a bi-process\footnote{
    The bi-process considered here does not make use of the
    extension of diff-equivalence presented before:
    its inputs are of the form $\In(c,x)$ and not $\In(c,\stefdiff[x,y])$.
  }, \ie a process in which terms
  are bi-terms of the form $\stefdiff(t_1,t_2)$. Given a bi-process $B$, we denote $\fst{B}$
  (resp. $\snd{B}$) the process obtained from $B$ by replacing any
  bi-term $\stefdiff[t_1,t_2]$ by $t_1$ (resp. $t_2$). Moreover, we
  write $B \lrstep{\alpha[\stefdiff[a,a']]} B'$ to indicate
  that $\fst{B}$ executes $\alpha$ with annotation $a$ and $\snd{B}$
  performs $\alpha$ in the same way but with annotation $a'$.
  More generally, we
write $B \lrstep{\stefdiff[\ta,\ta']} B'$ to indicate that the
bi-process $B$ executes the trace $\ta$ on the left and $\ta'$ on the
  right, where the two traces differ only in their annotations.

    For the sake of simplicity, we decorate outputs of the biprocess obtained 
    from $\K(\ta)$ and $\K(\ta)\rho$ with pairwise distinct handles from 
    $\mathcal{W}$, and we assume that
    the handle that decorates an output will be used to store the output
    message when it will be executed.
  Lastly, we associate a vector of terms in $\T(\Sigma_\pub, \W \cup \X)$
  to each
safe conditional of the protocol:
recall that $\Let \; \vect z = \vect t \;\In \; P \; \Else \; Q$ is
identified as a safe conditional only if there exists a sequence
$\overline{R}$ of terms in $\T(\Sigma_\pub,
  \W \cup \X)$ such that $\overline{R}\{w_1 \mapsto u_1, \ldots, w_m \mapsto
  u_m\} =  \overline{t}$ where $u_i$ are  the messages used in outputs
  occurring before the conditional and $w_i$ is the handle
  associated to $u_i$; $\overline{R}$ is the vector
  of terms associated to the safe conditional. Note that
  $\overline{R}$ may contain variables from $\X$ corresponding to
  inputs performed before the
  conditional, which  will be instantiated by ground terms before
  the execution of the conditional.
When executing a process with labels on conditionals, we assume
  that the execution of an input $\In(c,x)$  with recipe $R_{in}$ will
  instantiate the 
  variable $x$ occurring in the label of the conditional with $R_{in}$.

 For any prefix
  $\K(\ta) \lrstep{\ta_0} (\p_0;\phi_0)$
  of the execution $\K(\ta) \lrstep{\ta} (\p;\phi)$,
  we prove that there exists an execution
  $\K(\ta)\rho \lrstep{\ta_0\rho} (\q_0;\psi_0)$ such that:
  \begin{itemize}
  \item[(a)] $B \lrstep{\stefdiff[\ta_0,\ta_0\rho]} B_0$ with $\fst{B_0} =
      (\p_0;\phi_0)$ and $\snd{B_0} = (\q_0;\psi_0)$. 
\item[(a')] Any
      bi-conditional $\Let \; \vect z = \stefdiff[\vect t_l, \vect t_r] \;\In \; B_P \; \Else \; B_Q$
      labeled with $\overline{R}$ is such that $\overline{R}=
      \overline{C}[R_1, \ldots, R_k]$ with $\overline{C}$ a sequence of
      contexts built on $\Sigma_\pub$, and $R_i \in \T(\Sigma_\pub, \W
      \cup \X)$ is either a variable in $\X$ or a
      $w \not\in \dom(\phi_0)$ or a recipe i.e. a term in
      $\T(\Sigma_\pub,\dom(\phi_0))$. Moreover, assuming that $u \redc
      u$ for any constructor term (even if it contains some variables), we have that $\overline{C}[R_1\phi^+_0\redc, \ldots,
      R_k\phi^+_0\redc] = \vect t_l$ and $\overline{C}[R_1\psi^+_0\redc, \ldots,
      R_k\psi^+_0\redc] = \vect t_r$ where $\phi^+_0$ (resp.
      $\psi^+_0$) is $\phi_0$ (resp. $\psi_0$)
      extended with $w \mapsto u$ for each output $\Out(c,u)$ decorated with
      $w$ preceding the
      conditional in $\fst{B_0}$ (resp. $\snd{B_0}$).
\item[(b)] $\phi_0\sim\psi_0$;
\item[(c$_1$)]
  when $\fn(\ini)\cap\fn(\res)=\emptyset$ (non-shared case),
  $\rho(a)$ and $\rho(a')$ are associated in $(\ta_0\rho,\psi_0)$
  if, and only if, $a$ and $a'$ are associated in $(\ta_0,\phi_0)$;
    \item[(c$_2$)]
      when $\fn(\ini)\cap\fn(\res)\neq\emptyset$ (shared case),
      $\rho(a)$ and $\rho(a')$ are associated in $(\ta_0\rho,\psi_0)$
      if, and only if, $a$ and $a'$ are associated in $(\ta_0,\phi_0)$ and
      connected in $(\ta,\phi)$.
  \end{itemize}

 \noindent We proceed by induction on the prefix~$\ta_0$ of $\ta$.

\smallskip{}

\noindent\emph{Case $\ta_0$ is empty.} In such a case, 
$\ta_0\rho$
can obviously be executed.
  Condition (b) is trivial since both frames are empty.
  In order to check conditions (c$_1$) and (c$_2$), note that
  association coincides with duality for empty traces. We start by
  establishing condition (c$_1$).
  Being dual simply means being distinct roles,
  hence one obviously has that $\rho(a)$ and $\rho(a')$ are dual
  if, and only if, $a$ and~$a'$ are. This allows us to conclude for
  condition (c$_1$).  Now, we establish 
 condition (c$_2$). 
  By hypothesis, we have that $a$ and $a'$ are connected in $(\ta,\phi)$
  if, and only if, $\rho(a)$ and $\rho(a')$ are dual, this allows us
  to conclude regarding one direction. Now, if $\rho(a)$ and
  $\rho(a')$ are dual, then $a$ and $a'$ are dual too by definition of
  an agent renaming. Hence, we have the other direction.
    Condition (a) holds  and condition (a') holds since by definition
    of being safe, we have the expected $\overline{R}$.

\smallskip{}

\noindent\emph{Case the prefix of $\ta$ is of the form $\ta_0 .
  \alpha$.}
  The action $\alpha$ may be an input, an output, a conditional (\ie $\taut$ or $\taue$).
  By (a), we know that there is a process in $\q_0$
  which is able to perform an action of the same nature.
We consider separately the case where $\alpha$ is an input, an output or 
  a conditional. In those cases
  $\alpha$ is necessarily annotated, say by $a$,
  and has been produced by a process annotated $a$ in~$\p_0$.
  By induction hypothesis we have $(\p_0;\phi_0)$ and $(\q_0;\psi_0)$
  and the following executions
  $$
  \K(\ta)\lrstep{\ta_0}(\p_0;\phi_0) \lrstep{\alpha\annot{a}}(\p_0';\phi_0')
  \text{ and }
  \K(\ta)\rho\lrstep{\ta_0\rho}(\q_0;\psi_0)
  $$
  satisfying all our invariants.
  Note that one has  $(\ta_0 .
  \alpha\annot{a})\rho=(\ta_0\rho).\alpha\annot{\rho(a)}$.
  Moreover,
  we have that $B \lrstep{\stefdiff[\ta_0,\ta_0\rho]} B_0$ with $\fst{B_0} =
  (\p_0;\phi_0)$ and $\snd{B_0} = (\q_0;\psi_0)$.
  Now, we have to prove that there exists $B'_0$ such that $B_0
  \lrstep{\stefdiff[\alpha,\alpha\rho]} B'_0$ with $\fst{B'_0} =
  (\p'_0;\phi'_0)$.

  \paragraph{\underline{Case where $\alpha$ is an output.}}
  We immediately
  have that $\q_0$ can perform $\alpha\annot{\rho(a)}$,
  on the same channel and with the same handle.
  We now have to check our invariants
  for $\ta_0 . \alpha\annot{a}$. Let $\Out(c,w)\annot{a}$ be $\alpha$.
    Conditions (a) is obviously preserved. Regarding condition (a'),
    it is easy to see that the result holds. The term that was added
    in $\phi_0^+$ (resp. $\psi^+_0$) is now present in $\phi'_0$
    (resp. $\psi'_0$).  

  Conditions (c$_1$) and (c$_2$) follow from the fact that association
  is not affected by the execution of an output:
  $\rho(a)$ and $\rho(a')$ are associated
  in $(\ta_0 . \alpha\annot{a})\rho$ 
  if, and only if,
  they are associated in $\ta_0\rho$,
  and similarly without $\rho$.
  Finally, we shall prove (b): $\phi_0'\sim\psi_0'$
  where $\phi_0'$ (resp. $\psi_0'$) is the resulting frame
  after the action $\alpha\annot{a}$ (resp. $\alpha\annot{a\rho}$).
  Applying \Cref{prop:gc-pm} item (4) on the execution before renaming
  and \Cref{prop:gc-pm:var} on the execution after renaming, one obtains
  $$
  \K(\ta_0.\alpha\annot{a})\lrstep{\ta_0.\alpha\annot{a}}(\p_0'';\phi_0')
  \text{ and }
  \K(\ta_0.\alpha\annot{a})\rho\lrstep{(\ta_0\rho).\alpha\annot{a\rho}}(\q_0';\psi_0').
  $$
  We finally conclude $\phi_0'\sim\psi_0'$
  using \Cref{prop:relate-rho} on the execution on the right
  and then \Cref{prop:invariant-idealisation}.

  \paragraph{\underline{Case where $\alpha$ is a conditional.}}
  We first need to make sure
  that the outcome of the test is the same for $a$ and $a\rho$.
  We let $\tau_x$ (resp. $\tau_y$) be the action produced
  by evaluating the conditional of~$a$ (resp. $a\rho$) and shall
  prove that $\tau_x=\tau_y$.
    We distinguish two cases depending on whether the conditional has
    a label (\ie has been identified as safe) or not.
  \begin{itemize}
  \item
If the conditional has a label $\overline{R}$, since this
  conditional is now at toplevel,  we know that $\phi^+_0 = \phi_0$
  and $\psi^+_0 = \psi_0$, and $\overline{R}=\overline{C}[R_1, \ldots,
  R_k]$ only contains variables from $\dom(\phi_0) = \dom(\psi_0)$. On the left, we have that the
  conditional will be evaluated to true iff $\overline{t}_l \redc$ is
  a message, i.e. iff $\overline{C}[R_1 \phi^+_0\redc, \ldots,
  R_k \phi^+_0\redc]\redc$ is a message, i.e. iff $\overline{C}[R_1, \ldots,
  R_k ]\phi_0\redc$ is a message and similarly on the right. Since
  $\phi_0 \sim \psi_0$, this allows us to conclude that the two
  conditionals have the same outcome.

  \item
    If the conditional is unsafe, we make use of
    Proposition~\ref{prop:then-iff-assoc} to show that
    the outcome of the conditional is the same on both sides.
    First, we deduce the following executions from
    \Cref{prop:gc-pm} item (4) applied on the execution before renaming
    and \Cref{prop:gc-pm:var} applied on the execution after renaming:
    $$
    \K(\ta_0.\tau_x\annot{a})\lrstep{\ta_0.\tau_x\annot{a}}(\p_0'';\phi_0)
    \text{ and }
    \K(\ta_0.\tau_y\annot{a})\rho\lrstep{(\ta_0\rho).\tau_y\annot{a\rho}}(\q_0';\psi_0).
    $$
    To infer $\tau_x=\tau_y$ from \Cref{prop:then-iff-assoc},
    it remains to prove that 
    $a$ and $a'$ are associated in $(\ta_0,\phi_0)$ if, and only if,
    $\rho(a)$ and $\rho(a')$ are associated in $(\ta_0\rho,\psi_0)$.
 When $\fn(\ini)\cap\fn(\res)=\emptyset$ (non-shared case),
    this is given by the invariant (c$_1$). Otherwise, when
    $\fn(\ini)\cap\fn(\res)\neq\emptyset$ 
(shared case), 
(c$_2$) gives us that $\rho(a)$ and $\rho(a')$ are associated in
$(\ta_0\rho,\psi_0)$ if, and only if, $a$ and $a'$ are associated in
$(\ta_0,\phi_0)$ and connected in $(\ta,\phi)$. Therefore, to
conclude, it is actually sufficient to show that if $a$ and $a'$ are associated in
$(\ta_0,\phi_0)$ then $\rho(a)$ and $\rho(a')$ are associated in
$(\ta_0\rho,\psi_0)$.
Since $a$
and $a'$ are associated in $(\ta_0,\phi_0)$, 
thanks to    Proposition~\ref{prop:then-iff-assoc},  we know that the outcome of the test
    will be positive (\ie $\tau_x=\taut$),
    and thus $a$ and $a'$ are connected in $(\ta_0.\tau_{\taut},\phi_0)$, and
    therefore $a$ and $a'$ are also connected in $(\ta,\phi)$. Thanks
    to (c$_2$) we have that $\rho(a)$ and $\rho(a')$ are associated in
      $(\ta_0\rho,\psi_0)$, and we are done.
 \end{itemize}
  After the execution of this conditional producing $\tau_x=\tau_y$,
  condition (a) and (a') obviously still hold since $\tau_x=\tau_y$
  implying that both agents go to the same branch of the conditional.
  Invariant (b) is trivial since frames have not changed.
  Conditions (c$_1$) and (c$_2$) are preserved
  because the association between~$a$ and~$a'$ is preserved if, and only if,
  the outcome of the test is positive,
  which is the same before and after the renaming.

  \paragraph{\underline{Case where $\alpha = \In(c,R_{in})$ is an input.}}
  We immediately have that $\q_0$ can perform $\alpha\annot{\rho(a)}$
  on the same channel and with the same recipe $R_{in}$ (since $\dom(\phi_0)=\dom(\psi_0)$
  follows from $\phi_0\sim\psi_0$).
   Conditions (a) and (b) are obviously preserved.
    Let us establish Condition (a').
We consider a bi-conditional $\Let \; \vect z = \stefdiff[\vect t_l, \vect t_r] \;\In \; B_P \; \Else \; B_Q$
      labeled with $\overline{R}$ before executing $\alpha$. Thus, we
      know that $(a')$ holds, i.e. $\overline{R} = \overline{C}[R_1,
      \ldots, R_k]$ such that $\overline{C}[R_1\phi^+_0\redc, \ldots,
      R_k\phi^+_0\redc] = \vect t_l$ and $\overline{C}[R_1\psi^+_0\redc, \ldots,
      R_k\psi^+_0\redc] = \vect t_r$.
After executing $\alpha$, either this conditional is kept unchanged
and the result trivially holds, or $\vect t_l$ becomes $\vect
t_l\{x\mapsto R_{in}\phi_0\redc\}$ and we have that the label of this
conditional is now $\overline{R}' = \overline{R} \{x \mapsto R_{in}\}$,
and we have that $R_{in}\phi_0\redc$ (and thus $R_{in}\psi_0\redc$) is
a message.
To conclude, it remains to show that $\overline{C}[R_1\{x \mapsto
R_{in}\}\phi^+_0\redc, \ldots, R_k\{x \mapsto R_{in}\}\phi^+_0\redc] =
\vect t_l\{x \mapsto R_{in}\phi_0\redc\}$ (and similarly for $\psi_0$).
Either $R_i = x$ and we have that $R_i\{x\mapsto R_{in}\}\phi^+_0\redc
= R_{in}\phi_0\redc = (R_i\phi^+_0\redc) \{x \mapsto R_{in}\phi_0\redc\}$. 
Otherwise, $R_i$ does not contain $x$, and we
have that $R_i\{x\mapsto R_{in}\}\phi^+_0\redc = R_i\phi^+_0\redc =
(R_i\phi^+_0\redc) \{x \mapsto R_{in}\phi_0\redc\}$.  

Thus, we have that
$\overline{C}[R_1\{x \mapsto
R_{in}\}\phi^+_0\redc, \ldots, R_k\{x \mapsto R_{in}\}\phi^+_0\redc]  = 
\overline{C}[(R_1\phi^+_0\redc)\{x \mapsto R_{in}\phi_0\}, \ldots,
(R_k\phi^+_0\redc)\{x \mapsto R_{in}\phi_0\}]
= 
\overline{C}[R_1\phi^+_0\redc, \ldots,
R_1\phi^+_0\redc]\{x \mapsto R_{in}\phi_0\redc\}
= \vect t_l \{x \mapsto R_{in}\phi_0\redc\}$.
This allows us to conclude.

  Conditions (c$_1$) and (c$_2$) are preserved because honest interactions
  are preserved by the renaming, since $\phi_0\sim\psi_0$ by 
  invariant (b).
  We only detail one direction of (c$_1$), the other cases being similar.
  Assume that $\rho(a)$ and $\rho(a')$ are associated
  in $((\ta_0.\alpha\annot{a})\rho,\psi_0)$.   
  The renamed agents $\rho(a)$ and $\rho(a')$ are also associated in $(\ta_0\rho,\phi')$,
  thus $a$ and $a'$ are associated in $(\ta_0,\phi')$.
  Now, because $\alpha$ did not break the association of
  $\rho(a)$ and $\rho(a')$ in $(\ta_0\rho,\psi_0)$,
  it must be that the input message in $\alpha = \In(c,R_{in})$
  corresponds to the last output of $\rho(a')$ in $\ta_0\rho$.
  Formally, if that last output corresponds to the handle $w$ in
  $\psi_0$, we have $R_{in}\psi_0\redc \theo w\psi_0$.
  But, because $\phi_0 \sim \psi_0$ by 
  invariant (b), 
  we then also have $M\phi_0\redc \theo w\phi_0$ and
  the association of $a$ and $a'$ in $(\ta_0,\phi_0)$ carries
  over to $(\ta_0.\alpha\annot{a},\phi_0)$.
\end{proof}


\section{Final proof}

Thanks to \Cref{lem:executable-renaming}, we can transform any execution of
$\pMa$ into an indistinguishable execution of~$\pS$,
provided that an appropriate renaming of annotations exists.
In order to prove that such a renaming exists in \Cref{prop:single-session-renaming},
we show below that in the shared case, agents cannot be connected
multiple times.

\begin{proposition}
\label{prop:wa-doubleCo}
  Assume that $\Pi$ satisfies item (ii) of well-authentication
  and that ${\fn(\ini)\cap\fn(\res)}\neq\emptyset$ (shared case).
  Consider a well-formed annotated trace $\ta$ and an execution
  $ \K(\ta) \lrstep{\ta} (\p;\phi)$.
  Let $a$, $a_1$, and $a_2$ be three annotations such that
  $a$ and $a_1$ (resp. $a$ and $a_2$) are connected in $(\ta,\phi)$, 
then we have that
  $a_1=a_2$.
\end{proposition}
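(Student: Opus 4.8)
The plan is to argue by contradiction: assuming $a_1\neq a_2$, I will use an unsafe conditional to force $a$ to have a \emph{unique} associated partner, and then pin down $a_1=a_2$. Two auxiliary facts about associations do most of the work. First, since frames only grow along an execution and the recipes occurring in a prefix use only handles already bound at that point, testing honesty (Definition~\ref{def:honest}) against the final frame $\phi$ or against any later/earlier frame extending the one reached at the end of the prefix gives the same verdict; hence I may evaluate associations against $\phi$ throughout, and I may ignore plain $\tau$ actions and the relation $\upto$, which never affect restricted interactions. Second, \emph{association is preserved under taking prefixes}: if $a$ and $a'$ are associated in $(\ta_0,\phi)$ and $\ta_0'$ is a prefix of $\ta_0$, then they are associated in $(\ta_0',\phi)$. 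Indeed $\restrict{\ta_0'}{a,a'}$ is a prefix of $\restrict{\ta_0}{a,a'}$, and a prefix of an honest trace is honest (an initial segment of an alternating output/input sequence is again of that form, it contains no $\taue$, and the recipe-matching conditions it inherits still hold).

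Next I would unfold the two connections. From $a$ and $a_1$ being connected in $(\ta,\phi)$, fix a prefix of $\ta$ witnessing it, and inside it a specific $\taut$ action of an \emph{unsafe} conditional annotated with some $b_1\in\{a,a_1\}$; write $\hat\ta_1.\taut\annot{b_1}$ for the prefix of $\ta$ ending at that action. Since $a$ and $a_1$ are associated in the witnessing prefix and $\hat\ta_1$ is a prefix of it, prefix-preservation gives that $a$ and $a_1$ are associated in $(\hat\ta_1,\phi)$. Using Proposition~\ref{prop:gc-pm} (item~(4) to restrict $\K(\ta)$ to $\K(\hat\ta_1.\taut\annot{b_1})$, then item~(2) to reflect that execution into $\pMa$) I obtain an execution $\pMa\lrstep{\hat\ta_1^\ast.\taut\annot{b_1}}(\cdot\,;\phi'')$ with $\hat\ta_1^\ast\upto\hat\ta_1$ and the same frame $\phi''$ that $\K(\ta)$ reaches at that point. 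As the conditional is unsafe, item~(ii) of well-authentication (Definition~\ref{condi:auth}) applies to this $\pMa$ execution and tells me that $b_1$ is associated with exactly one agent, which is moreover only associated with $b_1$.

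It then remains to read off the uniqueness. The association produced by item~(ii) lives in $(\hat\ta_1^\ast,\phi'')$, but by the first auxiliary fact (invariance under $\upto$ and under extending $\phi''$ to $\phi$) it coincides with association in $(\hat\ta_1,\phi)$. Since $b_1\in\{a,a_1\}$ and $a,a_1$ are already known to be associated in $(\hat\ta_1,\phi)$, the unique partner of $b_1$ must be the other element of $\{a,a_1\}$; in either case item~(ii) yields that \emph{$a$ is only associated with $a_1$ in $(\hat\ta_1,\phi)$}. Running the same argument for the connection between $a$ and $a_2$ produces a prefix $\hat\ta_2$ of $\ta$ with $a,a_2$ associated in $(\hat\ta_2,\phi)$ and $a$ only associated with $a_2$ in $(\hat\ta_2,\phi)$.

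Finally I would compare the two prefixes. As both $\hat\ta_1$ and $\hat\ta_2$ are prefixes of $\ta$, one is a prefix of the other; assume without loss of generality that $\hat\ta_1$ is a prefix of $\hat\ta_2$. Then $a$ and $a_2$, being associated in $(\hat\ta_2,\phi)$, are also associated in $(\hat\ta_1,\phi)$ by prefix-preservation; but $a$ is only associated with $a_1$ there, so $a_2=a_1$, contradicting $a_1\neq a_2$. The reverse case ($\hat\ta_2$ a prefix of $\hat\ta_1$) is symmetric. The main delicacy, and the step I expect to require the most care, is the faithful transfer of item~(ii): it is stated for $\pMa$ against the frame reached at the $\taut$, so I must justify both the reflection of the $\K(\ta)$-execution into $\pMa$ via Proposition~\ref{prop:gc-pm} and the insensitivity of the association relation to $\tau$-actions and to the later growth of the frame, so that the uniqueness genuinely holds in $(\hat\ta_1,\phi)$.
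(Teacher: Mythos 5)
Your proof is correct, and it rests on exactly the two pillars the paper's own proof uses: prefix-preservation of association (a prefix of an honest restricted interaction is honest, and honesty is insensitive to plain $\tau$ actions and to frame growth, since recipes only mention handles bound within the prefix), and the reflection of $\K(\ta)$-executions into $\pMa$-executions via \Cref{prop:gc-pm} so that condition (ii) of \Cref{condi:auth} can be invoked at an unsafe $\taut$. The organization differs, however. The paper uses a minimality trick: it considers the \emph{first} unsafe conditional performed in $\ta$ by any of the three agents, writing $\ta = \ta'.\taut\annot{x}.\ta''$ with $x\in\{a,a_1,a_2\}$; since both connection witnesses must extend past this action, $a$ is associated with both $a_1$ and $a_2$ already in the common prefix $(\ta',\phi)$, and a \emph{single} application of condition (ii) at $\ta'.\taut\annot{x}$ yields the contradiction outright. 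You instead apply condition (ii) twice --- once at a witnessing unsafe conditional of each connection --- deduce that $a$ is only associated with $a_1$ in $(\hat{\ta}_1,\phi)$ and only with $a_2$ in $(\hat{\ta}_2,\phi)$, and conclude by observing that one of the two prefixes is a prefix of the other. Both routes are sound and of comparable difficulty: the paper's choice of the earliest conditional saves your final comparison step, while your write-up has the merit of making explicit the bookkeeping (reflection into $\pMa$ via items (4) and (2) of \Cref{prop:gc-pm}, insensitivity of association to $\upto$ and to extension of the frame) that the paper's very short proof leaves implicit but genuinely needs.
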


\begin{proof}
  Consider the first unsafe conditional performed in $\ta$ by one of
  the three agents $a$, $a_1$ or $a_2$.
  This conditional exists and must be successful, otherwise the agents would
  not be connected in~$\ta$.
  In other words, we have $\ta = \ta'.\taut[x].\ta''$
  where $x \in \{a,a_1,a_2\}$ and
  $\ta'$ does not contain any unsafe conditional performed by one of
  our three agents.
  Since $a$ is connected with both~$a_1$ and~$a_2$ in $\ta$,
  it is associated with both~$a_1$ and~$a_2$ in $\ta'$.
  This contradicts condition (ii) of well-authentication
  applied to $\ta'.\taut[x]$.
\end{proof}

\begin{proposition} \label{prop:single-session-renaming}
 Let $\Pi$ be a protocol satisfying item (ii) of
 well-authentication, and $\ta$ be a well-formed annotated trace such 
that
  $ \K(\ta) \lrstep{\ta} K$. There exists a renaming of annotations
  $\rho$ such that:
\begin{itemize}
\item $\ta\rho$ is single-session;
\item Moreover, when $\fn(\ini) \cap \fn(\res) \neq \emptyset$ (shared
  case), for any annotations $a$ and $a'$, we have that $a$ and $a'$
  are connected in $(\ta, \phi)$, if, and only if, $\rho(a)$ and
  $\rho(a')$ are dual.
\end{itemize}
\end{proposition}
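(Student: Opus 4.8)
The plan is to build $\rho$ by giving every agent fresh session parameters together with carefully chosen fresh identity parameters, distinguishing the non-shared case $\fn(\ini)\cap\fn(\res)=\emptyset$ from the shared case $\fn(\ini)\cap\fn(\res)\neq\emptyset$. The decisive structural input for the shared case is Proposition~\ref{prop:wa-doubleCo}: since $\Pi$ satisfies item~(ii) of well-authentication, every annotation is connected to at most one other, so on the annotations of $\ta$ the connection relation is a partial matching; moreover connected annotations are associated, hence dual, hence of opposite roles and, in the shared case, carrying equal identity parameters. This matching is exactly what the second bullet asks us to realise as duality.

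\emph{Non-shared case.} Here the second bullet imposes nothing, so it suffices to make $\ta\rho$ single-session. I would pick, for each annotation $a=A(\vect{k}_a,\vect{n}_a)$ occurring in $\ta$, a tuple $\vect{l}_a$ of pairwise fresh names of length $|\vect{k}|$ and a tuple $\vect{m}_a$ of pairwise fresh names matching the arity of the session parameters of role $A$, all chosen globally distinct, and set $\rho(a)=A(\vect{l}_a,\vect{m}_a)$. No two distinct annotations then share any identity or session parameter and no constant of $\vect{\idzero}$ occurs, so $\ta\rho$ is single-session; injectivity and role-preservation are immediate, and the third clause of Definition~\ref{def:renaming} holds vacuously since distinct images never share identity parameters.

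\emph{Shared case.} By Proposition~\ref{prop:wa-doubleCo} the connected pairs $\{a,a'\}$ in $(\ta,\phi)$ form a partial matching of initiator against responder annotations. For each such pair, with $a$ the initiator and $a'$ the responder, I would allocate a \emph{single} tuple $\vect{l}$ of fresh names together with two fresh session tuples $\vect{m}_a,\vect{m}_{a'}$, and set $\rho(a)=I(\vect{l},\vect{m}_a)$ and $\rho(a')=R(\vect{l},\vect{m}_{a'})$; for each unmatched annotation $b=A(\vect{k}_b,\vect{n}_b)$ I would proceed as in the non-shared case with its own fresh identity, all fresh names being globally distinct. By construction the only annotations sharing identity parameters after renaming are the two members of a connected pair, which are dual; hence $\ta\rho$ is single-session and free of $\vect{\idzero}$. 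For the equivalence in the second bullet: if $a,a'$ are connected they lie in a pair of the matching, thus receive a common identity and opposite roles, so $\rho(a),\rho(a')$ are dual; conversely if $\rho(a),\rho(a')$ are dual they share an identity and have opposite roles, and since only matched pairs were given a common identity, $\{a,a'\}$ is connected.

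It remains to check that $\rho$ is a genuine renaming. Injectivity and role-preservation are clear from the disjointness of the fresh names. The third clause of Definition~\ref{def:renaming} is the only delicate point: if $\rho(a_1),\rho(a_2)$ share identity parameters then $\{a_1,a_2\}$ is a connected pair, and connected annotations are associated, hence dual, hence in the shared case have equal original identity parameters $\vect{k}_1=\vect{k}_2$, as required. Finally I would extend $\rho$ to annotations outside $\ta$ using a disjoint reservoir of fresh names, one identity per annotation, which preserves well-formedness of every well-formed sequence; this is routine and immaterial to the sequel, where only $\ta\rho$ is used. The main obstacle is genuinely the shared case: obtaining both directions of the connection/duality equivalence simultaneously while keeping the third clause of Definition~\ref{def:renaming} satisfied rests entirely on Proposition~\ref{prop:wa-doubleCo}, without which an agent could be connected to several others and no consistent identity assignment would exist.
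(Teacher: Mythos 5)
Your proof is correct and follows essentially the same route as the paper's: both invoke \Cref{prop:wa-doubleCo} to turn the connection relation into a partial matching of initiator against responder annotations, assign one shared fresh identity to each connected pair and individual fresh identities to every other annotation, and verify the renaming conditions (injectivity, role-preservation, the third clause of \Cref{def:renaming}) by the same duality argument. The only cosmetic difference is that you also freshen session parameters, whereas the paper keeps them unchanged; this is harmless, since neither \Cref{def:renaming} nor \Cref{lem:executable-renaming} requires session parameters to be preserved.
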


\begin{proof}
  For $\vect k \in \mathrm{id}_I(\ta)\cup\mathrm{id}_R(\ta)$,
  we define $\con(\vect k)$ as follows:
  \begin{itemize}
  \item when $\fn(\ini)\cap\fn(\res)=\emptyset$ (non-shared case),
    we let $\con(\vect k)$ be the empty set.
    \item when $\fn(\ini)\cap\fn(\res)\neq\emptyset$ (shared case),
    we let $\con(\vect k)$ be the set of all
    $(\vect{n_1},\vect{n_2})$ such that $\aini(\vect k,\vect{n_1})$ and
    $\ares(\vect k,\vect{n_2})$ are connected in $(\ta,\phi(K))$;
  \end{itemize}
  Essentially, $\con(\vect k)$ denotes the set of pairs of (dual) sessions
  that the renaming to be defined should keep on the same identity.
  Applying \Cref{prop:wa-doubleCo}, we deduce that
  for any $\vect k \in \mathrm{id}_A(\ta)$
  and $(\vect{n_1},\vect{n_2}),(\vect{n_3},\vect{n_4})\in\con(\vect k)$,
  then either (i) $\vect{n_1}=\vect{n_3}$ and $\vect{n_2}=\vect{n_4}$
  or (ii) $\vect{n_1}\neq \vect{n_3}$ and $\vect{n_2}\neq\vect{n_4}$.

  Next, we assume the existence of a function
  $k^c : \N^*\times\N^*\times^*\N^*\mapsto \N^*$
  that associates to any sequence of names $(\vect{k},\vect{n_1},\vect{n_2})$
  a vector of 
  names of the length of identity parameters of $\Pi$:
  $\vect{k'} = k^c(\vect{k},\vect{n_1},\vect{n_2})$.
  These name vectors are assumed to be all disjoint and not containing
  any name already occurring in the annotations of $\ta$.
  This gives us a mean to pick fresh identity parameters for each combination 
  of $\vect{k},\vect{n_1},\vect{n_2}$ taken from the annotations of~$\ta$.
  We also assume a function
  $k^1$ 
  such that the vectors $k^1(\vect{k},\vect{n_1})$
  are again disjoint and not overlapping with annotations of~$\ta$
  and any $k^c(\vect{k'},\vect{n'_1},\vect{n'_2})$,
  and similarly for $k^2(\vect{k},\vect{n_2})$ which should also not
  overlap with $k^1$ vectors.
  These last two collections of identity parameters will be used to give
  fresh identities to initiator and responder agents, independently.
  We then define $\rho$ as follows:
  $$
  \begin{array}[]{rcll}
    \aini(\vect k,\vect n_1) & \mapsto & 
    \aini(k^c(\vect k,\vect{n_1},\vect{n_2}), \vect{n_1})
    &\text{ if }(\vect{n_1},\vect{n_2})\in\con(\vect k)
    \\
    & \mapsto& 
    \aini(k^1(\vect{k},\vect{n_1}), \vect{n_1})
    &\text{otherwise}
    \\[2mm]
    \ares(\vect k,\vect n_2) & \mapsto & 
    \ares(k^c(\vect k,\vect{n_1},\vect{n_2}), \vect{n_2})
    &\text{ if }(\vect{n_1},\vect{n_2})\in\con(\vect k)
    \\
    & \mapsto& 
    \ares(k^2(\vect{k},\vect{n_2}), \vect{n_2})
    &\text{otherwise}
    \\
  \end{array}
  $$
  We now prove that the renaming $\rho$ defined above satisfies all requirements.

  \paragraph*{The mapping $\rho$ is a renaming.}
  First, for any well-formed $\ta'$, the fact that $\ta'\rho$ is well-formed follows from the following:
  (i) session names are not modified and
  (ii) identity names are all pairwise distinct and never intersect except
  for agents $\rho(\aini(\vect{k},\vect{n_1}))$ and
  $\rho(\ares(\vect{k},\vect{n_2}))$ such that
  $(\vect{n_1},\vect{n_2})\in\con(\vect k)$ but there cannot be a third agent sharing the
  same identity names according to the result obtained above.
  The mapping $\rho$  is obviously role-preserving.
  Finally, if $\rho(a)$ and $\rho(a')$ share the same identity parameters then
  (a) $\fn(\ini)\cap\fn(\res)\neq\emptyset$ and
  (b) $a$ and $a'$
  are connected in $(\ta,\phi)$ and are thus dual implying that $a$ and $a'$ share the same
  identity parameters as well.
  
  \paragraph*{The renaming $\rho$ is single-session.} First, $\idzero$
  never occurs in the image of $\rho$. Second, 
all agents are mapped to agents having fresh, distinct identity parameters
  except for agents $a,a'$ that were connected in $(\ta,\phi$).
  However, as already discussed,
  in such a case, there is no third agent sharing those identity parameters
  and $a$ and $a'$ are necessarily dual.

\smallskip{}

  To conclude, we shall prove that, in the shared case,
  for any annotations $a$, $a'$, it holds that
  $a$ and $a'$ are connected in $(\ta,\phi)$, if, and only if,
  $\rho(a)$ and $\rho(a')$ are dual. This is actually a direct
  consequence of the definition of the renaming~$\rho$.
\end{proof}

\bigskip{}

We are now able to prove our main theorem.
\begin{proof}[{\bf Proof of \Cref{theo:main}}]
We have that $\pS \sqsubseteq \pM \sqsubseteq \pMa$, and we have to
establish that 
${\pMa \sqsubseteq \pS}$.
Consider an execution $\pMa \lrstep{\ta} (\p;\phi)$.
First, thanks to Proposition~\ref{prop:gc-pm}, there is an annotated
trace $\ta'\upto\ta$ such that 
$\K(\ta)\lrstep{\ta'}(\p';\phi)$.
Let $\rho$ be the renaming obtained in
Proposition~\ref{prop:single-session-renaming} for~$\ta'$.
By Lemma~\ref{lem:executable-renaming}, we have that 
$\K(\ta)\rho \lrstep{\ta'\rho} (\q;\phi_\rho)$ for some frame
$\phi_\rho$ such that $\phi_\rho\sim\phi$.
We then deduce from \Cref{prop:relate-rho} that 
$\K(\ta\rho) \LRstep{\ta'\rho} (\q';\phi_\rho)$, and thus
$\K(\ta\rho) \LRstep{\ta\rho} (Q';\phi_\rho)$.
Since $\ta\rho$ is single-session,
Proposition~\ref{prop:gc-pm} implies that 
$(\pS;\emptyset) \LRstep{\ta\rho} (\q'';\phi_\rho)$, and this allows
us to conclude.
\end{proof}

\end{document}